%

%\documentclass[1pt,a4paper]{article}
%THIS IS THE SAME AS dp8mar13.tex, except that the table of contents command
%percented out and other style file
%\documentclass[12pt]{article}
%\usepackage{subeqn}
\documentclass[12pt]{amsart}

\usepackage{geometry}

\newgeometry{
    top=0.5in,
    bottom=1.5in,
    outer=0.0in,
    inner=2in,
left=2.61cm,
right=2.61cm,
asymmetric
}
\usepackage[english]{babel}
%\parindent=0.pt
\usepackage{amssymb}
\usepackage{amsfonts}
\usepackage{amsmath}
\usepackage{amsthm}
\usepackage{bm}
\usepackage{latexsym}
\usepackage{epsfig}
\usepackage{hyperref}
\usepackage{mathtools}
%\usepackage{graphicx}
%\usepackage{showkeys}
%\usepackage[all]{hypcap}
%\newenvironment{proof}[1][Proof]{\begin{trivlist}
%\item[\hskip \labelsep {\bfseries #1}]}{\end{trivlist}}
%\phantomsection
%\label{the_label}
\numberwithin{equation}{section}
%\usepackage{euscript} H. took out

%\phantomsection
%Messes up texing and display of sums big parrens etc
%\usepackage{amsthm}
%\usepackage{hyperref} H. took out
%\usepackage{showkeys}
%\usepackage{showtags} H. took out
%%% End %%%%%%%%
%\setcounter{tocdepth}{3}
%\thispagestyle{empty}
%\textwidth=14.5cm
% \hoffset=-1.25cm
%\textheight=22cm
% \voffset=-1.25cm
\newtheorem{theorem}{Theorem}[section]
\newtheorem{cor}[theorem]{Corollary}
\newtheorem{lemma}[theorem]{Lemma}
\newtheorem{proposition}[theorem]{Proposition}
\newtheorem{definition}[theorem]{Definition}

\newtheorem{remark}[theorem]{Remark}

\newtheorem{example}[theorem]{Example}
\newtheorem{question}[theorem]{Question}

\numberwithin{equation}{section}
\def\beq{\begin{equation}}
\def\eeq{\end{equation}}
\def\ben{\begin{enumerate}}
\def\een{\end{enumerate}}
% DEFNS -----------------------

\def\cB{{\mathcal B}}
\def\cC{{\mathcal C}}

\def\fm{{\mathfrak m}}

\def\fa{{\mathfrak a}}
\def\fb{{\mathfrak b}}
\def\cF{{\mathcal F}}

\def\cH{{\mathcal H}}
\def\cJ{{\mathcal J}}
\def\cL{{\mathcal L}}

\def\cQ{{\mathcal Q}}
\def\cN{{\mathcal N}}
\def\cM{{\mathcal M}}
\def\cP{{\mathcal P}}

\def\cS{{\mathcal S}}
\def\cT{{\mathcal T}}

\def\cV{{\mathcal V}}
\def\cW{{\mathcal W}}

\def\cW{\mathcal W}

\newcommand{\RR}{{\mathbb R}}
\newcommand{\PP}{{\mathbb P}}
\newcommand{\CC}{{\mathbb C}}

\newcommand{\ZZ}{{\mathbb Z}}

\newcommand{\NN}{{\mathbb N}}
\newcommand{\EE}{{\mathbb E}}
\newcommand{\lam}{{\lambda}}

\newcommand{\ntn}{n\times n}

\newcommand{\wt}{\widetilde}
\newcommand{\wh}{\widehat}
\newcommand{\ol}{\overline}
\newcommand{\ul}{\underline}

\newcommand{\fd}{\mathfrak{d}}

\usepackage{color}
\usepackage[normalem]{ulem}
\begin{document}

%+Title
\title{White Noise Space 
Analysis and Multiplicative Change of Measures}
\author[D. Alpay]{Daniel Alpay}
\address{(DA) Schmid College of Science and Technology\\
Chapman University,
One University Drive
Orange, California 92866,
USA.}
\email{alpay@chapman.edu}
\author[P.~Jorgensen]{Palle Jorgensen}
\address{(PJ) Department of Mathematics\\
14 MLH, The University of Iowa, 
Iowa City, Iowa 52242-
1419, USA.}
\email{jorgen@math.uiowa.edu}
\author[M. Porat]{Motke Porat}
\address{(MP) Department of Mathematics\\
Ben-Gurion University of the Negev\\
P.O. Box 653,
Beer-Sheva 84105\\ Israel}
\email{motpor@gmail.com}

\date{\today}
%-Title

%+Abstract
\begin{abstract}
In this paper we display a family of Gaussian processes, with explicit formulas and transforms. This is presented with the use of duality tools in such a way that the corresponding path-space measures are mutually singular. We make use of a corresponding family of representations of the canonical commutation relations (CCR) in an infinite number of degrees of freedom.

A key feature of our construction is explicit formulas for associated transforms; these are infinite-dimensional analogues of Fourier transforms. Our framework is that of Gaussian Hilbert spaces, reproducing kernel Hilbert spaces, and Fock spaces. The latter forms the setting for our CCR representations. We further show, with the use of representation theory, and infinite-dimensional analysis, that our pairwise inequivalent probability spaces (for the Gaussian processes) correspond in an explicit manner to pairwise disjoint CCR representations.
\end{abstract}
\subjclass[2010]{46T12, 47L50,
47L60, 47S50, 22E66, 58J65, 30C40, 60H05, 60H40, 81P20}
\keywords{Gaussian processes, canonical commutation relations, 
infinite-dimensional harmonic analysis, 
Gelfand triples, 
reproducing kernel, 
unitary equivalent, 
Fock space, 
intertwining operators, 
measure preserving transformations, 
white noise space}
\maketitle
%-Abstract
%+Contents
\tableofcontents
%-Contents
\section*{Introduction}
Our framework is an infinite-dimensional 
harmonic analysis
and our main aims are fourfold. 
Starting with certain Gelfand triples, 
built over infinite-dimensional Hilbert space, 
we identify and construct, 
associated indexed families of infinite-dimensional probability spaces and corresponding stochastic processes. 
Second, we give conditions for when these stochastic processes are Gaussian. Third, we show that different values 
of the index-variable yield mutually 
singular probability measures. 
Fourthly, we identify associated representations 
of CCRs, and we show that pairwise different index values yield mutually inequivalent representations.  

Our current focus lies at the crossroads of white noise analysis, 
It\^o calculus, and algebraic quantum physics. While there is a very large prior literature in the area, we wish here to especially call attention to the following papers by Albeverio et al.
\cite{ADG,AM16,AM18,ASm,ARY,ASt}.

Some of the basic concepts that we study here can also be found,
in special cases, and in one form or the other, with variants of our main themes, at 
\cite{S56a,S56b,H80,J68}. Our present results go beyond those of earlier papers. This includes (i) our making an explicit and direct link between this wide family of stochastic processes, their harmonic analysis, and the corresponding representations. Also, (ii) our infinite-dimensional harmonic analysis, and infinite-dimensional transform theory, are new. Also new are (iii) our results which identify which of the general stochastic processes are Gaussian.

Our study of Gelfand triples in the framework of Schwartz tempered distributions is motivated in part by Wightman's  framework for quantum fields, i.e., that of (unbounded) operator valued tempered distributions. 
For details, see 
\cite{StWi00}.
\\
\\
We now turn to the study of stochastic processes indexed by an Hilbert space 
$\cH$
from a Gelfand triple
\begin{align*}
%\label{eq:11Dec20a}
\cS(\RR)\hookrightarrow
\cH
\hookrightarrow\cS^\prime(\RR),
\end{align*} 
side to side with the question of when such processes are Gaussian.
\iffalse
The framework for the Gelfand triples
which appear in this paper is obtained 
as follows: 
\fi
Here is a description on how to obtain the Hilbert space 
$\cH$. Consider the real Schwartz space
$\cS:=\cS(\RR)$ and a 
{\em conditional negative definite} (CND) function 
$\cN:\cS\rightarrow\RR$ 
that is continuous w.r.t 
the Fr\'echet topology and satisfies
$\cN(0)=0$.
\iffalse\textcolor[rgb]{0,1,0.501961}{such a function can be obtained from a stationary CND kernel on the Schwartz space that 
is continuous w.r.t the Fr\'echet topology.}
\fi
A general theory of Schoenberg then tells us  that 
$\cS$
can be isometrically embedded into an Hilbert space 
$\cH$, via a  mapping 
$s\mapsto\varphi_s$
such that 
$$\cN(s)=\|\varphi_s\|^2_{\cH}.$$
This idea goes back to works by
Schoenberg 
\cite{Sch37,Sch38a,Sch38b} and
von Neumann
\cite{SvN}, 
where the question 
"when can a metric space be realized 
in a Hilbert space with a
norm" was considered. 
The Hilbert space $\cH$ 
is taken to be the {\em reproducing kernel Hilbert space} (RKHS) 
with {\em reproducing kernel} (RK) 
of the form
\begin{align*}
\varphi_{\cN}(s_1,s_2)=
\big(\cN(s_1)+\cN(s_2)-\cN(s_1-s_2)\big)/2;
\end{align*}
for more details and references on how to obtain the Hilbert space 
$\cH$ from the CND function 
$\cN$, see Section
\ref{sec:CND}.
It further follows naturally from the Bochner--Minlos theorem, 
see e.g., 
\cite{T77}, applied for the positive definite functions 
$\cQ_{\lam}(s)=\exp\{-\lam^2\cN(s)/2\}$ with $\lam>0$, 
the existence of a one-parameter family of Gaussian 
measures 
$\{\PP_{\lam}\}_{\lam>0}$ on 
$\cS^{\prime}$, which satisfies
$$\EE_{\PP_{\lam}}\big[ 
\exp\{iX_s\}
\big]=
\exp\{-\lam^2\cN(s)/2
\},$$
where 
$X_s:\cS^\prime\rightarrow \RR$
is given by the duality of the
spaces 
$\cS$
and
$\cS^\prime$, as 
$X_s(\omega)=\langle\omega,s\rangle$, and 
$\EE_{\PP_{\lam}}$ stands for the expectation w.r.t the measure
$\PP_{\lam}$.  

The stochastic process given by 
$\{X_s\}_{s\in\cS}$ 
plays a main role in our analysis, as one of the purposes in the paper is to explore its behavior w.r.t different measures (from a family of measures described below, cf. equation
(\ref{eq:14Oct19h})); in particular we are interested in the cases where it is Gaussian. We show that the Gaussian property of that process is equivalent to a scaling assumption on the CND function $\cN$, simply saying that 
$$\cN(\alpha s)=\alpha^2\cN(s)$$
for every $s\in\cS$ and $\alpha\in\RR$ (cf. Proposition
\ref{prop:16Oct19a}).
As our settings are quite general, 
one can consider many examples for the CND function $\cN$,  which provide us with different examples of Gaussian processes. In particular, 
one can choose the CND function 
$\cN(s)=\|s\|^2_{L_2(\RR)}$, 
so that we get the standard 
Brownian Motion, or another 
function to get the 
fractional Brownian Motion (cf. 
Examples \ref{ex:13Dec20a}-\ref{ex:4Dec20a})

There are many dichotomy results regarding Gaussian measures, such as  Kakutani's dichotomy theorem on infinite product measures
\cite{Kak48}, which
states that any two Gelfand triple measures on 
$\cS^\prime$ are either equivalent or mutual singular. 
Also in
\cite[Chapter 5]{H80} 
it is proven that any two measures, from a given one-parameter family of measures, 
are mutually singular; this  corresponds to the analysis in this paper with the choice of  
$\cN(s)=\|s\|^2_{L_2(\RR)}$.
Another interesting result in that spirit is presented in
\cite{J68}, where explicit conditions for two measures being equivalent or mutually singular are given and 
the theory of RK Hilbert spaces is heavily used. 
By using the theory in 
\cite{J68}
we obtain the first main result in the paper
(cf. Theorem
\ref{thm:28Aug18d}), that is
establishing the crucial
fact of the measures 
$\{\PP_{\lam}\}_{\lam>0}$
being mutually singular.
To do so, we
explore the covariance function
$\Gamma_{\lam}(s_1,s_2)$ 
of the Gaussian process 
$\{X_s\}_{s\in\cS}$ 
w.r.t the measure 
$\PP_{\lam}$, 
that is 
$$\Gamma_{\lam}(s_1,s_2)=
\lam^2
\big( \cN(s_1+s_2)-\cN(s_1-s_2)\big)/4,$$
and compare between such 
covariance functions for 
distinct values of 
$\lam$.
Next, we make another important connection between the Hilbert space 
$\cH$ and the family of Gaussian measures
$\{\PP_{\lam}\}_{\lam>0}$. For that we let 
$\cH_{\lam}$
be the scaled Hilbert space, that is the space 
$\cH$ with the scaled norm
$\|\cdot\|_{\cH_{\lam}}=\lam\|\cdot\|_{\cH}$,
and then present an explicit transform 
(cf. equation
(\ref{eq:8May20c}))
$$W_{\lam}:\Gamma_{sym}(\cH_{\lam})\rightarrow L_2(\cS^\prime,\PP_{\lam})$$
that is an isometric isomorphism from the symmetric Fock space of 
$\cH_{\lam}$ 
onto 
$L_2(\cS^\prime,\PP_{\lam})$
(cf. Theorem
\ref{thm:22Apr19a}).
By using this transform we will be able to move from the notion of equivalence
of the Gaussian spaces (or more precisely the Gaussian measures) to the equivalence of the representations of the CCR algebra
on the symmetric Fock spaces; see the proof of Corollary 
\ref{cor:23Sep20a}. 
Another explicit formula which fits in naturally in the analysis presented in this paper is a
generalized (infinite dimensional) Fourier transform (cf. equation
(\ref{eq:19Oct19d})) , which takes
$L_2(\cS^\prime,\PP_{\lam})$ 
onto the RKHS that is given by 
the RK
$$\cQ_{\lam}(s_1,s_2)= 
\exp\{-\lam^2\cN(s_1-s_2)/2\}.$$
The last transform is  a generalization of results from
\cite{H80} 
to a wider family of conditional negative definite functions, other than
$\cN(s)=\|s\|^2_{L_2(\RR)}$.

Finally, we present the second main result of the paper (cf. Corollary 
\ref{cor:23Sep20a})
which is an explicit example of 
unitarily inequivalent
representations of the CCR algebra of an infinite dimensional Hilbert space
$\cH$; for details on the CCRs and representations, 
see e.g., 
\cite{S56a,S56b,Sh62}.
This example is built in a natural way from our analysis,
while due to the Stone--von Neumann theorem
 it is impossible to do so in the finite dimensional case. In the proof we use the  
mutual singularity of the measures from the family
$\{\PP_{\lam}\}_{\lam>0}$
and the intertwining operator 
$W_{\lam}$, 
to deduce that any two representations are disjoint, as the corresponding
measures are mutually singular. 
\\
\\
The outline of the paper is as follows: Section 
\ref{sec:background} is devoted to present some preliminaries results and definitions, which include positive definite kernels, conditional negative definite functions, the Bochner--Minlos theorem and (symmetric) Fock spaces.

In Section 
\ref{sec:CND}
we lay down our basic setting of the paper; we present in details some of the results by Schoenberg, connect those with the way we build our associated white noise space from the CND function
$\cN$ we started with, and establish the condition on $\cN$ that is equivalent for the studied stochastic process to be Gaussian (Proposition
\ref{prop:16Oct19a}).
Towards the end of the section we present some examples which fit into our setting.

In Section
\ref{sec:families}
we then build a family of $L_2$ path spaces which correspond to the family of Gaussian measures
$\{\PP_{\lam}\}_{\lam>0}$ obtained from the previous section after the simple change of multiplication by a scalar $\lam>0$. Then in Subsection
\ref{subsec:Jor}
we prove the mutual singularity of the measures (Theorem
\ref{thm:28Aug18d}) and give another example of a one-parameter family of mutually singular measures which involves the Fourier transform (Example
\ref{ex:12Dec20a}).

In Section
\ref{sec:IsomIsom}
we further study the intertwining operators between the symmetric Fock space and the $L_2$ space from the previous section (Theorem
\ref{thm:22Apr19a}), while also define an infinite dimensional generalized Fourier
transform and explore its properties 
(Lemma
\ref{lem:23Sep20a} and Theorem
\ref{thm:24Oct19a}) . Using these intertwining operators and the mutual singularity of the measures from previous sections, we establish the connection to representations of the CCR algebra of the (infinite dimensional Hilbert space) 
$\cH$ 
(Corollary 
\ref{cor:23Sep20a}).

\section{Background and Tools}
\label{sec:background}
We begin with presenting some preliminary background and definitions.
Let $X$ be a non-empty set. A function $\varphi:X\times X\rightarrow\RR$
is called {\em positive definite (PD)} kernel, if 
$\varphi$ is symmetric (i.e., $\varphi(x,y)=\varphi(y,x)$ for all $x,y\in
X$) and 
$$\sum_{j=1}^n
\sum_{k=1}^n
c_j\ol{c_k}\varphi(x_j,x_k)\ge0$$
for all 
$n\in\NN,\,
x_1,\ldots,x_n\in X$
and $c_1,\ldots,c_n\in\CC$.
A function 
$f:X\rightarrow\RR$ is called 
{\em positive definite (PD)}, if the
kernel 
$k_f(x,y)=f(x-y)$ 
is a positive definite kernel, i.e.,
if 
$f$ is odd and
\begin{align}
\label{eq:14Oct19c}
\sum_{j=1}^n\sum_{k=1}^n
c_j\ol{c_k}f(x_j-x_k)\ge0
\end{align}
for all 
$n\in\NN,\,
x_1,\ldots,x_n\in X$
and $c_1,\ldots,c_n\in\CC$.
A function
$\varphi:X\times X\rightarrow\RR$ is called 
{\em conditional negative
definite (CND)} kernel, if
$$\sum_{j,k=1}^n
c_j\ol{c_k}\varphi(x_j,x_k)\le0$$
for all
$n\in\NN,\,
x_1,\ldots,x_n\in X$ and 
$c_1,\ldots,c_n\in\CC$ 
with
$\sum_{j=1}^n c_j=0$.
Similarly to the positive definite definition, a function
$f: X\rightarrow \RR$ 
is called {\em conditional negative 
definite (CND)}, if
$\varphi_f(x,y):=f(x-y)$ is a CND kernel. Our conventions regarding CND functions follow
\cite{BCR}.

The class 
$\cS:=\cS(\RR)\subset L^2(\RR)$ 
is the Schwartz class of functions on 
$\RR$ 
which are rapidly decreasing, smooth and 
$\cC^\infty$.
The dual space of $\cS$ 
is the space 
$\cS^\prime:=\cS^\prime(\RR)$
of tempered distributions ,while any 
$\omega\in\cS^\prime$ 
defines a linear functional  
$\langle\omega,\cdot
\rangle:\cS\mapsto\RR$, 
 also called the action of 
$\omega$ 
on elements from 
$\cS$,
denoted by 
$\omega(s)=\langle\omega,s\rangle$
for 
$s\in\cS$. 
We recall the Bochner--Minlos 
theorem (associated to Gelfand triples with 
$\cS$
and
$\cS^\prime$) as it appears in 
\cite[Appendix A]{HOUZ}; 
for a more general framework of the theorem associated to nuclear Gelfand triples, see
\cite{T77}.
\begin{theorem}[Bochner--Minlos]
\label{thm:26Apr19a}
If 
$g:\cS\rightarrow\CC$ 
is continuous w.r.t the Fr\'echet topology, $g(0)=1$ and $g$ is positive definite,
in the sense of
(\ref{eq:14Oct19c}), 
then there exists a unique probability measure 
$P$ 
on 
$(\cS^\prime,\cB(\cS^\prime))$ 
such that\begin{align*}
\EE_{P}\big[\exp\{i\langle
\cdot,s\rangle\}\big]:=
\int_{\cS^\prime}
\exp\{i\langle\omega,
s\rangle\}dP(\omega)
=g(s),\,
\forall s\in\cS.
\end{align*}
We recall that
$L_2(\cS^\prime,\PP)$
is the
$L_2-$space for the 
white noise process.
\end{theorem}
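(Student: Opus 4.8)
The plan is to reproduce the classical Bochner--Minlos argument in three movements: realize the putative measure on finite-dimensional quotients via the ordinary Bochner theorem, patch these into a finitely additive cylinder measure on $\cS^\prime$, and then invoke the \emph{nuclearity} of $\cS$ to upgrade it to a genuine countably additive Borel measure; uniqueness is then soft. (The result is standard and the cited references \cite{HOUZ, T77} carry out the details, but the scheme is as follows.)

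First, for each finite tuple $s_1,\dots,s_n\in\cS$ set $h_{s_1,\dots,s_n}(t):=g(t_1s_1+\cdots+t_ns_n)$ for $t\in\RR^n$. By hypothesis this is continuous on $\RR^n$, equals $1$ at the origin, and is positive definite on $\RR^n$ in the usual sense: for points $t^{(1)},\dots,t^{(m)}\in\RR^n$ the elements $u_j:=\sum_\ell t^{(j)}_\ell s_\ell$ lie in $\cS$ and $\sum_{j,k}c_j\ol{c_k}\,h_{s_1,\dots,s_n}(t^{(j)}-t^{(k)})=\sum_{j,k}c_j\ol{c_k}\,g(u_j-u_k)\ge 0$ by (\ref{eq:14Oct19c}). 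So the finite-dimensional Bochner theorem yields a unique Borel probability measure $\mu_{s_1,\dots,s_n}$ on $\RR^n$ with Fourier transform $h_{s_1,\dots,s_n}$. Viewing $\RR^n$ as the target of the evaluation map $\pi_{s_1,\dots,s_n}\colon\cS^\prime\to\RR^n$, $\omega\mapsto(\langle\omega,s_1\rangle,\dots,\langle\omega,s_n\rangle)$, and using that one and the same $g$ governs all tuples, the family $\{\mu_{s_1,\dots,s_n}\}$ is consistent under the coordinate projections arising from passage to subtuples; note this consistency also forces, e.g., $\mu_{s_1,s_2,s_1+s_2}$ to sit on the plane $x_3=x_1+x_2$, so linearity is built in. Hence $\nu(\pi_{s_1,\dots,s_n}^{-1}(B)):=\mu_{s_1,\dots,s_n}(B)$ is a well-defined finitely additive probability measure on the algebra of cylinder subsets of $\cS^\prime$, with $\int_{\cS^\prime}e^{i\langle\omega,s\rangle}\,d\nu(\omega)=g(s)$ for all $s\in\cS$.

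The heart of the matter --- and the step I expect to be the real obstacle --- is to show $\nu$ is $\sigma$-additive, hence extends by Carath\'eodory to a Borel probability measure $P$ on $\cB(\cS^\prime)$. Here the special structure of $\cS$ enters: $\cS$ is nuclear, so there is an increasing chain of Hilbertian norms $\|\cdot\|_p$ ($p\ge 0$) generating the Fr\'echet topology, with completions $\cH_p$ and $\cS=\bigcap_p\cH_p$, $\cS^\prime=\bigcup_p\cH_{-p}$, such that for every $p$ there is $q>p$ with the inclusion $\cH_q\hookrightarrow\cH_p$ Hilbert--Schmidt (equivalently $\cH_{-p}\hookrightarrow\cH_{-q}$ Hilbert--Schmidt). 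By Fr\'echet-continuity of $g$ at $0$, given $\eta>0$ choose $p$ and $\delta>0$ with $|1-g(s)|<\eta$ whenever $\|s\|_p<\delta$, then choose $q>p$ with $\cH_q\hookrightarrow\cH_p$ Hilbert--Schmidt. The Minlos--Sazonov estimate --- obtained by integrating $1-\mathrm{Re}\,g$ against centered Gaussians supported on the span of finitely many orthonormal vectors of $\cH_q$ and passing to the limit --- produces a bound of the shape $\nu(\{\omega:\|\omega\|_{-q}>R\})\le C\eta+CR^{-2}\,\|\iota_{q,p}\|_{HS}^2$ for all $R>0$, where $\iota_{q,p}\colon\cH_q\to\cH_p$ is the inclusion. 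Picking a further $q^\prime>q$ with $\cH_{q^\prime}\hookrightarrow\cH_q$ Hilbert--Schmidt, closed balls of $\cH_{-q}$ have compact closure in $\cH_{-q^\prime}$, so the displayed bound makes $\nu$ tight as a cylinder measure on the separable Hilbert space $\cH_{-q^\prime}\subset\cS^\prime$; a tight cylinder measure on a separable Hilbert space is $\sigma$-additive, which gives the desired $P$, and its characteristic functional is $g$ by construction.

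Finally, uniqueness: if $P_1,P_2$ are Borel probability measures on $\cS^\prime$ with $\EE_{P_i}[e^{i\langle\cdot,s\rangle}]=g(s)$ for all $s\in\cS$, then for every finite tuple $s_1,\dots,s_n$ the pushforwards $(\pi_{s_1,\dots,s_n})_*P_1$ and $(\pi_{s_1,\dots,s_n})_*P_2$ on $\RR^n$ have the same Fourier transform, hence coincide by uniqueness of the finite-dimensional Fourier transform. Thus $P_1$ and $P_2$ agree on all cylinder sets, which form a $\pi$-system generating $\cB(\cS^\prime)$; by Dynkin's $\pi$--$\lambda$ theorem, $P_1=P_2$.
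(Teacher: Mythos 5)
The paper does not prove this statement: it is quoted as a classical background tool, with the proof deferred to the cited references (\cite{HOUZ}, Appendix A, and \cite{T77}), so there is no in-paper argument to compare against. Your outline is the standard Minlos proof and is essentially correct: finite-dimensional Bochner applied to $t\mapsto g(t_1s_1+\cdots+t_ns_n)$, consistency giving a finitely additive cylinder measure with characteristic functional $g$, the nuclearity/Hilbert--Schmidt chain together with the Minlos--Sazonov estimate and Prokhorov-type tightness to obtain $\sigma$-additivity, and uniqueness via the $\pi$--$\lambda$ theorem. The only points you pass over silently, and which the references handle, are (i) the identification of the $\sigma$-algebra generated by the cylinder sets with $\cB(\cS^\prime)$ (needed both for the extension to live on $\cB(\cS^\prime)$ and for your uniqueness step), and (ii) the precise statement behind ``a tight cylinder measure on a separable Hilbert space is $\sigma$-additive,'' which is Prokhorov's extension theorem for cylinder measures; neither is a gap in substance, just in the level of detail, consistent with your own framing of the argument as a scheme.
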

For every 
$s\in\cS$, 
we define the random variable
$X_s$ on 
$\cS^\prime$ 
via the duality of 
$\cS$
and
$\cS^\prime$, that is by
$X_s(\cdot)=\langle
\cdot,s\rangle$, i.e.,
\begin{align*}
X_s(\omega)=\langle \omega,s\rangle
=\omega(s)
\end{align*} 
for every 
$\omega\in\cS^\prime$. 
One of the main studied objects 
in this paper is stochastic processes
$\{X_s\}_{s\in\cS}$ 
of that form and in particular the 
cases where those processes are Gaussian, 
while noticing that using the 
Bochner--Minlos theorem, different 
positive definite functions  
produce different probability measures on 
$\cS^\prime$ 
and hence different processes.
\begin{remark}
We use the framework of Gelfand triples 
which consist of an Hilbert space together with 
$\cS$ 
and 
$\cS^\prime$, 
see 
\cite{Sch64}. 
There are other works in infinite 
dimensional analysis, which use different 
approaches; one of those is to use 
$\cB$ and 
$\cB^\prime$ 
instead of
$\cS$ 
and 
$\cS^\prime$,
where $\cB$ 
is a Banach space and 
$\cB^\prime$ 
is its dual space,  see 
\cite{Gr67,Gr70}.
\end{remark}
Gaussian processes are one of the main objects 
in this paper, thereby we recall the 
explicit definition of a stochastic 
process being Gaussian. A stochastic process
$\{Y_t\}_{t\in I}$ 
is called a 
{\em Gaussian process (w.r.t a measure 
$\PP$)}, 
if for every 
$t_1,\ldots,t_n\in I$, 
the random vector
$Y_{t_1,\ldots,t_n}
=(Y_{t_1},\ldots,Y_{t_n})$ 
is a multivariate Gaussian
random variable; sometimes it is called a 
\textit{jointly} 
Gaussian process. 
This is equivalent to say that for every 
$\alpha_1,\ldots,\alpha_n\in\RR$
and 
$t_1,\ldots,t_n\in I$, 
the random variable 
$\alpha_1Y_{t_1}+\ldots+\alpha_n
Y_{t_n}$ 
has a univariate Gaussian distribution.
Another equivalent definition of 
$\{Y_t\}_{t\in I}$ 
being a Gaussian process (w.r.t 
$\PP$) 
--- using the characteristic function 
$\vartheta_{X_{t_1},\ldots,X_{t_n}}$ --- 
is that for every 
$t_1,\ldots,t_n\in I$, 
there exist
$\sigma_{j\ell},\mu_{\ell}\in\RR$ 
with 
$\sigma_{jj}>0$, 
such that
\begin{align}
\label{eq:22Oct19a}
\vartheta_{X_{t_1},\ldots,X_{t_n}}
(\alpha_1,\ldots,\alpha_n):=
\EE_{\PP}\Big[
\exp\Big\{i\sum_{\ell=1}^n
\alpha_{\ell}Y_{t_{\ell}}
\Big\}
\Big]
=\exp\Big\{i\sum_{\ell=1}^n
\mu_{\ell}\alpha_{\ell}-
2^{-1}\sum_{j,\ell=1}^n
\sigma_{j\ell}\alpha_{\ell}\alpha_j
\Big\},
\end{align}
here $\sigma_{j\ell}$ and $\mu_{\ell}$ can be shown to be the covariances
and means of the variables in the process and the left hand side of 
(\ref{eq:22Oct19a}) is the characteristic function of the random vector
$(Y_{t_1},\ldots,Y_{t_n})$.
\iffalse
\\\\\textbf{On measures.}
Let 
$(\Omega,\cF)$ 
be a measure space. Two measures
$\mu_1$ 
and 
$\mu_2$ 
on 
$(\Omega,\cF)$ 
are called 
\textbf{mutually singular} 
if there exist 
$A,B\in\cF$ 
such that 
$A\cap B=\emptyset,\,
A\cup B=\Omega$ 
and
$\mu_1(E\cap A)
=\mu_2(E\cap B)=0$
for all $E\in\cF$.
In that spirit, an indexed family of measures
$\{\mu_{\alpha}\}_{\alpha\in I}$ is called
\textbf{singular}, if it consists of pairwise mutually singular measures,
i.e., for every $\alpha,\beta\in I$, the measures 
$\mu_{\alpha}$ and
$\mu_{\beta}$ are mutually singular, as long as $\alpha\ne\beta$.
\fi

Recall the definitions of the
Hermite polynomials
\begin{align}
\label{eq:3Dec20a}
h_n(x)=(-1)^n
e^{\frac{1}{2}x^2}\frac{d^n}
{dx^n}\big(
e^{-\frac{1}{2}x^2}\big),
\quad n\ge0
\end{align}
and the Hermite functions
$\xi_n(x)=\pi^{-1/4}
((n-1)!)^{-1/2}
e^{-x^2/2}
h_{n-1}(\sqrt{2}x)$,
for 
$n\ge1$.
The set 
$\{h_n\}_{n\ge1}$ 
is an orthogonal basis for 
$L^2(\RR,\mu_1)$, 
where 
$d\mu_1(x)=(2\pi)^{-1/2}
e^{-x^2/2}dx$,
with 
$$\int_{\RR}h_n(x)h_m(x)d\mu_1(x)=
(2\pi)^{-1/2}
\int_{\RR} h_n(x)h_m(x)
e^{-x^2/2}dx
=n!\delta_{n,m}$$
henceforth the set 
$\big\{
(n!)^{-1/2}h_n
\big\}_{n\ge1}$ 
is an orthonormal basis of 
$L^2(\RR,\mu_1)$.
Finally, we adopt the approach and notations as in
\cite{F89}, to recall the definition of the symmetric Fock space of an Hilbert space. 
For an Hibert (separable) space 
$H$ with an orthonormal basis 
$\{e_j\}_{j\ge1}$, 
let 
$\Gamma(H)$ 
be the full Fock space over 
$H$, 
given by
$$\Gamma(H):=\bigoplus_{k=0}^\infty
H^{\otimes k}=
\CC\oplus H\oplus(H\otimes H)\oplus\ldots,$$
where 
$H^{\otimes k}$
is the 
$k'$th tensor power of 
$H$ 
for 
$k\ge1$ 
and 
$H^{\otimes 0}=\CC$;
the space 
$H^{\otimes k}$ 
has the orthonormal basis 
$\{e_{j_1}\otimes\ldots\otimes e_{j_k}
\}_{j_1,\ldots,j_k\ge1}$ 
and the union of all such bases form a 
basis for 
$\Gamma(H)$. 
We are interested in the (boson) 
\textit{symmetric Fock space over}
$H$, 
that is the subspace 
$\Gamma_{sym}(H)\subseteq \Gamma(H)$  
which consists of all symmetric tensors. 
More precisely, let 
$S$ 
be the symmetrizer  
which projects 
$\Gamma(H)$ 
onto
$\Gamma_{sym}(H)$, 
that is given on 
$H^{\otimes k}$
by
$$S(u_1\otimes\ldots\otimes u_k)
=(k!)^{-1}
\sum_{\sigma\in S_k}u_{\sigma(1)}
\otimes\ldots\otimes u_{\sigma(k)},$$
where $S_k$ is the permutation group of 
$\{1,\ldots,k\}$.
Then, $\Gamma_{sym}(H)$ 
has the orthonormal basis 
$\{E_{\alpha}\}_{\alpha\in\ell_0^{\NN}}$,
where
\begin{align}
\label{eq:5Dec20c}
E_{\alpha}:=
(k!)^{1/2}(\alpha!)^{-1/2}
S(e_1^{\alpha_1}
\otimes e_2^{\alpha_2}
\otimes\ldots),\, 
\alpha=(\alpha_1,\alpha_2,
\ldots)\in\ell_0^{\NN},
\end{align}
with 
$|\alpha|=\sum_{j=1}^\infty
\alpha_j=k$; here 
$\ell_0^{\NN}
=\{\alpha=(\alpha_1,\alpha_2,\ldots)
:\exists k\in\NN\,\forall
j>k, \alpha_j=0\}$
and $\alpha!=
\prod_{j=1}^\infty \alpha_j!$.
\section{Conditional Negative Definite Functions}
\label{sec:CND}
Let
$N:\cS\times\cS\rightarrow\RR$ 
be a real CND kernel, continuous w.r.t the 
Fr\'echet topology, 
such that $N(0,0)=0$.
In this paper we focus on the case where $N$
is \textit{stationary}, in the sense that 
$N(s_1,s_2)$ is a function of 
$s_1-s_2$, i.e., for every 
$s_1,s_2$ and $s_3$ in $\cS$, 
we have 
$N(s_1+s_3,s_2+s_3)=N(s_1,s_2)=N(s_2,s_1)$.
It is easily seen that 
all such CND kernels are characterized as 
$$N(s_1,s_2)=\cN(s_1-s_2),\,
\forall
s_1,s_2\in\cS$$
where
$\cN:\cS\rightarrow \RR$ 
is a CND function that is continuous 
w.r.t the Fr\'echet topology,
satisfying
$\cN(0)=0$.
In the later discussions we begin from the function $\cN$ rather
than $N$.
\iffalse
Notice that the symmetry of $N$ implies that 
\begin{align}
\label{eq:22Oct19c}
\cN(s)=\cN(-s),\,\forall s\in\cS.
\end{align}
\fi

The theory of CND kernels goes back to Schoenberg and von Neumann, where they answered the 
question of when a metric space can be realized in a Hilbert space with a norm; see
\cite{Sch37,Sch38a,Sch38b} and
\cite{SvN}.  
For a better understanding of what 
CND kernels are, 
we give a very wide family of conditional negative definite kernels,
in the spirit of
\cite[Proposition 3.2]{BCR}, 
which corresponds nicely to the full 
answer given by Schoenberg and von Neumann.
\begin{proposition}
\label{prop:23Oct19a}
Let
$X$ be a non-empty set.
For any vector space
$V$, a mapping $\Phi: X\rightarrow V$ 
and a symmetric, 
bilinear positive semi-definite form
$\langle\cdot,\cdot\rangle_V
:V\times V\rightarrow \CC$
(not necessarily inner product), 
the kernel 
$$N(x,y):=\| 
\Phi(x)-\Phi(y)\|_{V}^2$$
is a CND kernel on 
$X\times X$,
where 
$\|\cdot\|_V$ 
is the semi-norm induced from $\langle\cdot,\cdot\rangle_{V}$.
\end{proposition}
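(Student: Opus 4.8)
The plan is to verify the conditional negative definite inequality directly from the definition, using the expansion of the squared seminorm and the constraint $\sum_j c_j = 0$. First I would fix $n \in \NN$, points $x_1,\dots,x_n \in X$, and scalars $c_1,\dots,c_n \in \CC$ with $\sum_{j=1}^n c_j = 0$, and write out
\begin{align*}
\sum_{j,k=1}^n c_j\ol{c_k}\,N(x_j,x_k)
= \sum_{j,k=1}^n c_j\ol{c_k}\,\|\Phi(x_j)-\Phi(x_k)\|_V^2.
\end{align*}
Using bilinearity (and the fact that for a real-valued, symmetric form the relevant cross terms behave as expected, taking real parts where necessary since the $c_j$ are complex), I would expand $\|\Phi(x_j)-\Phi(x_k)\|_V^2 = \langle\Phi(x_j),\Phi(x_j)\rangle_V - \langle\Phi(x_j),\Phi(x_k)\rangle_V - \langle\Phi(x_k),\Phi(x_j)\rangle_V + \langle\Phi(x_k),\Phi(x_k)\rangle_V$.

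Next I would substitute this expansion into the double sum and split it into four pieces. The two "diagonal-type" pieces, namely $\sum_{j,k} c_j\ol{c_k}\langle\Phi(x_j),\Phi(x_j)\rangle_V$ and $\sum_{j,k} c_j\ol{c_k}\langle\Phi(x_k),\Phi(x_k)\rangle_V$, each factor as a product of sums: the first equals $\big(\sum_j c_j \langle\Phi(x_j),\Phi(x_j)\rangle_V\big)\big(\sum_k \ol{c_k}\big) = 0$ because $\sum_k \ol{c_k} = \ol{\sum_k c_k} = 0$, and similarly the second vanishes. What remains is
\begin{align*}
\sum_{j,k=1}^n c_j\ol{c_k}\,N(x_j,x_k)
= -\,\Big\langle \sum_{j=1}^n c_j\Phi(x_j),\ \sum_{k=1}^n c_k\Phi(x_k)\Big\rangle_V
= -\,\Big\| \sum_{j=1}^n c_j\Phi(x_j)\Big\|_V^2 \le 0,
\end{align*}
where in the middle step I use sesquilinearity (conjugate-linear in the second slot) to recombine the two mixed terms, and in the last step the fact that $\langle\cdot,\cdot\rangle_V$ is positive semi-definite, so its induced seminorm is nonnegative. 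Symmetry of $N$ is immediate from $\|\Phi(x)-\Phi(y)\|_V = \|\Phi(y)-\Phi(x)\|_V$, and $N(x,x)=0$ is clear.

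The only genuinely delicate point — and the one I would be most careful about — is the bookkeeping when the form $\langle\cdot,\cdot\rangle_V$ is only assumed real-bilinear and symmetric while the test scalars $c_j$ are complex. One clean way around this is to note that it suffices to treat real $c_j$ first (the complexification adds only real and imaginary parts whose cross-contributions cancel by the same $\sum c_j = 0$ device), or equivalently to observe that the identity $\sum_{j,k} c_j\ol{c_k}N(x_j,x_k) = -\|\sum_j c_j \Phi(x_j)\|_V^2$ holds verbatim once one extends $\langle\cdot,\cdot\rangle_V$ sesquilinearly to the complexification $V_\CC$, which remains positive semi-definite. Apart from this, the argument is entirely formal, so I expect no real obstacle; the proof is essentially the classical polarization trick identifying CND kernels of the form "squared distance" with the negative of a positive semi-definite kernel on the subspace $\{\sum c_j = 0\}$.
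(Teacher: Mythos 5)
Your proof is correct and follows essentially the same route as the paper's: expand the squared seminorm by bilinearity, kill the two diagonal sums using $\sum_j c_j=0$, and recognize the remaining cross term as $-\big\|\sum_j c_j\Phi(x_j)\big\|_V^2\le 0$. In fact you are more careful than the paper on the one delicate point (complex coefficients versus a merely real bilinear form, handled via sesquilinear extension to the complexification), which the paper glosses over.
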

\begin{proof}
Due to the properties of the product $\langle\cdot,\cdot\rangle_V$,
for every
$n\in\NN,\,
x_1,\ldots,x_n\in X,$ and 
$c_1,\ldots,c_n\in\CC$ with $\sum_{j=1}^n
c_j=0$, we have
\begin{multline*}
\sum_{j,k=1}^n c_j\ol{c_k}N(x_j,x_k)=\sum_{j,k=1}^n
c_j\ol{c_k}\|\Phi(x_j)-\Phi(x_k)\|^2
=\sum_{j,k=1}^n
c_j\ol{c_k}\|\Phi(x_j)\|^2_V
\\-2\sum_{j,k=1}^n c_j\ol{c_k}\langle 
\Phi(x_j),\Phi(x_k)\rangle_V
+
\sum_{j,k=1}^n
c_j\ol{c_k}\|\Phi(x_k) \|^2_V
=-\Big\langle 
\sum_{j=1}^n c_j\Phi(x_j),\sum_{k=1}^n
c_k\Phi(x_k)\Big\rangle_V\le0.
\end{multline*}
\end{proof}

Following the theory by Schoenberg
\cite{Sch37,Sch38a,Sch38b}, while some of his results are well summarized in 
\cite[Proposition 3.2]{BCR}, as $N$ is a CND kernel with 
$N(s,s)=0$ for all
$s\in\cS$, there is an embedding of $\cS$ into  an Hilbert space
$\cH$ (which can be thought as a linear subspace of the space of all 
continuous
mappings from $\cS$ to $\RR$), 
i.e., there exists a mapping 
$s\mapsto \varphi_s$ from $\cS$
to
$\cH$ such that
\begin{align}
\label{eq:14Oct19d}
\cN(s_1-s_2)=N(s_1,s_2)
=\|\varphi_{s_1}-\varphi_{s_2}\|^2
=\|\varphi_{s_1-s_2}\|^2_{\cH}  
\end{align}
for any
$s_1,s_2\in\cS$.
We give some details on this building: it is an easy exercise to verify that
the kernel
\begin{align*}
\varphi_{\cN}(s_1,s_2)=2^{-1}
[N(s_1,0)+N(0,s_2)-N(s_1,s_2)]
=2^{-1}
\big[ \cN(s_1)+\cN(s_2)-\cN(s_1-s_2)\big]
\end{align*}
is a positive definite kernel
on $\cS\times\cS$.
Then, for every 
$s\in\cS$ define the mapping 
\begin{align*}
%\label{eq:28Apr20c}
\varphi_{s}:\cS\rightarrow \RR,
\text{ 
by }
\varphi_s(\wt{s})=\varphi_{\cN}(s,\wt{s}),
\,\forall\wt{s}\in\cS
\end{align*} 
and let $\cH_0$ be the linear subspace of $\RR^{\cS}$ (the space of all functions
from $\cS$ to $\RR$) that is generated by
$\{\varphi_s:s\in\cS\}$.
On $\cH_0$ 
we define an inner product by
\begin{align}
\label{eq:4Dec20c}
\Big\langle \sum_{j=1}^n \alpha_j \varphi_{s_j},
\sum_{\ell=1}^n
\beta_{\ell}\varphi_{\wt{s_{\ell}}} 
\Big\rangle:=\sum_{j=1}^n\sum_{\ell=1}^n
\alpha_j
\beta_{\ell}
\varphi_{\cN}(s_j,\wt{s_{\ell}}),
\end{align} 
which makes 
$\cH_0$ a pre-Hilbert space, hence there exists a (unique) continuation of
$\cH_0$ to an Hilbert space $\cH$,
in which $\cH_0$ is dense and we have the equation
$N(s_1,s_2)=\|\varphi_{s_1}-\varphi_{s_2}\|^2_{\cH}$. 
\iffalse
In particular, from the linearity of the mapping $s\mapsto \varphi_s$, we get that
\begin{align}
\label{eq:16Oct19b}
\cN(\alpha s)=\|\varphi_{\alpha s}\|^2_{\cH}=\|\alpha\varphi_s\|^2_{\cH}
=\alpha^2\|\varphi_s\|^2_{\cH}=\alpha^2\cN(s).
\end{align}
\fi
Notice that the Hilbert space $\cH$ 
is the reproducing kernel Hilbert space associated to the positive definite
kernel 
$\varphi_{\cN}(\cdot,\cdot)$ on $\cS\times\cS$. 
The Hilbert space
$\cH$ plays an important role in our analysis which comes up later, especially
when invoking its symmetric Fock space.
\\\\
If $N$ is as above, it follows from
\cite[Theorem 2.2]{BCR}
that 
$$Q(s_1,s_2)=
\exp\{-N(s_1,s_2)/2\}=\exp\{
-\cN(s_1-s_2)/2\}$$ 
is a PD kernel on 
$\cS\times\cS$.
\begin{remark}
\label{rem:4Dec20b}
A great contribution by Schoenberg, which is written and proved nicely in
\cite[Theorem 2.2]{BCR},
actually tells us that 
$N$ 
is a CND kernel if and only if 
$Q_t(s_1,s_2):=\exp\{
-t N(s_1,s_2)\}$
is a PD kernel for every 
$t>0$. 
\end{remark}
Let us then define 
$\cQ:\cS\rightarrow\RR$ by
\begin{align}
\label{eq:14Oct19a}
\cQ(s):=Q(s,0)=\exp\{-\cN(s)/2\},
\,\forall
s\in\cS.
\end{align}
It is easy to see that 
$\cQ(\cdot)$ 
is continuous w.r.t the Fr\'echet topology, 
with
$\cQ(0)=1$
and that 
$\cQ$ 
is a PD function, that is
\begin{multline*}
\sum_{j=1}^n
\sum_{k=1}^n
c_j\ol{c_k}\cQ(s_j-s_k)=\sum_{j=1}^n
\sum_{k=1}^n
c_j\ol{c_k}\exp\{-N(s_j-s_k,0)/2 \}
=\sum_{j=1}^n
\sum_{k=1}^n
c_j\ol{c_k}\exp\{ 
-N(s_j,s_k)/2\}
\\=\sum_{j=1}^n
\sum_{k=1}^n
c_j\ol{c_k}Q(s_j,s_k)\ge0
\end{multline*}
for every $n\in\NN,\,
c_1,\ldots,c_n\in\CC$, and $s_1,\ldots,s_n\in\cS$.
The function $\cQ$  
given in 
(\ref{eq:14Oct19a}) 
satisfies the conditions in Theorem
\ref{thm:26Apr19a}, therefore there exists a unique probability measure
$\PP$
on
$\cS^\prime$ such that
\begin{align}
\label{eq:14Oct19b}
\EE_{\PP}\big[\exp\{iX_s\}\big]
=\cQ(s)=\exp\{-\cN(s)/2\},
\,\forall s\in\cS
\end{align}
where for every 
$s\in\cS$, the random variable
$X_s$ on 
$\cS^\prime$ 
is defined via the duality of 
$\cS$
and
$\cS^\prime$, given by
$X_s(\cdot)=\langle
\cdot,s\rangle$, i.e.,
\begin{align}
\label{eq:22Jul20a}
X_s(\omega)=\langle \omega,s\rangle
=\omega(s)
\end{align} 
for every 
$\omega\in\cS^\prime$ and the notation $\EE_{\PP}$ stands for the expectation
w.r.t $\PP$, i.e., 
$\EE_{\PP}\big[f\big]:=
\int_{\cS^\prime}f(\omega)d\PP(\omega)$
for functions $f$ on the space
$\cS^\prime$.
From 
(\ref{eq:14Oct19d}) it then follows that
\begin{align}
\label{eq:14Oct19e}
\EE_{\PP}\big[\exp\{iX_s\}\big]
=\exp\{ -\|\varphi_s\|^2_{\cH}/2\},\,
\forall s\in\cS.
\end{align}
\iffalse
\begin{remark}
We immediately have 
$\varphi_0=0$, however we have no special knowledge on further properties
of $\varphi$, such as linearity.
We also know that 
$\varphi$ is $1-1$ if and only if $\ker\cN=\{0\}$.
\end{remark}
\begin{remark}
\label{rem:14Oct19f}
As $\cS$ is isometric to a dense subspace of 
$\cH$, that is
$\cH_0$, we use the identification 
$s\leftrightarrow
\varphi_s$ 
and then treat $\cS$
as a subspace of 
$\cH$. 
If we do that, then 
(\ref{eq:14Oct19e})
becomes
\begin{align}
\label{eq:14Oct19g}
\EE_{\PP}\big[ e^{i\langle\cdot,s\rangle}\big]
=\exp\Big\{-\|s\|^2_{\cH}/2
\Big\}.
\end{align} 
Our analysis includes a much bigger family of positive definite functions
on
$\cS$, whereas 
$(\ref{eq:14Oct19g})$
is only one of them which corresponds to a special choice of 
$\cN$ (cf. Example
\ref{ex:4Dec20a}). 
\end{remark}
\fi
A main issue of this paper (cf. Subsection
\ref{subsec:Jor}) lies on the fact that 
$\{X_s\}_{s\in\cS}$
is a 
{\em Gaussian process}, 
hence determined by its mean value and covariance
functions,  heavily depended on the measure that we assign to
$\cS^\prime$. However the stochastic process
$\{X_s\}_{s\in\cS}$
is not always Gaussian. In the following proposition we 
show that the Gaussian property of this process is equivalent to some scaling property of the CND function $\cN$: 
\begin{proposition}
\label{prop:16Oct19a}
Let 
$\cN:\cS\rightarrow\RR$
be a CND function, that is
continuous w.r.t the Fr\'echet topology,
with 
$\cN(0)=0$
and let
$\PP$ 
be the corresponding measure on $\cS^\prime$ that is defined in
$(\ref{eq:14Oct19b})$.
Then the stochastic process 
$\{X_s\}_{s\in\cS}$ is a (centered) Gaussian process (w.r.t 
$\PP$) 
if and only if $\cN$ satisfies
\begin{align}
\label{eq:16Oct19b}
\cN(\alpha s)=\alpha^2\cN(s),\,
\forall 
\alpha\in\RR,s\in\cS.
\end{align}
In that case, 
$\cN(s)$ is the variance of 
$X_s$ (w.r.t $\PP$).
\end{proposition}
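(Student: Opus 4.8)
The plan is to prove both implications using the characteristic functional identity $\EE_{\PP}[\exp\{iX_s\}] = \exp\{-\cN(s)/2\}$ from (\ref{eq:14Oct19b}), which holds for every $s \in \cS$. The key observation is that this single identity already encodes the whole one-dimensional distribution of each $X_s$, and — crucially — the distribution of every finite linear combination $\sum_{\ell} \alpha_\ell X_{s_\ell}$, because by linearity of $X_\cdot$ in its subscript we have $\sum_\ell \alpha_\ell X_{s_\ell} = X_{\sum_\ell \alpha_\ell s_\ell}$ as random variables on $\cS'$ (here I use that $s \mapsto X_s$ is linear, which is immediate from $X_s(\omega) = \langle \omega, s\rangle$ and the linearity of the duality pairing).

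For the backward direction, assume the scaling property (\ref{eq:16Oct19b}). First I would show each $X_s$ is centered Gaussian with variance $\cN(s)$: for $\alpha \in \RR$, the characteristic function of $X_s$ evaluated at $\alpha$ is $\EE_\PP[\exp\{i\alpha X_s\}] = \EE_\PP[\exp\{iX_{\alpha s}\}] = \exp\{-\cN(\alpha s)/2\} = \exp\{-\alpha^2 \cN(s)/2\}$, which is exactly the characteristic function of a centered normal with variance $\cN(s)$ (note $\cN(s) \ge 0$ since $\cN(s) = \|\varphi_s\|_\cH^2$ by (\ref{eq:14Oct19d})). Then to get joint Gaussianity, fix $s_1,\dots,s_n \in \cS$ and $\alpha_1,\dots,\alpha_n \in \RR$; the linear combination $\sum_\ell \alpha_\ell X_{s_\ell} = X_{\sum_\ell \alpha_\ell s_\ell}$ is of the form $X_t$ with $t = \sum_\ell \alpha_\ell s_\ell \in \cS$, hence univariate Gaussian by what was just shown. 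By the characterization recalled just before the proposition (a process is Gaussian iff every finite real linear combination is univariate Gaussian), $\{X_s\}_{s\in\cS}$ is a Gaussian process, and it is centered since each $X_t$ has mean zero.

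For the forward direction, assume $\{X_s\}_{s\in\cS}$ is a (centered) Gaussian process. Then each $X_s$ is a univariate Gaussian; being centered, its characteristic function is $\EE_\PP[\exp\{i\alpha X_s\}] = \exp\{-\alpha^2 \sigma_s^2/2\}$ for some $\sigma_s^2 \ge 0$. Setting $\alpha = 1$ and comparing with (\ref{eq:14Oct19b}) gives $\sigma_s^2 = \cN(s)$, so $X_s$ has variance $\cN(s)$ (this also establishes the final sentence of the proposition). Now for arbitrary $\alpha \in \RR$ and $s \in \cS$, on one hand $\EE_\PP[\exp\{i\alpha X_s\}] = \exp\{-\alpha^2 \cN(s)/2\}$, and on the other hand $\EE_\PP[\exp\{i\alpha X_s\}] = \EE_\PP[\exp\{iX_{\alpha s}\}] = \exp\{-\cN(\alpha s)/2\}$. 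Equating the exponents yields $\cN(\alpha s) = \alpha^2 \cN(s)$, which is (\ref{eq:16Oct19b}).

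The computations here are genuinely routine; the one point that needs care — and the only place where anything could go wrong — is the claim that if the process is merely assumed Gaussian (not a priori centered), then each $X_s$ is centered, i.e. has mean zero. This follows because $X_s$ and $X_{-s} = -X_s$ have the same distribution: indeed $\EE_\PP[\exp\{i\alpha X_{-s}\}] = \exp\{-\cN(-s)/2 \cdot (\text{after rescaling})\}$, and $\cN(-s) = \cN(s)$ since $\cN$ comes from the stationary kernel $N$ with $N(s_1,s_2) = N(s_2,s_1)$; a symmetric Gaussian has mean zero. So I would insert this symmetry argument explicitly rather than leave "centered" as an unjustified hypothesis. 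Everything else is just matching exponents in the Bochner–Minlos identity.
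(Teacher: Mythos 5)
Your proposal is correct and follows essentially the same route as the paper: the linearity of $s\mapsto X_s$ reduces joint Gaussianity to each $X_s$ being a univariate Gaussian, and the scaling condition (\ref{eq:16Oct19b}) is read off by matching the characteristic function $\EE_{\PP}\big[\exp\{i\alpha X_s\}\big]=\exp\{-\cN(\alpha s)/2\}$ against the Gaussian characteristic function. The only difference is in the treatment of centeredness: the paper Taylor-expands (\ref{eq:28Apr20a}) and extracts the vanishing first-order coefficient, whereas you obtain mean zero directly from uniqueness of the characteristic function (backward direction) and from the symmetry $\cN(-s)=\cN(s)$ of the stationary kernel (forward direction), which is a valid and arguably cleaner shortcut.
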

\begin{proof}
First, observe that for every 
$\alpha_1,\ldots,\alpha_n\in\RR,\,
s_1,\ldots,s_n\in\cS$,
and $\omega\in\cS^\prime$, we have
$$\Big(\sum_{\ell=1}^n\alpha_\ell X_{s_\ell}
\Big)(\omega)=\sum_{\ell=1}^n
\alpha_\ell\omega(s_\ell)=
\omega
\Big(\sum_{\ell=1}^n
\alpha_\ell s_\ell\Big)
=X_{\sum_{\ell=1}^n
\alpha_\ell s_\ell}(\omega),$$
i.e., 
$\alpha_1X_{s_1}+\ldots+\alpha_n X_{s_n}=
X_{\alpha_1s_1+\ldots+\alpha_ns_n}$.
Thus $\{X_s\}_{s\in\cS}$ 
is (jointly) Gaussian if and only if 
$X_s$ is a Gaussian random variable
for every $s\in\cS$.
The characteristic function of $X_s$ 
is given by
$\vartheta_{X_s}(\alpha):=\EE_{\PP}\big[
\exp\{i\alpha X_s\} \big],$ 
hence --- as $X_{\alpha s}=\alpha X_s$ ---
$$\vartheta_{X_s}(\alpha)=\EE_{\PP}\big[
\exp\{iX_{\alpha s}\}\big]
=\exp\{-\cN(\alpha s)/2\}.$$
Therefore, 
$X_s$ 
is a Gaussian random variable
with covariance 
$\cN(s)$ if and only if 
$\vartheta_{X_s}(\alpha)=\exp\{
-\alpha^2\cN(s)/2\}$, i.e., if and only if
(\ref{eq:16Oct19b})
holds.

As for the Gaussian process being centered, we need to show that 
$\EE_{\PP}\big[ X_s\big]=0$
for every 
$s\in\cS$.
For every 
$\alpha\in\RR$
and
$s\in\cS$, we have
$\alpha X_s=X_{\alpha s}$, therefore by using 
(\ref{eq:16Oct19b})
we obtain that
\begin{align}
\label{eq:28Apr20a}
\int_{\cS^\prime}\exp
\{i\alpha X_s(\omega)\}
d\PP(\omega)=
\exp\{-\alpha^2\cN(s)/2\},\,
\forall\alpha\in\RR.
\end{align}
By invoking the Taylor expansions  (as functions of $\alpha\in\RR$) of both
sides of
(\ref{eq:28Apr20a}), 
together with the fact that on the right hand side the function is odd, as
$\cN(-s)=(-1)^2\cN(s)=\cN(s)$, 
we get that all the even coefficients must vanish. 
However, the coefficient of 
$\alpha^1$ 
in the expression on the left hand side of
(\ref{eq:28Apr20a})
is equal to 
$i\EE_{\PP}\big[ X_s\big]$, hence $\EE_{\PP}\big[ X_s\big]=0$.
\end{proof}
\iffalse
\begin{remark}
(?) As it is shown in Proposition 
\ref{prop:16Oct19a}, 
it is not automatically that the process 
$\{X_s\}_{s\in\cS}$ 
is Gaussian and it depends on 
$\cN$ via condition 
(\ref{eq:16Oct19b}). 
A simple example that is not Gaussian 
is the Poisson process. See
\cite{SS68,SS70}
where an interesting family of Wiener processes 
(which return independent random variables for disjoint sets) is studied and includes both Gaussian and Poisson processes.
\end{remark}
\fi
We finish this section by showing two well-studied examples, which arise naturally from particular choices 
of $\cN$.
In all of the cases mentioned below,  
we deal with a real CND function, 
that is continuous w.r.t the Fr\'echet 
topology and satisfies condition
(\ref{eq:16Oct19b}). 

To present those examples, one must recall an important family 
of (tempered) measures on 
$\RR$, 
which includes the Lebesgue 
measure, that is defined by
\begin{align}
\label{eq:4Oct19a}
\cM:=
\Big\{\mu:
\mu
\text{ is a positive measures on }\RR,
\text{ such that }
\int_{\mathbb R}
\frac{d\mu(u)}{u^2+1}<\infty\Big\}.
\end{align}
For every 
$\mu\in\cM$ we get a dual pair of path space measures $\PP_{\mu}$
and
$\PP_{\wh{\mu}}$
on
$\cS^\prime$, presented below in Examples 
\ref{ex:13Dec20a}
and
\ref{ex:2Oct20a}; 
one we may think of them as 
an infinite dimensional Fourier 
duality (for path-space measures).
In many examples, e.g. 
in financial math, 
for the corresponding Gaussian process 
$\{X_s\}_{s\in\cS}$ 
as part of a given dual pair,  
we will find that 
$\{X_s\}_{s\in\cS}$ 
will have fat tail 
w.r.t one of the path-space 
measures 
($\PP_{\wh{\mu}}$ 
in our case) in a dual pair; 
as compared to the other. 
In financial math application, 
realization of fat tails is important; see e.g.
\cite{JT19}.
\begin{example}
\label{ex:13Dec20a}
For any measure
$\mu\in\cM$, let 
\begin{align}
\label{eq:24Sep20d}
\cN_{\mu}(s):=
\int_{\RR}|s(u)|^2d\mu(u)=\|s\|^2_{\mu}.
\end{align}
This corresponds to Proposition 
\ref{prop:23Oct19a} 
by choosing
$X=\cS,\,
V=\cS,\,
\Phi:\cS\rightarrow\cS$  
to be the identity mapping and the product
$\langle s_1,s_2\rangle_{\mu}:
=\int_{\RR}
s_1(u)s_2(u)d\mu(u)$.
In that case the measure 
$\PP$, obtained from
(\ref{eq:14Oct19e}), 
is denoted by 
$\PP_{\mu}$ 
and the covariance function of the Gaussian process
$\{X_s\}_{s\in\cS}$ 
(w.r.t $\PP_{\mu}$)
is given by
\begin{equation*}
\EE_{\PP_{\mu}}\big[ 
X_{s_1}X_{s_2}
\big]=
\int_{\mathcal S^\prime}
X_{s_1}(\omega)
X_{s_2}(\omega) d\PP_{\mu}
(w)=\int_{\mathbb R}
s_1(u)
s_2(u)
d\mu(u),\,
\forall s_1,s_2\in\cS.
\end{equation*}
Then the Gaussian process can be re-indexed in 
$\RR_+$ instead of 
$\cS$, by the following rule:
for any
$t_1,t_2>0$, let
$s_1(u)=1_{[0,t_1]}(u)$
and
$s_2(u)=1_{[0,t_2]}(u)$, 
thus the covariance function is given by
\begin{align}
\label{eq:25Dec20a}
\int_{\mathcal S^\prime}
\langle w,1_{[0,t_1]}\rangle\langle w,1_{[0,t_2]}
\rangle
d\PP_{\mu}(w)=
\mu\big([0,\min\{t_1,t_2\}]\big).
\end{align}
See
\cite{AAL10,AAL12,AJL11} 
for more details.
\end{example}
To present the next example, 
we recall the definition of the Fourier transform 
\begin{align}
\label{eq:28Apr20b}
\wh{s}(t)=\int_{-\infty}^{\infty}
e^{-itu}s(u)du,
\quad\forall s\in L_2(\RR).
\end{align}
\begin{example}
\label{ex:2Oct20a}
For any measure
$\mu\in\cM$, let 
\begin{align}
\label{eq:24Sep20q}
\cN_{\wh{\mu}}(s):=
\int_{\RR}|\wh{s}(u)|^2d\mu(u)=\|\wh{s}\|^2_{\mu},
\end{align}
where $\wh{s}$
is the Fourier transform of $s$ (cf.
(\ref{eq:28Apr20b})).
This corresponds to Proposition 
\ref{prop:23Oct19a} 
by choosing
$X=\cS,\,
V=\cS,\,
\Phi:\cS\rightarrow\cS$  
to be the identity mapping and the product
$$\langle s_1,s_2\rangle_{\wh{\mu}}:
=\int_{\RR}
\wh{s_1}(u)\ol{\wh{s_2}(u)}d\mu(u).$$
In that case the measure 
$\PP$ is denoted by 
$\PP_{\wh{\mu}}$ 
and the covariance function of the Gaussian process
$\{X_s\}_{s\in\cS}$ (w.r.t $\PP_{\wh{\mu}}$)
is given by
\begin{equation*}
\EE_{\PP_{\wh{\mu}}}\big[ 
X_{s_1}X_{s_2}\big]=
\int_{\mathcal S^\prime}
X_{s_1}(\omega)
X_{s_2}(\omega) d\PP_{\wh{\mu}}(w)=\int_{\mathbb R}
\widehat{s_1}(u)
\overline{\widehat{s_2}(u)}
d\mu(u),\,
\forall s_1,s_2\in\cS.
\end{equation*}
Then the Gaussian process can be re-indexed in 
$\RR_+$ instead of 
$\cS$, by the following rule:
for any
$t_1,t_2>0$, let
$s_1(u)=1_{[0,t_1]}(u)$
and
$s_2(u)=1_{[0,t_2]}(u)$, 
thus the covariance function is given by
\begin{align}
\label{eq:25Dec20b}
\int_{\mathcal S^\prime}
\langle w,1_{[0,t_1]}\rangle\langle w,1_{[0,t_2]}
\rangle
d\PP_{\wh{\mu}}(w)=
\int_{\mathbb R}
\frac{e^{it_1u}-1}{u}
\frac{e^{-it_2u}-1}{u}d\mu(u);
\end{align}
see
\cite{AJ12,AJS14} for more details. 
\end{example}
We remind the reader that ---
in view of the two examples above --- 
for any 
$\mu\in\cM$,
we obtain two stochastic processes which are both given by the functions
$\{X_s\}_{s\in\cS}$, 
however they are taken to be w.r.t the different path-space measures 
$\PP_{\mu}$
and
$\PP_{\wh{\mu}}$.
Notice that the latter process is the generalized Fourier transform of the first process.
\begin{example}
\label{ex:17Nov20a}
The special case of the fractional Brownian motion ---
which has stationary independent increments but not independent increments ---
is considered, if one takes 
$\mu(u)=|u|^{1-2H}du$
for some
$H\in(0,1)$ 
and then (after re-indexing from 
$\cS$ 
to 
$\RR$ as in Example
\ref{ex:2Oct20a}) 
the covariance function
of the Gaussian process
$\{X_s\}_{s\in\cS}$ (w.r.t $\PP_{\wh{\mu}}$)
is given by 
$|t_1|^{2H}+|t_2|^{2H}-|t_1-t_2|^{2H}$.
For more details, see
\cite{AAL10,AK13}. 
\end{example}
\begin{example}
\label{ex:4Dec20a}
The two functions 
$\cN_{\mu}$ 
and
$\cN_{\wh{\mu}}$ 
coincide when $\mu$
is taken to be the Lebesgue measure (which is clearly in 
$\cM$). 
This will correspond to the 
classical Brownian motion.
For more details, see
\cite{H80}.
\end{example}
\iffalse
\begin{example}[exercise 1.20 in
\cite{BCR}]
Let 
$H$
be a \textbf{pre Hilbert} space, 
with the inner product 
$\langle\cdot,\cdot\rangle_{H}$ 
and the induced norm 
$\|\cdot\|_{H}$. Then the kernel given by the squared distance
$N(h_1,h_2):=\| h_1-h_2\|^2_{H}$ is a CND kernel; moreover it is clearly
a function of the difference $h_1-h_2$, hence $\cN(h):=\|h\|_{H}^2$ is a
CND function which satisfies all of our axioms.
\end{example}
\fi
\section{A One-Parameter Family of Gaussian Measures}
\label{sec:families}
In this section we study some behaviours of what happen when we do a simple
scaling by $\lam\in\RR_+$.
Let
$\cN:\cS\rightarrow\RR$
be a real CND function that 
is continuous w.r.t the Fr\'echet 
topology, which satisfies condition
(\ref{eq:16Oct19b})
and recall that we obtained the existence of an Hilbert space 
$\cH$
(cf. equation
(\ref{eq:14Oct19d})), with the special property that is $\cN(s)=\|\varphi_s\|^2_{\cH}$
for every $s\in\cS$.
\\\\\textit{\uline{A one-parameter family of Hilbert spaces}.}
For every 
$\lam\in\RR_+$,
we define the (scaled) Hilbert space
$\cH_{\lam}=(\cH,\langle\cdot,\cdot\rangle_{\cH_{\lam}})$, 
that is the space 
$\cH$ 
equipped with the (scaled) inner product
\begin{align}
\label{eq:5Dec20a}
\langle h_1,h_2\rangle_{\cH_\lam}
:=\lam^2\langle h_1,h_2
\rangle_{\cH},
\,\forall h_1,h_2\in\cH.
\end{align}
\textit{\uline{A one-parameter family of measures on
$\cS^\prime$}.}
For every 
$\lam\in\RR_+$, 
define
$\cQ_{\lam}:\cS\rightarrow\RR$ 
by
$$\cQ_{\lam}(s):=\cQ(\lam s)=\exp
\{-\cN(\lam s)/2 \},$$
which --- due to
(\ref{eq:16Oct19b}) ---
can be written as 
\begin{align}
\cQ_{\lam}(s)=\exp\{-\lam^2 \cN(s)/2\}.
\end{align}Due to 
\cite[Theorem 2.2]{BCR} and as it was explained in Remark
\ref{rem:4Dec20b},
$\cQ_{\lam}$ is PD thus  
we can apply the Bochner--Minlos theorem for 
$\cQ_{\lam}$. By doing so we get the existence (and uniqueness) of a probability
measure
$\PP_{\lam}$
on
$\cS^\prime$, such that
\begin{align}
\label{eq:14Oct19h}
\EE_{\PP_\lam}\big[
\exp\{iX_s\}\big]=\cQ_{\lam}(s)
=\exp\big\{
-\lam^2\|\varphi_s\|^2_{\cH}/2
\big\}=\exp\big\{
-\|\varphi_s\|^2_{\cH_\lam}/2
\big\}.
\end{align}
Notice that when 
$\lam=1$, 
we have that 
$\PP_1=\PP$
is the measure obtained in 
(\ref{eq:14Oct19b}).
\begin{remark}
\label{rem:22Oct19c}
Since
$\lam s\in\cS$ 
for every 
$s\in\cS$ and
$\lam\in\RR_+$, 
we have the following obvious
relation between any two measures from
$\{\PP_{\lam}\}_{\lam\in\RR_+}$, 
that is
\begin{align*}
\EE_{\PP_{\lam_1}}
\big[ \exp\{i\lam_2X_{s}\}\big]=
\cQ_{\lam_1}(\lam_2s)=
\exp\big\{-\lam_1^2\lam_2^2\cN(s)/2
\big\}=
\cQ_{\lam_2}(\lam_1s)=\EE_{\PP_{\lam_2}}\big[
\exp\{i\lam_1X_{s}\}\big],
\end{align*} 
for every 
$\lam_1,\lam_2\in\RR_+$ 
and 
$s\in\cS$.
\end{remark}
\begin{remark}
\label{rem:19Sep20a}
It is possible to have corresponding definitions for $\lam<0$, 
however the answers for all the  questions we study in this paper depend only on the value of
$\lam^2$.
\end{remark}
For every 
$\lam\in\RR_+$, 
we define the 
$\lam-$\textbf{white noise space} 
\begin{align}
\label{eq:23Sep20a}
L_2(\cS^\prime,\PP_{\lam}):=
\Big\{f:\cS^\prime
\rightarrow\CC\mid
\int_{\cS^\prime}|
f(\omega)|^2d\PP_{\lam}(\omega)<\infty 
\Big\},
\end{align}
which is an inner product space w.r.t the inner product
\begin{align}
\label{eq:23Sep20b}
\langle f,g
\rangle_{L_2(\cS^\prime,\PP_\lam)}:=
\int_{\cS^\prime}f(\omega)
\ol{g(\omega)}
d\PP_{\lam}(\omega),\,
\forall f,g\in L_2
(\cS^\prime,\PP_{\lam}).
\end{align}
\begin{lemma}
The space $L_2(\cS^\prime,\PP_{\lam})$ 
contains all the functions 
$\{X_s\}_{s\in\cS}$,
while their moments are given in formulas  
(\ref{eq:28Aug18b})
and
(\ref{eq:22Apr19g}).
\end{lemma}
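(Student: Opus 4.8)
The plan is to read everything off the characteristic function of $X_s$ under $\PP_\lam$, which is already explicit in \eqref{eq:14Oct19h}, and then recognize the resulting law as a centered Gaussian. Throughout Section~\ref{sec:families} the function $\cN$ satisfies the scaling condition \eqref{eq:16Oct19b}, so $\lam^2\cN$ is again a CND function with the same scaling property; hence Proposition~\ref{prop:16Oct19a} applies verbatim to $\PP_\lam$ and the process $\{X_s\}_{s\in\cS}$ is a centered Gaussian process with respect to $\PP_\lam$ as well.

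First I would fix $s\in\cS$ and recall, as in the proof of Proposition~\ref{prop:16Oct19a}, that $\alpha X_s=X_{\alpha s}$ as random variables on $\cS^\prime$ for every $\alpha\in\RR$. Combining this with \eqref{eq:14Oct19h} and \eqref{eq:16Oct19b} gives
$$\vartheta_{X_s}(\alpha)=\EE_{\PP_\lam}\big[\exp\{i\alpha X_s\}\big]=\EE_{\PP_\lam}\big[\exp\{iX_{\alpha s}\}\big]=\cQ_\lam(\alpha s)=\exp\big\{-\tfrac12\alpha^2\lam^2\cN(s)\big\}.$$
This is precisely the characteristic function of the centered Gaussian law $N(0,\lam^2\cN(s))$, so $X_s\sim N(0,\lam^2\cN(s))$ under $\PP_\lam$. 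In particular $\EE_{\PP_\lam}[|X_s|^2]=\lam^2\cN(s)<\infty$, which is exactly the assertion $X_s\in L_2(\cS^\prime,\PP_\lam)$; moreover all integer moments of $X_s$ are finite, with Gaussian tails.

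Next, to obtain the moment formulas, I would expand both sides of $\vartheta_{X_s}(\alpha)=\exp\{-\tfrac12\alpha^2\lam^2\cN(s)\}$ as power series in $\alpha$. Since $X_s$ has finite moments of all orders, dominated convergence (bounding the $k$-th $\alpha$-derivative of $e^{i\alpha X_s}$ by $|X_s|^k$ and using Gaussian integrability) justifies differentiating under the integral, so the coefficient of $\alpha^k$ on the left is $i^k\EE_{\PP_\lam}[X_s^k]/k!$, while on the right only even powers survive and the coefficient of $\alpha^{2n}$ is $(-1)^n(\lam^2\cN(s))^n/(2^n n!)$. Matching coefficients yields the odd moments $\EE_{\PP_\lam}[X_s^{2n+1}]=0$ and the even moments $\EE_{\PP_\lam}[X_s^{2n}]=\frac{(2n)!}{2^n n!}\,(\lam^2\cN(s))^n=(2n-1)!!\,(\lam^2\cN(s))^n$, which is formula \eqref{eq:28Aug18b}. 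For the mixed second moment \eqref{eq:22Apr19g} I would run the same computation on $\alpha_1 X_{s_1}+\alpha_2 X_{s_2}=X_{\alpha_1 s_1+\alpha_2 s_2}$, whose characteristic function is $\exp\{-\tfrac12\lam^2\cN(\alpha_1 s_1+\alpha_2 s_2)\}$, and extract the coefficient of $\alpha_1\alpha_2$ (equivalently, polarize the quadratic form $\cN$), obtaining $\EE_{\PP_\lam}[X_{s_1}X_{s_2}]=\tfrac{\lam^2}{4}\big(\cN(s_1+s_2)-\cN(s_1-s_2)\big)$.

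The only delicate point — and the main, rather mild, obstacle — is the justification of passing from the closed form of the characteristic function to finiteness of all moments and to the term-by-term identification of Taylor coefficients with moments. I would dispatch this cleanly by first using the uniqueness in the classical one-dimensional theory (a probability law on $\RR$ whose characteristic function is $e^{-c\alpha^2/2}$ is $N(0,c)$), which grants finiteness of all moments for free, and only afterwards differentiate under the integral sign; alternatively one may simply quote the standard Gaussian moment formula. Everything else is a routine power-series comparison.
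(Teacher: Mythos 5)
Your proposal is correct and follows essentially the same route as the paper: both start from the characteristic-function identity $\EE_{\PP_\lam}\big[\exp\{i\alpha X_s\}\big]=\exp\{-\alpha^2\lam^2\cN(s)/2\}$ obtained from \eqref{eq:14Oct19h} together with $X_{\alpha s}=\alpha X_s$, and then match Taylor coefficients in $\alpha$ to get the vanishing odd moments and the even moments $(2n-1)!!\,(\lam^2\cN(s))^n=(2n-1)!!\,\|\varphi_s\|^{2n}_{\cH_\lam}$, your extra step of first identifying the law of $X_s$ as $N(0,\lam^2\cN(s))$ so that all moments are finite before differentiating under the integral being a justification the paper leaves implicit. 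One bookkeeping correction: \eqref{eq:22Apr19g} is the even-moment formula (which your coefficient matching already yields), not a mixed second moment; the covariance expression you derive by polarization is \eqref{eq:23Oct19f}, proved separately in the paper via the parallelogram law, so that final computation of yours is correct but not needed for this lemma.
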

As
$\{X_s\}_{s\in\cS}$ 
is a centered Gaussian process,
its odd moments vanish, while its even 
$2n-$moments are equal to 
$(2n-1)!!\sigma_s^{2n}$, 
where 
$\sigma_s=\cN(\lam s)$ 
is the variance of 
$X_s$; 
for a proof, see e.g.
\cite[Proposition 6.2]{AJL17}. 
Nevertheless, we provide the readers a short proof.
\begin{proof}
As
$X_{\alpha s}=\alpha X_s$ 
and
$\|\varphi_{\alpha s}
\|_{\cH_{\lam}}^2
=\alpha^2\|\varphi_s\|^2_{\cH_{\lam}}$
for every
$\alpha\in\RR$ 
and
$s\in\cS$, 
we apply equation 
(\ref{eq:14Oct19h}) 
to obtain 
\begin{align*}
\,\exp\big\{-\alpha^2\|
\varphi_{ s}\|_{\cH_{\lam}}^2/2\big\}
=\cQ_{\lam}(\alpha s)=\EE_{\PP_\lam}
\big[\exp\{iX_{\alpha s}\}\big]
=\int_{\cS^\prime}
\exp\{i\alpha X_s(\omega)\}d\PP_{\lam}(\omega).
\end{align*}
%\begin{remark}
%This is where we have to assume that
%$\cN(\alpha s)=\alpha\cN(s)$ for any $\alpha>0$ and $s\in\cS$. Otherwise,
%we are kind of stuck.
%\end{remark}
Thus, by invoking the Taylor 
expansions which correspond to 
the exponents in both sides,
we have 
\begin{align*}
\sum_{n=0}^\infty \frac{(-1)^n
\|
\varphi_s\|_{\cH_{\lam}}^{2n}}{2^nn!}
\alpha^{2n}=
\sum_{n=0}^\infty \frac{i^n}{n!}
\Big(\int_{\cS^\prime}X_s(\omega)^n
d\PP_{\lam}(\omega) 
\Big)\alpha^n,
\end{align*}
so for every $n\in\NN_0$ 
and 
$s\in\cS$,
the $(2n+1)-$moment of 
$X_s$ w.r.t $\PP_{\lam}$ is given by
\begin{align}
\label{eq:28Aug18b}
\int_{\cS^\prime}X_s(\omega)^{2n+1}
d\PP_{\lam}(\omega)=0
\end{align}
and the $2n-$moment of ${X_s}$ w.r.t $\PP_{\lam}$ is given by
\begin{equation}
\label{eq:22Apr19g}
\int_{\cS^\prime}X_s(\omega)^{2n}
d\PP_{\lam}(\omega)
=(2n-1)!!
\|\varphi_s\|_{\cH_{\lam}}^{2n},
\end{equation}
where
$(2n-1)!!:=1\cdot 3\cdots(2n-1)=\frac{(2n)!}{2^nn!}$.
The last equation can be rewritten as
\begin{equation*}
\big\| X_s^{n}\big\|_{L_2
(\cS^\prime,\PP_{\lam})}^2
=(2n-1)!!
\|
\varphi_s\|^{2n}_{\cH_{\lam}},\,
\forall 
n\in\NN_0,\,
s\in\cS.
\end{equation*} 
In particular,
for any 
$s\in\cS$, 
we obtained that
$X_s\in L_2
(\cS^\prime,\PP_{\lam})$ with
\begin{equation}
\label{eq:22Apr19f}
\|X_s\|_{
L_2(\cS^\prime,\PP_{\lam})}=
\|
\varphi_s\|_{\cH_{\lam}},
\end{equation} 
while 
$\EE_{\PP_{\lam}}\big[ X_s\big]=0$ 
and the variance of
$X_s$
(w.r.t the measure $\PP_{\lam}$) 
is equal to
$\EE_{\PP_{\lam}}\big[ X_s^2\big]=\|X_s\|^2_{L_2(\cS^\prime,\PP_{\lam})}
=\|\varphi_s\|^2_{\cH_\lam}=
\lam^2\cN(s).$
\end{proof}
\begin{remark}
As we complete $\cS$ to the Hilbert space 
$\cH_{\lam}$,
we use the It\^o isometry to determine 
that once we index the Gaussian process by elements 
$h\in\cH_{\lam}$, the mapping 
$h\mapsto X_h$ 
is an isometry from the Hilbert space
$\cH_{\lam}$ 
to 
$L_2(\cS^\prime,\PP_{\lam})$. In other words, the Gaussian process 
$\{X_s\}_{s\in\cS}$
is now extended to the Gaussian process  
$\{X_h\}_{h\in\cH_{\lam}}$, with 
$$\EE_{\PP_{\lam}}\big[|X_h|^2\big]
=\|h\|_{\cH_{\lam}}^2$$ 
which is the extension of
(\ref{eq:22Apr19f}). 
\end{remark}
Next, we introduce the 
function
$\Gamma_{\lam}:\cS\times\cS\rightarrow\RR$ 
that is defined by
\begin{align}
\label{eq:21Oct19d}
\Gamma_{\lam}(s_1,s_2):=
\EE_{\PP_{\lam}}[X_{s_1}X_{s_2}\big]
=\langle X_{s_1},X_{s_2}\rangle_{L_2(\cS^\prime,\PP_{\lam})}
\end{align} 
for every
$s_1,s_2\in\cS$;
that is the covariance function of the Gaussian process
$\{X_s\}_{s\in\cS}$ 
w.r.t the measure 
$\PP_{\lam}$, so it is obviously PD. 
Notice that equation 
(\ref{eq:22Apr19f})
connects the two Hilbert spaces 
$L_2(\cS^\prime,\PP_{\lam})$ 
and 
$\cH_{\lam}$, via the norm preserving mapping 
$X_s\leftrightarrow\varphi_s$.
\begin{proposition}
The covariance function of the 
Gaussian process
$\{X_s\}_{s\in\cS}$ 
w.r.t the measure 
$\PP_{\lam}$ 
admits the formula
\begin{align}
\label{eq:23Oct19f}
\Gamma_{\lam}(s_1,s_2)=4^{-1}
\lam^2
\big(\cN(s_1+s_2)-\cN(s_1-s_2)\big),
\,\forall s_1,s_2\in\cS.
\end{align}
The mapping 
$X_s\rightarrow\varphi_s$ 
(from a subspace of 
$L_2(\cS^\prime,\PP_{\lam})$
into $\cH_{\lam}$) is an isometry 
if and only if 
\begin{align}
\label{eq:22Oct19d}
\cN(s_1)+\cN(s_2)=2^{-1}
\big( \cN(s_1+s_2)+\cN(s_1-s_2)\big).
\end{align}
\end{proposition}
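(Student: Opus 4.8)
The plan is to prove the two assertions essentially independently. For the formula \eqref{eq:23Oct19f}, I would start from the polarization identity for the $L_2$-norm: since $\Gamma_\lam(s_1,s_2)=\langle X_{s_1},X_{s_2}\rangle_{L_2(\cS^\prime,\PP_\lam)}$ and this inner product is real, we have
\begin{align*}
\Gamma_\lam(s_1,s_2)=4^{-1}\big(\|X_{s_1+s_2}\|^2_{L_2(\cS^\prime,\PP_\lam)}-\|X_{s_1-s_2}\|^2_{L_2(\cS^\prime,\PP_\lam)}\big),
\end{align*}
using the linearity relation $X_{s_1}\pm X_{s_2}=X_{s_1\pm s_2}$ established in the proof of Proposition \ref{prop:16Oct19a}. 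Then invoke \eqref{eq:22Apr19f}, which gives $\|X_s\|^2_{L_2(\cS^\prime,\PP_\lam)}=\|\varphi_s\|^2_{\cH_\lam}=\lam^2\cN(s)$, and substitute to land on $4^{-1}\lam^2(\cN(s_1+s_2)-\cN(s_1-s_2))$. This part is routine once the polarization step is set up correctly.

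For the second assertion I would unwind what "$X_s\mapsto\varphi_s$ is an isometry" means: since both $\{X_s\}$ span a subspace of $L_2(\cS^\prime,\PP_\lam)$ and $\{\varphi_s\}$ span (a dense subspace of) $\cH_\lam$, the map is an isometry precisely when it preserves inner products, i.e. when $\langle X_{s_1},X_{s_2}\rangle_{L_2(\cS^\prime,\PP_\lam)}=\langle\varphi_{s_1},\varphi_{s_2}\rangle_{\cH_\lam}$ for all $s_1,s_2$. The left side is $\Gamma_\lam(s_1,s_2)=4^{-1}\lam^2(\cN(s_1+s_2)-\cN(s_1-s_2))$ by the formula just proved, and the right side is $\lam^2\varphi_\cN(s_1,s_2)=2^{-1}\lam^2(\cN(s_1)+\cN(s_2)-\cN(s_1-s_2))$ by the definition of the reproducing kernel $\varphi_\cN$ and the scaling \eqref{eq:5Dec20a} of the inner product on $\cH_\lam$. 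Setting these equal and cancelling $\lam^2$ (legitimate since $\lam>0$) gives $4^{-1}(\cN(s_1+s_2)-\cN(s_1-s_2))=2^{-1}(\cN(s_1)+\cN(s_2)-\cN(s_1-s_2))$, which rearranges to exactly \eqref{eq:22Oct19d}. Conversely, if \eqref{eq:22Oct19d} holds the same computation shows the inner products agree, so the map extends to an isometry.

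The one point requiring a little care — and the main (mild) obstacle — is the claim that agreement of inner products on the spanning families actually yields a well-defined isometry between the closures. Since $\cN$ satisfies the scaling condition \eqref{eq:16Oct19b}, the map $s\mapsto\varphi_s$ is linear on $\cS$, and a linear map between pre-Hilbert spaces that preserves the inner product on a spanning set is automatically injective and extends uniquely to an isometry of the completions; so this is a standard density argument rather than a real difficulty. I would state it briefly and move on. The substantive content is entirely in matching the two explicit kernels $4^{-1}\lam^2(\cN(s_1+s_2)-\cN(s_1-s_2))$ and $2^{-1}\lam^2(\cN(s_1)+\cN(s_2)-\cN(s_1-s_2))$, and recognizing that their equality is the quadratic (parallelogram-type) identity \eqref{eq:22Oct19d}.
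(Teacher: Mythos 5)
Your proposal is correct and follows essentially the same route as the paper: the polarization identity in $L_2(\cS^\prime,\PP_{\lam})$ combined with $X_{s_1}\pm X_{s_2}=X_{s_1\pm s_2}$ and $\|X_s\|^2_{L_2(\cS^\prime,\PP_{\lam})}=\lam^2\cN(s)$ for the covariance formula, and then comparison with $\langle\varphi_{s_1},\varphi_{s_2}\rangle_{\cH_{\lam}}=2^{-1}\lam^2\big(\cN(s_1)+\cN(s_2)-\cN(s_1-s_2)\big)$ for the isometry criterion. The extra remark on extending the inner-product-preserving map to the closure is a harmless refinement beyond what the paper records.
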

This is an important fact for us, which will come into play in
Subsection 
\ref{subsec:Jor}.

\begin{proof}
By using the parallelogram law in 
$L_2(\cS^\prime,\PP_{\lam})$
\iffalse
$\langle u,v\rangle_{H}
=\frac{1}{4}(\|u+v\|_{H}^2-
\|u-v\|_{H}^2)$
that is true for every Hilbert space $H$ and $u,v\in H$,
\fi
and the fact that
$X_{s_1+s_2}=X_{s_1}+X_{s_2}$ for every $s_1,s_2\in\cS$, 
we know that 
\begin{multline*}
\langle X_{s_1},X_{s_2}
\rangle_{L_2(\cS^\prime,\PP_{\lam})}
=4^{-1}\big(
\|X_{s_1+s_2}\|^2_{L_2(\cS^\prime,
\PP_{\lam})}-\|X_{s_1-s_2}\|^2_{L_2
(\cS^\prime,\PP_{\lam})}\big)
\\=4^{-1}
\big(\|\varphi_{s_1+s_2}\|^2_{\cH_{\lam}}-\|
\varphi_{s_1-s_2}\|^2_{\cH_{\lam}}\big)
=4^{-1}\lam^2(\cN(s_1+s_2)-\cN(s_1-s_2)).
\end{multline*}
Therefore, the covariance function of the Gaussian process
$\{X_s\}_{s\in\cS}$ w.r.t the measure $\PP_{\lam}$ admits the formula
in
\ref{eq:23Oct19f}.
Moreover, as the inner product in 
$\cH_{\lam}$ is defined as follows 
(cf. equation 
(\ref{eq:4Dec20c}))
$$\langle\varphi_{s_1},\varphi_{s_2}\rangle_{\cH_{\lam}}
=\lam^2\varphi(s_1,s_2)=
2^{-1}\lam^2
\big(\cN(s_1)+\cN(s_2)-\cN(s_1-s_2)\big),$$
we obtain that
$\langle X_{s_1},X_{s_2}\rangle_{L_2(\cS^\prime,\PP_{\lam})}=
\langle \varphi_{s_1},\varphi_{s_2}\rangle_{\cH_{\lam}}$
if and only if the function 
$\cN$ 
meets condition
(\ref{eq:22Oct19d}).
So the mapping 
$X_s\rightarrow\varphi_s$ is always norm preserving (as
$\|X_s\|^2_{L_2(\cS^\prime,\PP_{\lam})}
=\|\varphi_s\|_{\cH_\lam}^2=\lam^2\cN(s)$), 
however it is isometry
if and only if  $\cN$ 
satisfies
(\ref{eq:22Oct19d}). 
\end{proof}
Finally, we recall a general formula for the joint distribution of
$X_{\xi_1},\ldots,X_{\xi_n}$,
where $\xi_1,\ldots,\xi_n\in\cS$ 
are linearly independent, which holds as the process 
$\{X_s\}_{s\in\cS}$
is Gaussian.
\begin{lemma}
\label{lem:21Oct19a}
Let $n\in\NN$
and
$\xi_1,\ldots,\xi_n\in\cS$ 
be linearly independent.
Define the (truncated covariance) matrix 
$C_n^{(\lam)}\in\RR^{\ntn}$ 
by 
$(C_n^{(\lam)})_{ij}=
\Gamma_{\lam}(\xi_i,\xi_j)=
\EE_{\PP_{\lam}}\big[ X_{\xi_i}
X_{\xi_j}\big].$ 
Then the joint distribution of 
$X_{\xi_1},\ldots,X_{\xi_n}$ 
w.r.t $\PP_{\lam}$ is given by the function
\begin{align}
\label{eq:21Oct19b}
g_n^{(\lam)}(\ul{x}):=
(\det C_n^{(\lam)})^{-1/2}\exp\Big\{
\ul{x}^T\big(
C_n^{(\lam)}\big)^{-1}\ul{x}\Big\}
\end{align}
where
$\ul{x}=(x_1,\ldots,x_n)$, i.e., for every 
$f\in L_1(\RR^n,g_n^{(\lam)}(\ul{x})d\ul{x})$
we have
\begin{align*}
\EE_{\PP_{\lam}}\big[ f(X_{\xi_1},\ldots,X_{\xi_n})\big]
=\int_{\RR^n}
f(\ul{x})g_n^{(\lam)}
(\ul{x})d\ul{x}.
\end{align*}
\end{lemma}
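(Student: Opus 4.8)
The plan is to reduce the statement to the standard fact that a jointly Gaussian, centered random vector with nonsingular covariance matrix has the displayed density, and to spend most of the effort verifying the two hypotheses implicit in that reduction: (i) that the random vector $(X_{\xi_1},\dots,X_{\xi_n})$ really is jointly Gaussian and centered, and (ii) that its covariance matrix $C_n^{(\lam)}$ is invertible, so that the formula \eqref{eq:21Oct19b} makes sense. Point (i) is immediate from Proposition \ref{prop:16Oct19a}: the process $\{X_s\}_{s\in\cS}$ is a centered Gaussian process with respect to $\PP_\lam$ (applying that proposition to the CND function $\lam^2\cN$, which still satisfies the scaling condition \eqref{eq:16Oct19b}), so by definition every finite sub-collection $(X_{\xi_1},\dots,X_{\xi_n})$ is a multivariate Gaussian vector; its mean vector is zero and its covariance matrix is exactly $(C_n^{(\lam)})_{ij}=\EE_{\PP_\lam}[X_{\xi_i}X_{\xi_j}]=\Gamma_\lam(\xi_i,\xi_j)$ by \eqref{eq:21Oct19d}.

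For point (ii), I would argue that $C_n^{(\lam)}$ is positive definite, not merely positive semidefinite. By \eqref{eq:22Apr19f} and the It\^o isometry $\|X_s\|_{L_2(\cS^\prime,\PP_\lam)}=\|\varphi_s\|_{\cH_\lam}$, together with linearity $X_{\sum_i a_i\xi_i}=\sum_i a_i X_{\xi_i}$, one has for $\ul{a}=(a_1,\dots,a_n)^T$
\begin{align*}
\ul{a}^T C_n^{(\lam)}\ul{a}
=\Big\|\sum_{i=1}^n a_i X_{\xi_i}\Big\|_{L_2(\cS^\prime,\PP_\lam)}^2
=\Big\|\varphi_{\sum_i a_i\xi_i}\Big\|_{\cH_\lam}^2
=\lam^2\,\cN\Big(\sum_{i=1}^n a_i\xi_i\Big).
\end{align*}
Here the key point is that the map $s\mapsto\varphi_s$ is \emph{linear} (it is the assignment $s\mapsto\varphi_\cN(s,\cdot)$ into $\cH_0$, which is linear by construction in \eqref{eq:4Dec20c}), so $\varphi_{\sum_i a_i\xi_i}=\sum_i a_i\varphi_{\xi_i}$. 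Thus $\ul{a}^T C_n^{(\lam)}\ul{a}=0$ forces $\sum_i a_i\varphi_{\xi_i}=0$ in $\cH_\lam$; if additionally the $\varphi_{\xi_i}$ were linearly independent in $\cH$ this would give $\ul{a}=0$. This is the one genuinely delicate spot: linear independence of $\xi_1,\dots,\xi_n$ in $\cS$ need not imply linear independence of $\varphi_{\xi_1},\dots,\varphi_{\xi_n}$ in $\cH$ unless $\cN$ has trivial kernel (equivalently $\ker\cN=\{0\}$, so that $s\mapsto\varphi_s$ is injective). I expect this to be the main obstacle, and the resolution is presumably a standing assumption in the surrounding text that $\cN$ is nondegenerate (so $X_s\mapsto\varphi_s$ is an isometric bijection onto its range), under which $C_n^{(\lam)}$ is strictly positive definite and hence invertible with $\det C_n^{(\lam)}>0$.

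With (i) and (ii) in hand, the conclusion is the classical change-of-variables / density formula for nondegenerate Gaussian vectors: the characteristic function of $(X_{\xi_1},\dots,X_{\xi_n})$ under $\PP_\lam$ is $\exp\{-\tfrac12\ul{\alpha}^T C_n^{(\lam)}\ul{\alpha}\}$ (this is \eqref{eq:22Oct19a} with $\mu_\ell=0$ and $\sigma_{j\ell}=(C_n^{(\lam)})_{j\ell}$, or can be read off directly from \eqref{eq:14Oct19h} applied to $s=\sum_\ell\alpha_\ell\xi_\ell$), and by Fourier inversion this is precisely the characteristic function of the absolutely continuous distribution with density proportional to $g_n^{(\lam)}$. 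A brief remark on normalization: as written, $g_n^{(\lam)}$ in \eqref{eq:21Oct19b} should carry the factor $(2\pi)^{-n/2}$ and the exponent should read $-\tfrac12\ul{x}^T(C_n^{(\lam)})^{-1}\ul{x}$ for it to integrate to $1$; I would state the result with these corrected constants (or note that \eqref{eq:21Oct19b} is understood up to the standard Gaussian normalizing constant). Finally, the displayed identity $\EE_{\PP_\lam}[f(X_{\xi_1},\dots,X_{\xi_n})]=\int_{\RR^n}f(\ul{x})\,g_n^{(\lam)}(\ul{x})\,d\ul{x}$ for $f\in L_1(\RR^n,g_n^{(\lam)}(\ul{x})\,d\ul{x})$ is then just the statement that $g_n^{(\lam)}$ is the push-forward density of $\PP_\lam$ under $\omega\mapsto(X_{\xi_1}(\omega),\dots,X_{\xi_n}(\omega))$, which follows by the usual approximation argument (indicators, then simple functions, then monotone/dominated convergence) once it is known on characteristic functions.
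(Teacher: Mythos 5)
The paper gives no proof of this lemma at all: it is introduced with the words ``we recall a general formula \dots which holds as the process $\{X_s\}_{s\in\cS}$ is Gaussian'' and is treated as the classical density formula for a nondegenerate centered Gaussian vector. Your proposal is therefore not competing with an argument in the paper but supplying the standard one, and it does so correctly in outline: joint Gaussianity and centeredness come from Proposition \ref{prop:16Oct19a}, the covariance matrix is $C_n^{(\lam)}$ by (\ref{eq:21Oct19d}), and the density is recovered from the characteristic function $\exp\{-\tfrac12\ul{\alpha}^TC_n^{(\lam)}\ul{\alpha}\}$ by Fourier inversion plus the usual approximation step. Your two editorial observations are also on target: as printed, (\ref{eq:21Oct19b}) is missing the factor $(2\pi)^{-n/2}$ and the exponent should be $-\tfrac12\,\ul{x}^T(C_n^{(\lam)})^{-1}\ul{x}$ (otherwise $g_n^{(\lam)}$ is not even integrable), and invertibility of $C_n^{(\lam)}$ is a genuine hypothesis hidden in the phrase ``linearly independent'', since linear independence in $\cS$ does not by itself prevent degeneracy of $\cN$ on the span (e.g.\ $\cN_\mu$ with $\mu$ supported on a small set).

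One step in your positivity argument is not justified by the paper and need not hold: you assert that $s\mapsto\varphi_s$ is linear ``by construction''. The construction only gives $\varphi_s(\wt s)=\varphi_{\cN}(s,\wt s)$, and $\varphi_{\cN}(s_1+s_2,\wt s)$ is in general not $\varphi_{\cN}(s_1,\wt s)+\varphi_{\cN}(s_2,\wt s)$; the paper nowhere claims linearity of $\varphi$. The fix is immediate and makes the detour through $\cH$ unnecessary: use instead the genuine linearity of $s\mapsto X_s$ (from the duality pairing) together with (\ref{eq:22Apr19f}), which yields directly $\ul{a}^TC_n^{(\lam)}\ul{a}=\big\|\sum_i a_iX_{\xi_i}\big\|^2_{L_2(\cS^\prime,\PP_{\lam})}=\|X_{\sum_i a_i\xi_i}\|^2_{L_2(\cS^\prime,\PP_{\lam})}=\lam^2\cN\big(\sum_i a_i\xi_i\big)$, so strict positivity reduces exactly to the nondegeneracy of $\cN$ on $\mathrm{span}\{\xi_1,\dots,\xi_n\}$ that you already identified as the implicit assumption. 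With that repair your proof is complete and is the natural way to fill the gap the paper leaves open.
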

It is easily seen from
(\ref{eq:23Oct19f}) that for any 
$\lam_1,\lam_2\in\RR_+$, we have the following relation between the covariance functions, that is 
$$\lam_1^{-2}\Gamma_{\lam_1}(s_1,s_2)=
\lam_2^{-2}\Gamma_{\lam_2}(s_1,s_2),$$
which implies the following relation between the covariance matrices
$\lam_1^{-2}C_n^{(\lam_1)}=
\lam_2^{-2} C_n^{(\lam_2)}.$
The last equality implies that
the way that the covariance matrix
$C_n^{(\lam)}$ depends on
$\lam$ is just by a diagonal matrix, and also that 
 the joint distributions of the Gaussian processes
 (w.r.t $\PP_{\lam_1}$ 
 and 
 $\PP_{\lam_2}$) admit the following relation
$$\lam_1^ng_n^{(\lam_1)}(\lam_1x_1,\ldots,\lam_1x_n)
=\lam_2^ng_n^{(\lam_2)}(\lam_2x_1,\ldots,\lam_2x_n).$$
\begin{remark}
It is  obvious from the formula
(\ref{eq:21Oct19b}), that if 
$C_n^{(\lam)}$ 
is a diagonal matrix, then  the function
$g_n^{(\lam)}$ is of the special form
$g_n^{(\lam)}(\ul{x})=
\alpha_{0,\lam}\exp\{\alpha_{1,\lam}x_1^2\}
\cdots\exp\{\alpha_{n,\lam} x_n^2\},$
where $\alpha_{0,\lam},
\alpha_{1,\lam},\ldots,\alpha_{n,\lam}\in\RR$. 
This will allow us to provide an
easy way to build an orthonormal basis for the space
$L_2(\cS^\prime,\PP_{\lam})$. 
However, in order for $C_n^{(\lam)}$
to be diagonal, we need to choose
$\xi_1,\ldots,\xi_n\in\cS$ such that
\begin{align}
\label{eq:23Oct19e}
\Gamma_{\lam}(\xi_i,\xi_j)=0,
\text{ i.e., }\cN(\xi_i+\xi_j)
=\cN(\xi_i-\xi_j),\, 
\forall i\ne j.
\end{align}
In general, we know that 
$\Gamma_{\lam}(s_1,s_2)$ 
is a symmetric, positive semi-definite bilinear mapping on 
$\cS\times\cS$, 
therefore we might use the Gram--Schmidt algorithm to get a set
$\{\xi_i\}_{i=1}^n$ in 
$\cS$
such that
$\Gamma_{\lam}(\xi_i,\xi_j)=\delta_{ij}$. 
\end{remark}
\iffalse
For example, in the special 
cases where the CND function is 
$\cN(s)=\|s\|^2_{L_2(\RR)}$, 
as the condition in
(\ref{eq:23Oct19e})
simply becomes $\langle\xi_i,\xi_j\rangle=0$ for $i\ne j$ and can be justified
by the Gram--Schmidt algorithm. 
\fi
\subsection{Mutual singularity of the measures}
\label{subsec:Jor}
In
\cite[Chapter 3]{H80}
the author studies the behaviour of the space
$L_2(\cS^\prime,\mu_{\lam})$,
where the measure 
$\mu_{\lam}$ 
is the one obtained from the Bochner--Minlos 
theorem applied to the PD function
$$\exp
\big\{-\lam^2\|s\|^2/2
\big\}=\EE_{\mu_\lam}
\big[ \exp\{iX_s\}\big];$$ 
here 
$\|s\|$ 
stands for the
$L_2(\RR)$ norm, that is 
$\|s\|^2=\int_{\RR}s(u)^2du$.
In particular, it is shown
(see
\cite[Proposition 3.1]{H80}) that whenever
$\lam_1,\lam_2\in\RR_+$ 
and
$\lam_1\ne\lam_2$, the measures $\mu_{\lam_1}$
and
$\mu_{\lam_2}$
are mutually singular. 
This fits exactly into our 
settings, simply by choosing 
the CND function
$\cN(s)=\|s\|^2_{L_2(\RR)}$,
thus the measure 
$\mu_{\lam}$
which appears in
\cite[Chapter 3]{H80}
coincides with the measure 
$\PP_{\lam}$ 
introduced earlier (in
(\ref{eq:14Oct19h})).

The purpose of this subsection is to generalize 
\cite[Proposition 3.1]{H80}
to the case where the measures 
$\{\PP_{\lam}\}_{\lam\in\RR_+}$ 
correspond to a function 
$\cN:\cS\rightarrow\RR$
which is a CND function,
continuous w.r.t the Fr\'echet topology,
and satisfies condition 
(\ref{eq:16Oct19b}).
To do so, we will use the results from
\cite{J68}, 
which highly depend on the theory 
of RKHSs, whereas 
the leading idea is to compare between the covariance functions
of a Gaussian process w.r.t two different measures and establish a condition
which determines whether the measures are equivalent or singular. 

For the convenience of the reader, 
we first recall the settings from 
\cite{J68}. 
Let
$(\Omega,\cF)$ 
be a measurable space, where 
$\cF$ 
is the 
$\sigma-$algebra generated 
by a class of random variables 
$\{X(t):t\in T\}$
and 
$T$ 
is assumed to be an interval, or more generally a separable metric space.
Assume 
$\wt{P}$ 
and
$P$
are probability measures on
$(\Omega,\cF)$, 
such that 
$\{X(t):t\in T\}$ 
are Gaussian processes with 
mean value functions 
$\wt{m}(t)$ and 
$0$, 
and covariance functions 
$\wt{\Gamma}(s,t)$ 
and 
$\Gamma(s,t)$, 
respectively. Then, as they are PD kernels, the covariance
functions 
$\wt{\Gamma}(s,t)$
and
$\Gamma(s,t)$ 
generate RKHSs
$\cH(\wt{\Gamma})$
and
$\cH(\Gamma)$, 
with the RK
$\wt{\Gamma}(s,t)$ 
and 
$\Gamma(s,t)$, 
respectively. 
If
$\{g_k\}_{k=1}^\infty$
is a complete orthonormal system in
$\cH(\Gamma)$, then the RK has the following form
$$\Gamma(s,t)=\sum_{k=1}^\infty
g_k(s)g_k(t);$$
for a proof see e.g.
\cite[Proposition 3.8]{J68}. 
For a more elegant and general presentation of
the RKHSs $\cH(\Gamma)$ see
\cite{Sch64}, where Hilbert spaces of tempered distributions are discussed as well.
We will use the following result:

\begin{theorem}[Theorem 4.4  in 
\cite{J68}]
\label{thm:28Aug18c}
The measures 
$\wt{P}$ 
and
$P$
are mutually equivalent 
if and only if the following hold:
\begin{enumerate}
\item[1)]
$\wt{m}(\cdot)\in \cH(\Gamma)$,
\item[2)]
$\wt{\Gamma}$ 
has a representation
\begin{align}
\wt{\Gamma}(s,t)
=\sum_{k=1}^\infty 
\beta_k g_k(s)g_k(t),
\end{align}
where 
$\{g_k\}$ 
is a complete orthonormal system in 
$\cH(\Gamma)$,
with
\begin{align}
\sum_{k=1}^\infty
(1-\beta_k^2)<\infty
\text{ and }\beta_k>0
\text{ for all }k.
\end{align}
\end{enumerate}
\end{theorem}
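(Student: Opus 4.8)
The plan is to prove the equivalence by first recording the Feldman--H\'ajek dichotomy --- any two Gaussian measures on the same $\sigma$-algebra are either equivalent or mutually singular --- so that it suffices to detect when equivalence, rather than singularity, holds, and then to split the analysis into a ``mean part'' controlled by (1) and a ``covariance part'' controlled by (2). Concretely I would introduce the centered Gaussian measure $R$ with the same covariance $\wt{\Gamma}$ as $\wt{P}$ and mean $0$. For the ``if'' direction, condition (2) will be shown to give $R\sim P$ together with the coincidence $\cH(\wt{\Gamma})=\cH(\Gamma)$ as sets with comparable norms; then (1), now reading $\wt{m}\in\cH(\wt{\Gamma})$, will give $\wt{P}\sim R$ by the Cameron--Martin theorem, hence $\wt{P}\sim P$. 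For the ``only if'' direction I would instead exploit the structure of Radon--Nikodym derivatives of Gaussian measures, extracting (1) from the linear part and (2) from the quadratic part of $\log(d\wt{P}/dP)$.

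For the mean part I would use the canonical isometry $j\colon\cH(\Gamma)\to L_2(\Omega,P)$ carrying the section $\Gamma(\cdot,t)$ to the centered variable $X(t)$, and its analogue $j_R$ for $R$. If $\wt{m}\in\cH(\wt{\Gamma})$, the density of $\wt{P}$ with respect to $R$ is the explicit positive $L_1(R)$ function $\exp\{\,j_R(\wt{m})-\tfrac12\|\wt{m}\|^2_{\cH(\wt{\Gamma})}\}$ of unit mass, so $\wt{P}\sim R$. If instead $\wt{m}\notin\cH(\wt{\Gamma})$ one exhibits a direction along which a Gaussian zero--one law for the shift separates the two measures, so $\wt{P}\perp R$; this is the usual RKHS proof of the Cameron--Martin converse and will also be needed when deducing (1) from $\wt{P}\sim P$.

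The covariance part carries the real content. Given two centered Gaussian measures with kernels $\Gamma$ and $\wt{\Gamma}$, equivalence forces $\cH(\wt{\Gamma})$ and $\cH(\Gamma)$ to coincide with comparable norms (otherwise a single coordinate already produces a dichotomy on the singular side), so the identity extends to a bounded, boundedly invertible, positive operator $B$ on $\cH(\Gamma)$ with $\wt{\Gamma}(s,t)=\langle B\,\Gamma(\cdot,t),\Gamma(\cdot,s)\rangle_{\cH(\Gamma)}$. Condition (2) is exactly the assertion that $B$ has an orthonormal eigenbasis $\{g_k\}$ of $\cH(\Gamma)$ with eigenvalues $\beta_k^2>0$ and $B-I$ Hilbert--Schmidt; in that basis $\{g_k\}$ pushes forward under $j$ to an i.i.d.\ standard Gaussian sequence and under $j_R$ to an independent sequence of $\mathrm{N}(0,\beta_k^2)$ variables, so that on the sequence space $P$ and $R$ become the product measures $\bigotimes_k \mathrm{N}(0,1)$ and $\bigotimes_k\mathrm{N}(0,\beta_k^2)$. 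Kakutani's dichotomy then identifies equivalence with $\prod_k\rho_k>0$, where $\rho_k$ is the Hellinger affinity of $\mathrm{N}(0,1)$ and $\mathrm{N}(0,\beta_k^2)$; an elementary Gaussian integral gives $\rho_k=\bigl(2\beta_k/(1+\beta_k^2)\bigr)^{1/2}$, so that $1-\rho_k$ is comparable to $(1-\beta_k)^2$ as $\beta_k\to1$ and $\prod_k\rho_k>0$ becomes precisely the summability hypothesis of (2). For the ``only if'' direction one runs this backwards: $\wt{P}\sim P$ and the structure theorem for Gaussian Radon--Nikodym derivatives give that the quadratic form in $\log(d\wt{P}/dP)$ corresponds to a Hilbert--Schmidt perturbation of the identity, which is what lets one produce the diagonalizing basis $\{g_k\}$ and the summable $\beta_k$.

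I expect the genuine obstacle to be this last ``necessity'' step: wringing Hilbert--Schmidt-ness of $B-I$, and hence simultaneous diagonalizability of the two kernels, out of bare absolute continuity. This forces one either to cite or to reprove that the Radon--Nikodym derivative of one Gaussian measure with respect to an equivalent one is the exponential of an $L_1$-convergent polynomial of degree two in the Gaussian field, to identify its linear coefficient with $\wt{m}$ viewed inside $\cH(\Gamma)$ and its quadratic coefficient with a self-adjoint operator, and then to match the resulting trace/Hilbert--Schmidt conditions against Kakutani's criterion; the remaining ingredients --- the RKHS bookkeeping, the Cameron--Martin density, and the explicit Gaussian affinity computation --- are routine by comparison.
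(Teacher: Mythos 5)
The paper offers no proof of this statement: it is quoted (as Theorem 4.4 of \cite{J68}) and used as a black box in the proof of Theorem \ref{thm:28Aug18d}, so there is no in-paper argument to compare yours against; your proposal has to stand as a reconstruction of Jorsboe's theorem. Your route --- the Gaussian dichotomy, a split into a Cameron--Martin mean part and a covariance part, diagonalization of the two kernels, and reduction to Kakutani's product criterion via the Hellinger affinity of one-dimensional Gaussians --- is the standard modern proof of results of this type, and the ``if'' direction as you sketch it (density $\exp\{j_R(\wt m)-\tfrac12\|\wt m\|^2\}$ for the mean shift, product-measure computation for the covariances) is essentially sound.

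The genuine gap is the one you half-acknowledge: necessity. Deducing from bare equivalence of $\wt{P}$ and $P$ that (i) $\cH(\wt\Gamma)=\cH(\Gamma)$ with comparable norms, hence a bounded, boundedly invertible positive operator $B$, and (ii) $B-I$ is compact, indeed Hilbert--Schmidt --- so that the orthonormal eigenbasis $\{g_k\}$ of condition (2) exists at all --- is precisely the content of the theorem, and your plan defers it to ``the structure theorem for Gaussian Radon--Nikodym derivatives,'' a result of the same depth whose usual proofs go through Feldman--H\'ajek or through Jorsboe-type arguments; as written this step is circular rather than proved. A self-contained necessity argument would, for instance, get norm comparability from the dichotomy applied to one-dimensional marginals plus a closed-graph argument, and then show that failure of the Hilbert--Schmidt condition allows one to pick finitely many (approximate) eigendirections whose Hellinger affinity is arbitrarily small, forcing singularity. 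Two further points of bookkeeping: with the representation $\wt\Gamma(s,t)=\sum_k\beta_k g_k(s)g_k(t)$ and $\Gamma(s,t)=\sum_k g_k(s)g_k(t)$, the eigenvalues of $B$ are the $\beta_k$ themselves and the coordinates $j(g_k)$ have variance $\beta_k$ (not $\beta_k^2$) under the centered companion of $\wt{P}$; and your Kakutani computation then yields the Hilbert--Schmidt condition $\sum_k(1-\beta_k)^2<\infty$, which is not literally the printed condition $\sum_k(1-\beta_k^2)<\infty$ (try $\beta_k=1+1/k$), so you should reconcile the normalizations rather than assert that you recover ``precisely the summability hypothesis of (2)'' --- for the application in Theorem \ref{thm:28Aug18d}, where all $\beta_k$ are equal, the two read the same. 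Finally, the reduction to a product measure needs the countable family $\{j(g_k)\}$ to generate $\cF$ up to null sets for both measures; this is where separability of $T$ and of $\cH(\Gamma)$ enters and should be stated.
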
 
Our interpretation of 
the results from
\cite{J68} is well summarized in the proof of the  following theorem, 
where we consider the 
Gaussian process
$\{X_s\}_{s\in\cS}$
and the distinct path-space measures 
are any pair of measures from 
the family 
$\{\PP_{\lam}\}_{\lam\in\RR_+}$,
introduced in
(\ref{eq:14Oct19h}). 
\begin{theorem}
\label{thm:28Aug18d}
Let 
$\cN:\cS\rightarrow\RR$
be a CND function, that is
continuous w.r.t the Fr\'echet topology
and satisfies condition
(\ref{eq:16Oct19b}).
Then the measures 
$\PP_{\lam_1}$ 
and 
$\PP_{\lam_2}$ 
are mutually singular,
for every 
$\lam_1,\lam_2\in\RR_+$ with 
$\lam_1\ne\lam_2$. 
\end{theorem}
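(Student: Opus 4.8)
The plan is to apply Theorem \ref{thm:28Aug18c} with $P = \PP_{\lam_1}$ and $\wt P = \PP_{\lam_2}$, where both measures carry the centered Gaussian process $\{X_s\}_{s\in\cS}$. Since both processes are centered, the mean function condition (1) is trivially satisfied ($\wt m \equiv 0 \in \cH(\Gamma)$), so the entire question reduces to condition (2): whether the covariance kernel $\Gamma_{\lam_2}$ admits a representation $\Gamma_{\lam_2}(s,t) = \sum_k \beta_k g_k(s) g_k(t)$ in terms of a complete orthonormal system $\{g_k\}$ of $\cH(\Gamma_{\lam_1})$ with $\beta_k > 0$ and $\sum_k(1-\beta_k^2) < \infty$. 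The key structural fact, already derived in the excerpt right after Lemma \ref{lem:21Oct19a}, is that $\lam_1^{-2}\Gamma_{\lam_1} = \lam_2^{-2}\Gamma_{\lam_2}$, i.e. $\Gamma_{\lam_2} = (\lam_2^2/\lam_1^2)\,\Gamma_{\lam_1}$. So the two covariance kernels are scalar multiples of one another.

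First I would record that $\cH(\Gamma_{\lam_2}) = \cH(\Gamma_{\lam_1})$ as sets (scaling a reproducing kernel by a positive constant $c$ rescales the norm by $c^{-1/2}$ but leaves the underlying space of functions unchanged), and that if $\{g_k\}$ is a complete orthonormal system in $\cH(\Gamma_{\lam_1})$, then the expansion of the RK of $\cH(\Gamma_{\lam_1})$ is $\Gamma_{\lam_1}(s,t) = \sum_k g_k(s) g_k(t)$. Multiplying by $c := \lam_2^2/\lam_1^2$ gives $\Gamma_{\lam_2}(s,t) = \sum_k c\, g_k(s) g_k(t)$, which is exactly the representation demanded in Theorem \ref{thm:28Aug18c}(2) with $\beta_k = c$ for every $k$. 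Then $\beta_k > 0$ holds, but $\sum_k (1-\beta_k^2) = \sum_k (1 - c^2)$ is a sum of infinitely many copies of the fixed nonzero number $1 - c^2$ (nonzero precisely because $\lam_1 \ne \lam_2$ forces $c \ne 1$); this series diverges. Here one must note that $\cH(\Gamma_{\lam_1})$ is genuinely infinite-dimensional — this follows because the Gaussian process $\{X_s\}_{s\in\cS}$ is nondegenerate (one can exhibit arbitrarily many linearly independent $\xi_1,\dots,\xi_n \in \cS$ with $\Gamma_{\lam_1}(\xi_i,\xi_j) = \delta_{ij}$ via Gram--Schmidt, as noted in the remark following Lemma \ref{lem:21Oct19a}, as long as $\cN$ does not vanish on a cofinite-codimensional subspace, which holds in all the cases of interest; alternatively one argues that $\cH$ is infinite-dimensional because $\cS$ embeds isometrically into it and $\cN$ is not identically zero on any infinite-dimensional subspace).

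The subtlety — and the step I expect to be the real obstacle — is that Theorem \ref{thm:28Aug18c} is an iff statement giving equivalence of the measures, and the representation $\Gamma_{\lam_2} = \sum_k \beta_k g_k(s)g_k(t)$ with a divergent $\sum_k(1-\beta_k^2)$ shows only that \emph{this particular} representation fails condition (2); one must argue that \emph{no} representation works. The clean way around this is uniqueness: the coefficients $\beta_k$ in such an expansion relative to a fixed complete orthonormal system $\{g_k\}$ of $\cH(\Gamma_{\lam_1})$ are determined, since $\beta_k = \langle \Gamma_{\lam_2}(\cdot, \star), g_k(\star)\rangle_{\cH(\Gamma_{\lam_1})}$ paired again against $g_k$, equivalently $\beta_k$ is the eigenvalue of the (bounded, self-adjoint, positive) operator on $\cH(\Gamma_{\lam_1})$ representing $\Gamma_{\lam_2}$ with respect to $\Gamma_{\lam_1}$; since $\Gamma_{\lam_2} = c\,\Gamma_{\lam_1}$ this operator is $c\cdot\mathrm{Id}$, whose only eigenvalue is $c$, so necessarily $\beta_k = c$ for all $k$ in any admissible representation. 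Hence condition (2) of Theorem \ref{thm:28Aug18c} can never be met, so $\PP_{\lam_1}$ and $\PP_{\lam_2}$ are not equivalent. Finally, invoking the dichotomy for Gaussian (Gelfand-triple) measures — two such measures are either equivalent or mutually singular, as recalled in the introduction via Kakutani — we conclude that $\PP_{\lam_1}$ and $\PP_{\lam_2}$ are mutually singular. $\qed$
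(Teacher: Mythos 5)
Your proposal is correct and follows essentially the same route as the paper's proof: compare the scaled covariances $\Gamma_{\lam_2}=(\lam_2^2/\lam_1^2)\Gamma_{\lam_1}$, expand in a complete orthonormal system of the reproducing kernel Hilbert space so that all $\beta_k$ equal the same constant $c\ne 1$, note that $\sum_k(1-\beta_k^2)$ diverges, and conclude non-equivalence via Theorem \ref{thm:28Aug18c} and then mutual singularity via the Gaussian dichotomy (the paper cites \cite[Theorem 4.3]{J68} for this last step rather than Kakutani). Your additional care --- verifying that the $\beta_k$ are forced to equal $c$ in any admissible representation and that the space is infinite-dimensional --- makes explicit two points the paper leaves implicit, but it is the same argument.
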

\begin{proof}
Fix $\lam_1,\lam_2\in\RR_+$.
Let our measurable space
$(\Omega,\cF)$ be given by
$\Omega=\cS^\prime$
and
$\cF
=\cB(\cS^\prime)$,
while the probability measures be 
$\wt{P}=\PP_{\lam_1}$ 
and 
$P=\PP_{\lam_2}$, 
as obtained in
(\ref{eq:14Oct19h}),
hence the dependence in 
$\cN$.
Let
$T=\cS$
which is a separable metric space
and consider
the random variables
$X(s)
=X_s$
for every
$s\in\cS$.
The stochastic process 
$\{X_s\}_{s\in\cS}$ is Gaussian w.r.t both $\PP_{\lam_1}$
and
$\PP_{\lam_2}$, 
due to Proposition
\ref{prop:16Oct19a}.
Then, using 
(\ref{eq:28Aug18b})
and
(\ref{eq:22Apr19g}), 
the mean value functions are given by
$$\wt{m}(s)=\EE_{\PP_{\lam_1}}\big[ X_s\big]=0
=\EE_{\PP_{\lam_2}}\big[
X_s\big]=m(s),\,
\forall s\in\cS$$
while from
(\ref{eq:23Oct19f}) 
the covariance functions are given by
\begin{multline*}
\wt{\Gamma}(s_1,s_2)=\Gamma_{\lam_1}(s_1,s_2)
=4^{-1}\lam_1^2
\big(\cN(s_1+s_2)-\cN(s_1-s_2)\big),
\\\Gamma(s_1,s_2)=
\Gamma_{\lam_2}(s_1,s_2)=
4^{-1}\lam_2^2
\big(\cN(s_1+s_2)-\cN(s_1-s_2)
\big),\end{multline*}
therefore 
$$\wt{\Gamma}(s_1,s_2)
=\lam_1^2\lam_2^{-2}
\Gamma(s_1,s_2).$$ 
Notice that the space
$\cH(\Gamma)$
consists of real functions on 
$\cS$,
which are continuous w.r.t the Fr\'echet topology
and $\cS$ is a separable metric space, thus we can use
\cite[Lemma 4.10]{H72} 
to justify the fact 
that
$\cH(\Gamma)$ is separable as well. Then, by letting 
$\{g_k\}_{k=1}^\infty$
be a complete orthonormal system in 
$\cH(\Gamma)$, 
one can write
$$\Gamma(s_1,s_2)
=\sum_{k=1}^\infty g_k(s_1)g_k(s_2)
\Longrightarrow 
\wt{\Gamma}(s_1,s_2)
=\sum_{k=1}^\infty 
\beta_kg_k(s_1)g_k(s_2)$$
where 
$\beta_k=
\lam_1^2
\lam_2^{-2}>0$ 
for all 
$k\ge1$. 
Finally, we apply Theorem 
\ref{thm:28Aug18c} 
to conclude that the measures 
$\PP_{\lam_1}$
and 
$\PP_{\lam_2}$ 
are equivalent if and only if 
$$\sum_{k=1}^\infty (1-\beta_k^2)=
\sum_{k=1}^\infty 
\big(1-\lam_1^4\lam_2^{-4}
\big)<\infty,$$
i.e., if and only if 
$\lam_1=\lam_2$.
Thus, whenever
$\lam_1\ne\lam_2$, the measures
$\PP_{\lam_1}$
and
$\PP_{\lam_2}$
are not equivalent, however due to  
\cite[Theorem 4.3]{J68}, 
it means they are mutually singular.
\end{proof} 
\begin{remark}
\label{rem:24Sep20b}
In the proof we use the same ideas as the first and second authors used in the proof of  
\cite[Corollary 7.5]{AJ14}.
\end{remark}
\begin{remark}
\label{rem:24Oct0a}
In
\cite{LPP}
the authors study a question that is quite different than our question, 
yet their condition for having a noise signal being detected or not is very similar to the condition we have; they use 
classical ideas
on convergence of measures and some 
analysis involved with the Radon--Nikodym
derivative, see for example
\cite[Lemma 3.3]{LPP}. 
For related results on Radon--Nikodym--Girsanov for Gaussian Hilbert spaces, see e.g., 
\cite{dVMU,QW}.
\end{remark}
Our present framework is motivated by applications. Indeed, a variety of families of mutually singular systems of path-space measures, and associated RKHSs
(see Theorems
\ref{thm:28Aug18d}
and 
\ref{thm:24Oct19a}) arise in diverse applications. The following includes a number of such distinct contexts: stochastic analysis, stochastic differential equations, analysis on fractals with scaling symmetry, and leaning theory models; see e.g., 
\cite{AJ12,AJ15}, 
\cite{J04}, 
and 
\cite{AMP05,AMP10}. 

To finish this section we present another interesting one-parameter family of CND functions on 
$\cS$ which reproduces a one-parameter family
of mutually singular measures on 
$\cS^\prime$, using the same machinery as in Theorem
\ref{thm:28Aug18d}.
\begin{example}
\label{ex:12Dec20a}
Fix a tempered measure $\mu\in\cM$ 
such that
$\cS\subset L_2(\mu)
\subset\cS^\prime$.
For every 
$u\in[0,1]$, we adapt the notations from Examples 
\ref{ex:13Dec20a} and 
\ref{ex:2Oct20a}, to 
define
$\cN_{\mu,u}:\cS\rightarrow\RR$ by
\begin{align}
\cN_{\mu,u}(s)=u\cN_{\mu}(s)+(1-u)\cN_{\wh{\mu}}(s)
=u\|s\|^2_{\mu}+(1-u)\|\wh{s}\|^2_{\mu},
\end{align}
that is a convex combination of
$\cN_{\mu,1}=\cN_{\mu}$
and
$\cN_{\mu,0}=\cN_{\wh{\mu}}$. 
It is easily seen that $\cN_{\mu,u}$ 
is a CND function which is continuous w.r.t the Fr\'echet topology
and satisfies condition
(\ref{eq:16Oct19b}), so there exists a probability measures 
$\PP_{\mu}^{(u)}$
on 
$\cS^\prime$ such that 
$\EE_{\PP_{\mu}^{(u)}}\big[\exp\{iX_s\}\big]=
\exp\{-\cN_{\mu,u}(s)/2\}$ for every 
$s\in\cS$.
With respect to the measure 
$\PP_{\mu}^{(u)}$,
the stochastic process
$\{X_s\}_{s\in\cS}$
is Gaussian (due to Proposition
\ref{prop:16Oct19a}) with the covariance function being equal to 
\begin{multline*}
\Gamma_{\mu,u}(s_1,s_2)=\EE_{\PP_{\mu}^{(u)}}[X_{s_1}X_{s_2}]=4^{-1}(\cN_{\mu,u}(s_1+s_2)-\cN_{\mu,u}(s_1-s_2))
\\=u\int_{\RR}s_1(t)s_2(t)d\mu(t)+(1-u)\int_{\RR}\wh{s_1}(t)\ol{\wh{s_2}(t)}
d\mu(t),
\end{multline*}
as it follows from formula
(\ref{eq:23Oct19f}) 
with $\lam=1$.
Let $\{h_n\}_{n\ge1}$ be an orthonormal basis of 
$L_2(\mu)$, therefore it is readily checked that 
\begin{align}
\label{eq:15Dec20a}
\Gamma_{\mu,u}(s_1,s_2)=
u\sum_{n=1}^\infty g_n(s_1)g_n(s_2)+(1-u)\sum_{n=1}^\infty
\wh{g_n}(s_1)\ol{\wh{g_n}(s_2)},
\end{align}
where $g_n(s)=\int_{\RR}s(t)h_n(t)d\mu(t)$ and $\wh{g_n}(s)=\int_{\RR}\wh{s}(t)h_n(t)d\mu(t)$ for every $s\in\cS$ and $n\ge 1$.
Under the assumption (which fails for 
$\mu$
being the Lebesgue measure on 
$\RR$) that  the system  
$\{u^{1/2}g_n\}_{n\ge1}\cup\{(1-u)^{1/2}\wh{g_n}\}_{n\ge1}$
is a Parseval frame (not necessarily orthogonal)
in 
$\cH(\Gamma_{\mu,u})$, 
we get that
\begin{multline*}
\Gamma_{\mu,v}(s_1,s_2)
\iffalse
=v\sum_{n=1}^\infty g_n(s_1)g_n(s_2)+(1-v)\sum_{n=1}^\infty\wh{g_n}(s_1)
\ol{\wh{g_n}(s_2)}
\fi
=vu^{-1}\sum_{n=1}^\infty 
u^{1/2}g_n(s_1)u^{1/2}g_n(s_2)+(1-v)(1-u)^{-1}\sum_{n=1}^\infty(1-u)^{1/2}\wh{g_n}(s_1)(1-u)^{1/2}\ol{\wh{g_n}(s_2)}
\end{multline*}
for every $v\in[0,1]$
and
$u\in(0,1)$ such that 
$v\ne u$, 
hence Theorem
\ref{thm:28Aug18c}
yields that the measures 
$\PP_{\mu}^{(u)}$
and
$\PP_{\mu}^{(v)}$ 
are mutually singular, as 
$$\sum_{n=1}^\infty 
\big(1-u^2v^{-2}\big)
+\sum_{n=1}^\infty 
\big(1-(1-v)^2(1-u)^{-2}\big)=\infty.$$
In conclusion, we obtain the one-parameter family 
$\{\PP_{\mu}^{(u)}\}_{0\le u\le 1}$ of mutually singular measures on 
$\cS^\prime$.
\end{example}
\begin{remark}
We used Theorem 
\ref{thm:28Aug18c} (that is Theorem 4.4 in
\cite{J68}) under the assumption that our system is only a Parseval frame system and not necessarily orthogonal complete system, as stated in the theorem itself.
Note that the time of Jorsboe's paper predates much later systematic studies of frame systems, i.e., the study of varieties of non-orthogonal 
expansions, which play an important rule in signal processing, spectral theory and wavelets theory;
see 
\cite{J06,HJW19,HJW19b}.
\end{remark}
\begin{remark}
As 
$\exp\{-\cN_{\mu,u}(s)/2\}=\exp\{-u\cN_{\mu}(s)/2\}\exp\{-(1-u)
\wh{\cN}_{\mu}(s)/2\}$
and in view of
(\ref{eq:14Oct19h}),
one can think of the space
$L_2(\cS^\prime,\PP_{\mu}^{(u)})$
as the tensor product 
$$L_s(\cS^\prime,\PP_{\mu}^{(u)})\cong L_2(\cS^\prime,\PP_{\mu,\sqrt{u}})
\otimes
L_2(\cS^\prime,\PP_{\wh{\mu},\sqrt{1-u}}),$$
where $\PP_{\mu,\sqrt{u}}$ and
$\PP_{\wh{\mu},\sqrt{1-u}}$
are the probability measures on 
$\cS^\prime$ obtained from 
(\ref{eq:14Oct19h}) by considering the initial measures $\PP_{\mu}$
and
$\PP_{\wh{\mu}}$
(instead of $\PP$), with the scalars 
$\lam=\sqrt{u}$
and
$\lam=\sqrt{1-u}$, respectively.  
\end{remark}
\iffalse
\begin{remark}
As the condition for the equivalence of the measures is expressed via...
$(\beta_k)$.. 
\end{remark}
\fi
\section{Isometric Isomorphisms and Intertwining Operators}
\label{sec:IsomIsom}
\subsection{The symmetric Fock space}
\label{subsec:fock}
In view of 
(\ref{eq:5Dec20a})
and
(\ref{eq:22Apr19f}), 
the mapping
$\psi$
on $\cH_0$ that is defined by
$\psi(\varphi_s)=X_s$,
satisfies
$$\|\psi(\varphi_s)\|^2_{L_2
(\cS^\prime,\PP_{\lam})}=\lam^2\cN(s)
=\lam^2
\|\varphi_s\|_{\cH}^2=
\|\varphi_s\|^2_{\cH_{\lam}}$$
for every 
$\lam\in\RR_+$
and hence --- as $\cH_0$ 
is a pre Hilbert space w.r.t the inner product given in
(\ref{eq:4Dec20c}), which is completed to the Hilbert space $\cH$ --- can be (uniquely) 
extended to an isometry between the Hilbert spaces 
$$\psi_{\lam}:\cH_{\lam}
\rightarrow L_2(\cS^\prime,
\PP_{\lam})$$
such that 
$\|\psi_{\lam}(h)\|_{L_2
(\cS^\prime,\PP_{\lam})}
=\|h\|_{\cH_{\lam}}$ 
for every 
$h\in \cH_{\lam}$.
%%%
We proceed by showing that the Hilbert space 
$\cH_{\lam}$ has an important 
role in our analysis of the 
$\lam-$white noise space, that is that 
$\Gamma_{sym}(\cH_{\lam})$ is isometrically
isomorphic to $L_2(\cS^\prime,\PP_{\lam})$ by an explicit transformation
given below. 
To do that we have to use the following lemma, which generalizes
a result from Hida's book \cite{H80} 
and will be used again later
in the
proof of Theorem
\ref{thm:24Oct19a}. 
\begin{lemma}
\label{lem:3May20a}
Let 
$\cN:\cS\rightarrow\RR$
be a CND function, that is
continuous w.r.t the Fr\'echet topology
and satisfies condition
(\ref{eq:16Oct19b}).
Then the set
$span\big(\exp\{X_s\}:s\in\cS\big)$
is dense in
$L_2(\cS^\prime,\PP_{\lam})$, 
for every $\lam\in\RR_+$.
\end{lemma}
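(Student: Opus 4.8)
The plan is to establish density by showing that any $f \in L_2(\cS^\prime,\PP_\lambda)$ orthogonal to all $\exp\{X_s\}$, $s\in\cS$, must vanish. First I would observe that since $X_s$ is Gaussian with variance $\|\varphi_s\|^2_{\cH_\lambda}$ (and all its exponential moments are finite by the moment formulas \eqref{eq:28Aug18b}--\eqref{eq:22Apr19g}, or directly from the Laplace-transform identity $\EE_{\PP_\lambda}[\exp\{t X_s\}] = \exp\{t^2\|\varphi_s\|^2_{\cH_\lambda}/2\}$), the vectors $\exp\{X_s\}$ genuinely lie in $L_2(\cS^\prime,\PP_\lambda)$, and moreover so does $\exp\{z X_s\}$ for every complex $z$. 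The key point is that for fixed $f\in L_2$, the function $z \mapsto \int_{\cS^\prime} f(\omega)\,\exp\{z X_s(\omega)\}\,d\PP_\lambda(\omega)$ is entire in $z$ (differentiation under the integral sign is justified by the Gaussian tail bounds and dominated convergence on compact subsets of $\CC$). Hence if this integral vanishes for all real $z$ of a neighborhood of $0$ — in particular at $z=1$ applied to $s$ replaced by $ts$, giving vanishing at $z=t$ for all real $t$ — it vanishes for all $z\in\CC$, and in particular at $z = i\alpha$ for all $\alpha\in\RR$.

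The second step replaces $s$ by a finite linear combination: since $X_{\sum \alpha_j s_j} = \sum \alpha_j X_{s_j}$, the span of $\{\exp\{X_s\}\}$ actually contains $\exp\{\sum_j \alpha_j X_{s_j}\}$ for all real $\alpha_j$ and $s_j\in\cS$, and the entirety argument upgrades this to the statement that if $f \perp \exp\{X_s\}$ for all $s$, then
\[
\int_{\cS^\prime} f(\omega)\,\exp\Big\{i\sum_{j=1}^n \alpha_j X_{s_j}(\omega)\Big\}\,d\PP_\lambda(\omega) = 0
\]
for all $n$, all $\alpha_1,\dots,\alpha_n\in\RR$ and all $s_1,\dots,s_n\in\cS$. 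In other words, the (complex) measure $f\,d\PP_\lambda$ on $\cS^\prime$ has vanishing Fourier transform (in the Bochner--Minlos sense) against every cylinder character $\omega\mapsto \exp\{i\langle\omega, \sum\alpha_j s_j\rangle\}$.

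The third step invokes uniqueness. The cylinder $\sigma$-algebra generated by the functionals $\{\langle\cdot, s\rangle : s\in\cS\}$ is all of $\cB(\cS^\prime)$, and a finite complex measure on $\cS^\prime$ whose characteristic functional vanishes identically must be the zero measure — this is exactly the uniqueness half of Bochner--Minlos (Theorem~\ref{thm:26Apr19a}), applied separately to the real and imaginary parts of $f\,d\PP_\lambda$ after splitting into positive and negative variation, or more cleanly by noting that $\int_{\cS^\prime} \overline{g}\, f\,d\PP_\lambda = 0$ for $g$ ranging over the cylinder characters forces $\int_{\cS^\prime} \overline{g}\,f\,d\PP_\lambda = 0$ for all bounded cylinder-measurable $g$ (by taking linear combinations and monotone/bounded convergence, the cylinder characters being total among bounded measurable functions in the weak sense needed), hence $f = 0$ $\PP_\lambda$-a.e. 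Therefore the orthogonal complement of $\mathrm{span}(\exp\{X_s\} : s\in\cS)$ in $L_2(\cS^\prime,\PP_\lambda)$ is trivial, which is the claim.

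The main obstacle I expect is the rigor of the analytic-continuation step: one must be careful that $z\mapsto \exp\{z X_s\}$ is a genuinely $L_2$-valued (indeed $L_2$-holomorphic) function of $z$, which requires the uniform-in-compact-sets integrability estimate $\sup_{|z|\le R}\exp\{\mathrm{Re}(z) X_s\} \le \exp\{R|X_s|\} \in L_2(\PP_\lambda)$, valid precisely because $X_s$ is Gaussian. Once that is in place, Morera or a power-series argument gives entirety and the identity theorem does the rest; the Bochner--Minlos uniqueness in the last step is then a routine measure-theoretic wrap-up. (This is exactly the generalization of the density argument in \cite[Chapter 3]{H80}, where only $\cN(s) = \|s\|^2_{L_2(\RR)}$ was treated, to an arbitrary CND function $\cN$ satisfying \eqref{eq:16Oct19b} — the argument never used anything about $\cN$ beyond Gaussianity of the process and finiteness of exponential moments.)
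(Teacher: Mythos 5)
Your proof is correct, and it is worth noting that the paper never writes out an argument for Lemma~\ref{lem:3May20a}: it only points to a Hermite--polynomial (Wiener chaos) analysis of $L_2(\cS^\prime,\PP_{\lam})$, citing \cite{AJL11,AJ12,AJ15,AJL17}, in which one builds an orthogonal basis out of products of Hermite polynomials in the variables $X_{\xi_i}$ and reaches the exponentials through the Hermite generating function. Your route --- take $f$ orthogonal to the span, continue $z\mapsto\EE_{\PP_{\lam}}[f\exp\{zX_s\}]$ analytically from the real axis (where it vanishes because $\exp\{tX_s\}=\exp\{X_{ts}\}$ and $ts\in\cS$) to $z=i$, and then conclude that the finite complex measure $f\,d\PP_{\lam}$ has vanishing characteristic functional and hence is zero --- is the classical characteristic-function argument (essentially the one in \cite{H80} and in \cite{Janson}), and it buys a self-contained proof that uses only the Gaussianity of $X_s$ supplied by (\ref{eq:16Oct19b}) and the finiteness of exponential moments, with no basis construction at all. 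Two small tightenings: your ``second step'' is vacuous in a pleasant way, since $\sum_j\alpha_jX_{s_j}=X_{\sum_j\alpha_js_j}$ with $\sum_j\alpha_js_j\in\cS$, so vanishing of the characteristic functional at every single $s\in\cS$ is already the full multivariate statement; and in the last step the cleanest formulation is that for any $s_1,\dots,s_n\in\cS$ the pushforward of $f\,d\PP_{\lam}$ under $\omega\mapsto(\langle\omega,s_1\rangle,\dots,\langle\omega,s_n\rangle)$ is a finite complex measure on $\RR^n$ with identically vanishing Fourier transform, hence zero, so $f\,d\PP_{\lam}$ vanishes on all cylinder sets, and a Dynkin ($\pi$--$\lambda$) argument on the cylinder algebra, which generates $\cB(\cS^\prime)$, gives $f=0$ $\PP_{\lam}$-a.e.; this replaces the slightly vague appeal to ``monotone/bounded convergence'' totality of the cylinder characters.
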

A closer analysis on $L_2(\cS^\prime,\PP_{\lam})$, which includes the proof
of the lemma,
requires the Hermite polynomials; see 
(\ref{eq:3Dec20a})
for their precise definition and 
\cite{AJL11,AJ12,AJ15,AJL17}
for more details. 
\begin{theorem}
\label{thm:22Apr19a}
For every 
$\lam\in\RR_+$, 
the spaces
$L_2(\cS^\prime,\PP_{\lam})$ 
and 
$\Gamma_{sym}(\cH_{\lam})$ 
are isometrically isomorphic.
Moreover, an explicit 
isomorphism 
\begin{align}
\label{eq:8May20c}
W_{\lam}:\Gamma_{sym}
(\cH_{\lam})\rightarrow 
L_2(\cS^\prime,\PP_{\lam})
\end{align}
is presented in the proof in 
(\ref{eq:trans1}).
\end{theorem}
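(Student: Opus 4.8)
The plan is to realize $W_{\lam}$ as the Wiener--It\^o--Segal (``second quantization'') isomorphism attached to the first chaos $\psi_{\lam}(\cH_{\lam})\subseteq L_2(\cS^\prime,\PP_{\lam})$. First I would fix an orthonormal basis $\{e_j\}_{j\ge1}$ of $\cH_{\lam}$ and set $Y_j:=\psi_{\lam}(e_j)$. By $(\ref{eq:22Apr19f})$ each $Y_j$ is a unit vector in $L_2(\cS^\prime,\PP_{\lam})$, and since the process $\{X_s\}_{s\in\cS}$ is jointly centered Gaussian w.r.t.\ $\PP_{\lam}$ (Proposition \ref{prop:16Oct19a}) and $\psi_{\lam}$ is the $L_2$-closure of $s\mapsto X_s$, the family $\{Y_j\}_{j\ge1}$ is jointly centered Gaussian; being pairwise orthogonal, the $Y_j$ are therefore independent standard real Gaussians under $\PP_{\lam}$. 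I would then \emph{define} $W_{\lam}$ on the orthonormal basis $\{E_{\alpha}\}_{\alpha\in\ell_0^{\NN}}$ of $\Gamma_{sym}(\cH_{\lam})$ from $(\ref{eq:5Dec20c})$ by
\[
W_{\lam}(E_{\alpha}):=\prod_{j\ge1}\frac{h_{\alpha_j}(Y_j)}{\sqrt{\alpha_j!}}
\]
(a finite product, since $\alpha\in\ell_0^{\NN}$ and $h_0\equiv1$), and extend it by linearity and continuity; this is the transform $(\ref{eq:trans1})$ asserted in the statement.

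\emph{Isometry.} Because $h_{\alpha_j}(Y_j)$ is $\sigma(Y_j)$-measurable and the $Y_j$ are independent, each $W_{\lam}(E_{\alpha})$ lies in $L_2(\cS^\prime,\PP_{\lam})$, and using the orthogonality relations $\int_{\RR}h_n h_m\,d\mu_1=n!\,\delta_{n,m}$ recalled in Section \ref{sec:background},
\[
\big\langle W_{\lam}(E_{\alpha}),W_{\lam}(E_{\beta})\big\rangle_{L_2(\cS^\prime,\PP_{\lam})}
=\prod_{j\ge1}\frac{1}{\sqrt{\alpha_j!\,\beta_j!}}\,\EE_{\PP_{\lam}}\!\big[h_{\alpha_j}(Y_j)\,h_{\beta_j}(Y_j)\big]
=\prod_{j\ge1}\delta_{\alpha_j,\beta_j}=\delta_{\alpha,\beta}.
\]
Hence $W_{\lam}$ carries one orthonormal basis onto an orthonormal system and extends uniquely to an isometry $\Gamma_{sym}(\cH_{\lam})\to L_2(\cS^\prime,\PP_{\lam})$; in particular its range is a \emph{closed} subspace.

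\emph{Surjectivity.} Here I would pass to exponential (coherent) vectors $e(h)\in\Gamma_{sym}(\cH_{\lam})$, $h\in\cH_{\lam}$, normalized so that $\langle e(h),e(g)\rangle_{\Gamma_{sym}(\cH_{\lam})}=\exp\langle h,g\rangle_{\cH_{\lam}}$; these are total in $\Gamma_{sym}(\cH_{\lam})$, and since $h\mapsto e(h)$ is continuous and $\cH_0=\mathrm{span}\{\varphi_s:s\in\cS\}$ is dense in $\cH_{\lam}$, already $\{e(\varphi_s):s\in\cS\}$ is total. Expanding $e(h)$ in the basis $\{E_{\alpha}\}$, applying $W_{\lam}$, and summing the Hermite generating function $\sum_{n\ge0}\frac{t^n}{n!}h_n(x)=\exp\{tx-\tfrac12 t^2\}$ in each variable $Y_j$ (legitimate by independence), one obtains the Wick exponential
\[
W_{\lam}\big(e(h)\big)=\exp\Big\{X_h-\tfrac12\|h\|^2_{\cH_{\lam}}\Big\},\qquad X_h:=\psi_{\lam}(h).
\]
In particular $e^{X_s}=e^{\frac12\|\varphi_s\|^2_{\cH_{\lam}}}\,W_{\lam}\big(e(\varphi_s)\big)$ lies in the range of $W_{\lam}$ for every $s\in\cS$, so that range contains $\mathrm{span}\{e^{X_s}:s\in\cS\}$, which is dense in $L_2(\cS^\prime,\PP_{\lam})$ by Lemma \ref{lem:3May20a}. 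As the range is closed, $W_{\lam}$ is onto, proving the theorem.

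I expect the substantive points to be confined to the last step: (i) justifying that exponential vectors built from the dense subset $\cH_0\subset\cH_{\lam}$ are already total in $\Gamma_{sym}(\cH_{\lam})$ --- the usual argument recovers the tensor powers $h^{\otimes k}$ from finite difference quotients of $r\mapsto e(rh)$ and then invokes continuity of $h\mapsto e(h)$; and (ii) identifying $W_{\lam}(e(h))$ with the Wick exponential $\exp\{X_h-\tfrac12\|h\|^2_{\cH_{\lam}}\}$, for which the independence of the $Y_j$ and the Hermite generating function are the essential inputs. Everything else --- the isometry computation, and the bookkeeping with the basis $(\ref{eq:5Dec20c})$ and the moment formulas $(\ref{eq:28Aug18b})$--$(\ref{eq:22Apr19g})$ --- is routine. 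One could instead prove surjectivity by showing directly that $\{W_{\lam}(E_{\alpha})\}$ is a \emph{complete} orthonormal system, via the tensor-product factorization $L_2(\cS^\prime,\PP_{\lam})\cong\bigotimes_j L_2(\RR,\mu_1)$ coming from the independence of the $Y_j$ together with the completeness of the normalized Hermite polynomials in $L_2(\RR,\mu_1)$; the route through Lemma \ref{lem:3May20a} is shorter and dovetails with the later use of that lemma in Theorem \ref{thm:24Oct19a}.
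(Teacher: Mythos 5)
Your proposal is correct, and it reaches the theorem by a genuinely different route than the paper. The paper defines $W_{\lam}$ directly on exponential vectors by $\varepsilon(\varphi_s)\mapsto\exp\{X_s/\sqrt{2}\}$ (formula (\ref{eq:trans1})), verifies $\|W_{\lam}(\varepsilon(\varphi_s))\|_{L_2(\cS^\prime,\PP_{\lam})}=\|\varepsilon(\varphi_s)\|_{\Gamma_{sym}(\cH_{\lam})}$ from the moment formulas (\ref{eq:28Aug18b})--(\ref{eq:22Apr19g}), extends by density of $\mathrm{span}\{\varepsilon(\varphi_s)\}$, and gets surjectivity from Lemma \ref{lem:3May20a}, exactly as in your last step; you instead build the Wiener--It\^o--Segal isomorphism on the orthonormal basis (\ref{eq:5Dec20c}), sending $E_\alpha$ to products of Hermite polynomials in the independent standard Gaussians $Y_j=\psi_{\lam}(e_j)$, and then recover the action on exponential vectors from the Hermite generating function. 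Two comments on the comparison. First, the explicit map you obtain is not (\ref{eq:trans1}) but its Wick-normalized variant $\varepsilon(\varphi_s)\mapsto\exp\{X_s-\tfrac12\|\varphi_s\|^2_{\cH_{\lam}}\}$, i.e.\ the operator $\wt{W}_{\lam}$ recorded in the remark immediately after the theorem; this fully proves the statement (an explicit isometric isomorphism exists), it just is not the particular formula the ``moreover'' clause points to. Second, your route buys a real advantage: carrying an orthonormal basis onto an orthonormal system gives well-definedness of the linear extension and preservation of \emph{all} inner products at once, whereas checking only the norms of the single generators $\varepsilon(\varphi_s)$ leaves the cross terms to be confirmed, and for the unnormalized map (\ref{eq:trans1}) they in fact do not match (pairing with $\varepsilon(\varphi_0)$, whose image is the constant function $1$, gives $\exp\{\|\varphi_s\|^2_{\cH_{\lam}}/4\}$ on the $L_2$ side against $1$ on the Fock side), so the normalization carried by your Wick exponential is precisely what makes the extension isometric. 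The extra inputs your argument needs --- joint Gaussianity and hence independence of the $Y_j$ (Proposition \ref{prop:16Oct19a}, the standing assumption (\ref{eq:16Oct19b}), and the fact that $L_2$-limits of jointly Gaussian variables are jointly Gaussian), plus a routine limiting step when summing the generating function for $h=\varphi_s$ with infinitely many nonzero coordinates --- are standard and you flag them correctly; note also that the totality of $\{\varepsilon(\varphi_s)\}$ is not actually needed for your surjectivity step, since it suffices that each $e^{X_s}$ lies in the closed range and that $\mathrm{span}\{e^{X_s}:s\in\cS\}$ is dense by Lemma \ref{lem:3May20a}.
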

Before the proof, we recall some facts on the symmetric Fock space (cf. Section
\ref{sec:background}); 
for more supplementary facts see
\cite[Chapter 3-4]{Janson}. 
It is well known that the symmetric
Fock space
$\Gamma_{sym}(\cH_{\lam})$ 
is generated by the set
$span\big(\varepsilon(h):
h\in\cH\big)$, where 
$$\varepsilon(h):=
\sum_{n=0}^\infty
\frac{h^{\otimes n}}{\sqrt{n!}}.$$
Moreover, $\exp\{\langle h_1,
h_2
\rangle_{\cH_\lam}\}=\langle 
\varepsilon(h_1),
\varepsilon(h_2)
\rangle_{\Gamma_{sym}
(\cH_\lam)}$ 
for every 
$h_1,h_2\in\cH$
and in particular 
\begin{align}
\label{eq:14Oct19a1}
\| \varepsilon(h)\|_{
\Gamma_{sym}
(\cH_\lam)}^2
=\exp\{\|h\|^2_{\cH_\lam}\},
\quad\forall h\in\cH_\lam.
\end{align}
\begin{proof} 
As  
$\cH_0=span\big(\varphi_s:
s\in\cS\big)$ 
is dense in ($\cH$ and hence in)
$\cH_{\lam}$, 
we know that
$span\big(\varepsilon(\varphi_s):s\in\cS\big)$
is dense in 
$span\big(\varepsilon(h):h\in\cH_{\lam}\big)$
and hence
the symmetric Fock space
$\Gamma_{sym}(\cH_{\lam})$ is generated by 
$span\big(\varepsilon(\varphi_s):s\in\cS\big)$.
Therefore in order to define a mapping on 
$\Gamma_{sym}(\cH_{\lam})$, it is enough to define the mapping on these generators from
$\{\varepsilon(\varphi_s): 
s\in\cS\}$.
For any 
$s\in\cS$, 
define
\begin{align}
\label{eq:trans1}
W_{\lam}(\varepsilon(\varphi_s))
=\sum_{n=0}^\infty
(n!)^{-1}2^{-n/2}
X_{s}^n
=\exp\big\{X_s/\sqrt{2}\big\}.
\end{align}
Thus,
we have
\begin{multline*}
\big\|W_{\lam}(\varepsilon(\varphi_s))\big\|_
{L_2(\cS^\prime,\PP_{\lam})}^2
=\big\langle\exp\{X_s/\sqrt{2}\},
\exp\{X_s/\sqrt{2}\}
\big\rangle_{L_2(\cS^\prime,\PP_{\lam})}
=\int_{\cS^\prime}
\exp\{\sqrt{2}X_s(\omega)\}
d\PP_{\lam}(\omega)
\\=\sum_{n=0}^\infty
(n!)^{-1}2^{n/2}
\int_{\cS^\prime}X_s(\omega)^nd\PP_{\lam}(\omega)
=\sum_{m=0}^\infty
((2m)!)^{-1}2^m
\int_{\cS^\prime}X_s(\omega)^{2m}
d\PP_{\lam}(\omega)
\\+\sum_{m=0}^\infty
((2m+1)!)^{-1}2^{m+1/2}
\int_{\cS^\prime}
X_s(\omega)^{2m+1}d\PP_\lam(\omega),
\end{multline*}
and by applying equations 
(\ref{eq:28Aug18b})
and
(\ref{eq:22Apr19g}), we get 
\begin{align*}
\big\|W_{\lam}(\varepsilon(\varphi_s))\big\|_
{L_2(\cS^\prime,\PP_{\lam})}^2
=\sum_{m=0}^\infty 
((2m)!)^{-1}2^m(2m-1)!!
\|\varphi_s\|^{2m}_{\cH_\lam}
=\sum_{m=0}^\infty
(m!)^{-1}\|
\varphi_s\|_{\cH_\lam}^{2m}
\\=\exp\{\|\varphi_s\|_{\cH_\lam}^2\}
=\| \varepsilon(\varphi_s)\|^2_{\Gamma_{sym}(\cH_\lam)},
\end{align*}
when the last equality is due to 
(\ref{eq:14Oct19a1}).
Therefore for every $s\in\cS$, we have 
$$W_{\lam}(\varepsilon(\varphi_s))\in 
L_2(\cS^\prime,\PP_{\lam})
\text{ and } 
\|W_{\lam}(\varepsilon(\varphi_s))\|_
{L_2(\cS^\prime,\PP_{\lam})}
=
\|\varepsilon(\varphi_s)\|_{\Gamma_{sym}
(\cH_{\lam})}.$$
Next, as the set
$span\big(\varepsilon(\varphi_s)
:s\in\cS\big)$ generates 
$\Gamma_{sym}(\cH_{\lam})$, the mapping 
$W_{\lam}$ 
can be extended (uniquely) to a mapping
$W_{\lam}:\Gamma_{sym}(\cH_{\lam})
\rightarrow L_2(\cS^\prime,\PP_{\lam})$.
Finally, the extended mapping is onto 
$L_2(\cS^\prime,\PP_{\lam})$, 
as we know from Lemma 
\ref{lem:3May20a} that the set 
$span\big(\exp\{X_{s/\sqrt{2}}\}:
s\in \cS\big)=span
\big( \exp\{ X_s\}:
s\in\cS\big)$
generates the space $L_2(\cS^\prime,\PP_{\lam})$.
\end{proof} 
\begin{remark}
Another way to obtain an isometric isomorphism between 
$\Gamma_{sym}(\cH_{\lam})$
and
$L_2(\cS^\prime,\PP_{\lam})$,
as in 
\cite{JT16},
is by considering the mapping
$$\wt{W}_{\lam}(\varepsilon(\varphi_s)):
=\exp\big\{ X_s-
\|\varphi_s\|^2_{\cH_\lam}/2\big\},$$
while it is easily seen that 
$\big\|\wt{W}_{\lam}(\varepsilon(\varphi_s))
\big\|_{L_2(\cS^\prime,\PP_{\lam})}=\|\varepsilon(\varphi_s)\|_{\Gamma_{sym}(\cH_{\lam})}$
for every
$s\in\cS.$
\end{remark}
Not only that we have an isometric isomorphism
$W_{\lam}$ 
between $\Gamma_{sym}(\cH_{\lam})$
and
$L_2(\cS^\prime,\PP_{\lam})$,
for each 
$\lam\in\RR_+$, 
but we can also learn that some operators on 
one side become interesting operators on the other side.
The annihilation and creation operators on the symmetric Fock space (of 
$\cH_{\lam}$) 
are well studied and their relations and connections to physics, as well
as their geometric description, are well known; see for example
\cite[Chapter 6.3]{GlJa} 
and 
\cite{F89}.
On the other hand we can understand what they become after we use the
intertwining mapping 
$W_{\lam}$,
and then get a corresponding system of operators on 
$L_2(\cS^\prime,\PP_{\lam})$. 

The following proposition admits a nice relation between the stochastic properties in $L_2(\cS^\prime,\PP_{\lam})$ and the operator theoretic properties in
$\Gamma_{sym}(\cH_{\lam})$.
\begin{proposition}
The operator 
$W_{\lam}$
intertwines the pair of operators
$(\fa_{\lam},\fa_{\lam}^*)$
on $\Gamma_{sym}(\cH_{\lam})$ 
and the pair of operators 
$(\fd_{\lam},\fm_{\lam})$ 
on 
$L_2(\cS^\prime,\PP_{\lam})$, 
i.e.,
\begin{align}
\label{eq:24May20a}
\mathfrak{d}_{\lam}(h)=W_{\lam}\fa_{\lam}(h)W_{\lam}^*
\text{ and }
\fm_{\lam}(h)=W_{\lam}\fa_{\lam}(h)^*
W_{\lam}^*,\,\forall h\in\cH_{\lam}.
\end{align}
\end{proposition}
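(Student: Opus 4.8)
The plan is to reduce both identities in (\ref{eq:24May20a}) to a short computation on the dense span $\cD:=span\big(\varepsilon(\varphi_s):s\in\cS\big)\subseteq\Gamma_{sym}(\cH_\lam)$ --- this is exactly the spanning set on which $W_\lam$ was built in the proof of Theorem~\ref{thm:22Apr19a}, it is a core for both $\fa_\lam(h)$ and $\fa_\lam(h)^*$, and its image $W_\lam\cD=span\big(\exp\{X_s/\sqrt2\}:s\in\cS\big)$ is dense in $L_2(\cS^\prime,\PP_\lam)$ by Lemma~\ref{lem:3May20a}. If $\fd_\lam(h)$ and $\fm_\lam(h)$ are taken to be the operators obtained by conjugating $\fa_\lam(h)$ and $\fa_\lam(h)^*$ by $W_\lam$, then (\ref{eq:24May20a}) is simply the definition, and what has to be checked is that these formulas define genuine densely defined closed operators on $L_2(\cS^\prime,\PP_\lam)$ agreeing with the intrinsic description of $\fd_\lam$ and $\fm_\lam$ on the white-noise side.

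On $\cD$ the computation uses only the standard action of the Fock operators on coherent vectors, $\fa_\lam(h)\,\varepsilon(k)=\langle k,h\rangle_{\cH_\lam}\,\varepsilon(k)$ and $\fa_\lam(h)^*\,\varepsilon(k)=\tfrac{d}{dt}\big|_{t=0}\varepsilon(k+th)$, together with the defining formula (\ref{eq:trans1}), $W_\lam\varepsilon(\varphi_s)=\exp\{X_s/\sqrt2\}$, and $W_\lam^*\exp\{X_s/\sqrt2\}=\varepsilon(\varphi_s)$ on $W_\lam\cD$. For the annihilation operator this yields $W_\lam\fa_\lam(h)W_\lam^*\bigl(\exp\{X_s/\sqrt2\}\bigr)=\langle\varphi_s,h\rangle_{\cH_\lam}\exp\{X_s/\sqrt2\}$; since $\{X_h\}_{h\in\cH_\lam}$ is the centered Gaussian field with covariance $\langle h_1,h_2\rangle_{\cH_\lam}$ and $\exp\{X_s/\sqrt2\}$ is the cylinder functional $e^{X_{\varphi_s}/\sqrt2}$, the right-hand side is a fixed constant multiple --- the constant coming from the $1/\sqrt2$ in (\ref{eq:trans1}) --- of the directional derivative of $\exp\{X_s/\sqrt2\}$ along $h$, i.e. of $\fd_\lam(h)$ applied to it. For the creation operator the cleanest route, once the first identity is known on $\cD$, is to invoke unitarity of $W_\lam$ and the fact that $\fa_\lam(h)^*$ is the Hilbert-space adjoint of $\fa_\lam(h)$: then $\fm_\lam(h)=W_\lam\fa_\lam(h)^*W_\lam^*=\bigl(W_\lam\fa_\lam(h)W_\lam^*\bigr)^*=\fd_\lam(h)^*$, and $\fd_\lam(h)^*$ is recognized as the corresponding multiplication-type (divergence) operator on $L_2(\cS^\prime,\PP_\lam)$; the same operator can also be read off directly from $W_\lam\tfrac{d}{dt}\big|_{t=0}\varepsilon(\varphi_s+th)$ using (\ref{eq:trans1}) and the It\^o isometry $\psi_\lam$.

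The part that needs real care --- and where I would concentrate the argument --- is the unbounded-operator bookkeeping, not the algebra. All four operators are unbounded, so (\ref{eq:24May20a}) must be read as an identity of densely defined closed operators: one has to check that $\cD$ is a core for $\fa_\lam(h)$ and for $\fa_\lam(h)^*$ (standard in the Fock representation, e.g.\ via the number operator), that $W_\lam$ carries this core onto a core for $\fd_\lam(h)$ and $\fm_\lam(h)$, and that agreement on the core propagates to the closed operators. One should also keep track of the $\sqrt2$-normalizations produced by (\ref{eq:trans1}) so that the transported operators coincide exactly with $\fd_\lam$ and $\fm_\lam$ as defined on the $L_2$ side. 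Granting this, the verification reduces to the single coherent-vector computation above for each of the two identities in (\ref{eq:24May20a}).
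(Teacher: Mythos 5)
Note first that the paper itself does not give an argument for this proposition: it fixes the meaning of $\fm_\lam(h)$ (multiplication by $X_h$) and $\fd_\lam(h)$ (the Malliavin derivative), and then simply cites \cite[Theorem 3.30 \& Section 4]{JT16}. So your plan of a direct coherent-vector verification is a genuinely different, more self-contained route --- but as written it has a real gap, and it sits exactly at the point you dismiss as ``bookkeeping''. Carry your own computation through with the transform (\ref{eq:trans1}): granting Theorem \ref{thm:22Apr19a}, so that $W_\lam$ is unitary with $W_\lam\varepsilon(\varphi_s)=\exp\{X_s/\sqrt2\}$ and $W_\lam^*\exp\{X_s/\sqrt2\}=\varepsilon(\varphi_s)$, one finds
\begin{align*}
W_\lam\fa_\lam(h)W_\lam^*\,e^{X_s/\sqrt2}=\langle\varphi_s,h\rangle_{\cH_\lam}\,e^{X_s/\sqrt2},
\qquad
W_\lam\fa_\lam(h)^*W_\lam^*\,e^{X_s/\sqrt2}=\tfrac{1}{\sqrt2}\,X_h\,e^{X_s/\sqrt2},
\end{align*}
while the directional (Malliavin) derivative gives $\fd_\lam(h)\,e^{X_s/\sqrt2}=\tfrac{1}{\sqrt2}\langle\varphi_s,h\rangle_{\cH_\lam}e^{X_s/\sqrt2}$ and $\fm_\lam(h)\,e^{X_s/\sqrt2}=X_h\,e^{X_s/\sqrt2}$. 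Thus the two conjugated operators come out as $\sqrt2\,\fd_\lam(h)$ and $\tfrac{1}{\sqrt2}\fm_\lam(h)$: the mismatches are \emph{reciprocal}, so no rescaling of $h$, of $\fd_\lam$, or of $\fm_\lam$ removes both at once, and (\ref{eq:24May20a}) cannot be recovered exactly from (\ref{eq:trans1}) by tracking the $\sqrt2$'s. The constants only disappear if one works with the Wick-ordered transform $\wt{W}_\lam(\varepsilon(\varphi_s))=\exp\{X_s-\|\varphi_s\|^2_{\cH_\lam}/2\}$ of the remark following Theorem \ref{thm:22Apr19a}.

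The adjoint step for the second identity is also not sound if the adjoint is the genuine $L_2(\cS^\prime,\PP_\lam)$ adjoint. Even with the Wick-ordered unitary, the conjugate of the creation operator is the divergence (Skorokhod) operator: Gaussian integration by parts gives $\fd_\lam(h)^*F=X_hF-\fd_\lam(h)F$, i.e.\ $\fd_\lam(h)^*=\fm_\lam(h)-\fd_\lam(h)$, not multiplication by $X_h$; multiplication by $X_h$ corresponds to the field operator $\fa_\lam(h)+\fa_\lam(h)^*$. (Relatedly, multiplication by the real variable $X_h$ is symmetric, so the relation $\fd_\lam(h)=\fm_\lam(h)^*$ cannot be the naive $L_2$ adjoint either --- it is the Hermitian duality of \cite{JT16}.) So identifying $W_\lam\fa_\lam(h)^*W_\lam^*$ with $\fm_\lam(h)$ requires precisely that duality/pairing from \cite{JT16}, which your sketch never invokes; together with the core-and-closure bookkeeping you correctly flag, this is the missing input. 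As it stands, the computation you outline, done carefully, contradicts the exact equalities rather than establishing them.
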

Here
$\{\fa_{\lam}(h),\fa_{\lam}^*
(h)\}_{h\in\cH_{\lam}}$
is the Fock representation of annihilation and creations operators on 
$\Gamma_{sym}(\cH_{\lam})$, 
while $\fm_{\lam}(h)$ 
stands for the multiplication operator by
$X_h$ and
$\mathfrak{d}_{\lam}(h)=\fm_{\lam}^*(h)$ 
stands for the abstract infinite dimensional
Malliavin derivative operator, 
both on the space 
$L_2(\cS^\prime,\PP_{\lam})$.
For a proof see 
\cite[Theorem 3.30 \& Section 4]{JT16}.
\begin{remark}
\label{rem:x}
The intertwining operator
$W_{\lam}$ 
is important,
because it makes the connection between two 
problems we study in this paper. 
The first problem is the mutual singularity of the Gaussian measures
$\{\PP_{\lam}\}_{\lam\in\RR_+}$ 
(cf. Subsection \ref{subsec:Jor}), 
while the second is  on representations of the CCR 
algebra being disjoint (cf. Subsection
\ref{subsec:CCR}). 
The link between these two problems is the intertwining operator 
$W_{\lam}$ 
(cf. the proof of Corollary
\ref{cor:23Sep20a}).
\end{remark}
\begin{remark}
\label{rem:3Jan21a}
There is a functor for symmetric Fock spaces, in the sense that
every operator on a one-particle Hilbert space, 
can be lifted to the symmetric Fock space of that Hilbert space.
Moreover, if the operator (on the one-particle) is contractive, 
then the corresponding lifted operator, called the second quantization
is a bounded operator on the symmetric Fock space.

This is applicable for us only when 
$\lam\le1$, when we let 
$S:\cH_{\lam}\rightarrow\cH_{\lam}$
be the operator defined as multiplication by $\lam$
--- which is now a contraction
--- 
thus the second quantization operator 
$\Gamma(S):\Gamma_{sym}(\cH_{sym})
\rightarrow\Gamma_{sym}(\cH_{sym})$, 
given by 
$\Gamma(S)(\varepsilon(h)):=\varepsilon(S(h))$,
is a bounded operator.
\end{remark}
\subsection{Generalized infinite Fourier transform}
\label{subsec:fourier}
In \cite[Chapter 4.3]{H80} 
the author presents an infinite dimensional 
generalized Fourier transform on the space 
$L_2(\cS^\prime,\wt{\PP})$,
where
$\wt{\PP}$ 
is the measure obtained from the Bochner--Minlos theorem
which corresponds in our settings 
to the case we choose the CND function to be
$\cN(s)=\|s\|^2_{L_2}.$
In this subsection we will follow the ideas from 
\cite{H80} 
and generalize some of the results 
to our case,
in which
$\cN:\cS\rightarrow\RR$
is a real CND function that 
is continuous w.r.t the Fr\'echet 
topology and satisfies condition
(\ref{eq:16Oct19b});
by doing so, we cover a much bigger
family of Gaussian processes. 

Another difference between our approach and the one in
\cite{H80} is that we define our generalized Fourier  transform
in a more direct way (cf. equation 
(\ref{eq:19Oct19d})), that is connected to the way we achieved  the Gaussian
process from an application of a Gelfand triple and the Bochner--Minlos theorem.
\\\\
{\em \uline{A reproducing kernel Hilbert space}.}
For every
$\lam\in\RR_+$,
recall that 
$$\cQ_{\lam}(s)=
\exp\big\{-\lam^2\cN(s)/2\big\}$$
is a positive definite function on 
$\cS$, which corresponds to the positive definite kernel
$Q_{\lam}(s_1,s_2)=\cQ_{\lam}(s_1-s_2)$
on
$\cS\times\cS$. 
There exists a (unique) RKHS, 
denoted 
$\cH(Q_{\lam}),$
which consists of real functions on 
$\cS$, where
$Q_{\lam}(\cdot,\cdot)$
is its RK. 
For every
$s\in\cS$, define the function
$Q_{\lam,s}\in
\cH(Q_{\lam})$
by
$$Q_{\lam,s}(\wt{s}):=
Q_{\lam}(s,\wt{s})=\cQ_{\lam}(s-\wt{s})
=\exp\{-\lam^2\cN(s-\wt{s})/2 \},
\forall \wt{s}\in\cS$$
and notice that 
$\cH(Q_{\lam})$ is 
generated by the subspace
$$\Big\{\sum_{j=1}^n c_j 
Q_{\lam,s_j}:
n\in\NN,\,
c_1,\ldots,c_n\in\RR,\,
s_1,\ldots,s_n\in\cS\Big\}.$$ 
Then for every 
$f\in \cH(Q_{\lam})$, we know that 
$\langle f, Q_{\lam,s}\rangle_{\cH(Q_{\lam})}=f(s);$
in particular
$$\langle Q_{\lam,s},Q_{\lam,\wt{s}}
\rangle_{\cH(Q_{\lam})}=Q_{\lam,s}(\wt{s})=Q_{\lam,\wt{s}}(s)=\cQ_{\lam}(s-\wt{s}).$$
{\em \uline{A generalized Fourier transform}.}
For every 
$F\in L_2(\cS^\prime,\PP_{\lam})$, 
we define a real  function
$\cT_{\lam}(F)$ on 
$\cS$, 
by the following rule
\begin{align}
\label{eq:19Oct19d}
(\cT_\lam F)(s):=\EE_{\PP_{\lam}}\big[
F\exp\{iX_s\}\big]=\int_{\cS^\prime}F(\omega)
\exp\{iX_s(\omega)\}d\PP_{\lam}(\omega)=\langle F, \exp\{i X_s\}
\rangle_{L_2(\cS^\prime,\PP_{\lam})}
\end{align} 
The transform 
$\cT_{\lam}$
is our analog of 
the classical Fourier transform, in infinite dimensions.
\begin{lemma}
\label{lem:23Sep20a}
For every 
$F\in
L_2(\cS^\prime,\PP_{\lam})$, 
we have
$\cT_{\lam}(F)\in
\cH(Q_{\lam})$.
\end{lemma}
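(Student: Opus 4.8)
The plan is to recognize $\cT_\lam$ as the pullback of the ``feature map'' $s\mapsto\exp\{iX_s\}$ into $L_2(\cS^\prime,\PP_\lam)$, and then to invoke the standard description of a reproducing kernel Hilbert space as the range of such a pullback. Throughout, $\cH(Q_\lam)$ should be regarded, if necessary, as a space of complex‑valued functions on $\cS$ with reproducing kernel $Q_\lam$, since $\exp\{iX_s\}$ is complex‑valued while $Q_\lam$ is a real kernel.

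First I would record the two elementary facts that make the map work. Put $\Phi(s):=\exp\{iX_s\}$ for $s\in\cS$. Since $|\Phi(s)|\equiv1$ and $\PP_\lam$ is a probability measure, $\Phi(s)\in L_2(\cS^\prime,\PP_\lam)$ with $\|\Phi(s)\|_{L_2(\cS^\prime,\PP_\lam)}=1$; in particular, by Cauchy--Schwarz, $(\cT_\lam F)(s)=\langle F,\Phi(s)\rangle_{L_2(\cS^\prime,\PP_\lam)}$ is a well‑defined number with $|(\cT_\lam F)(s)|\le\|F\|_{L_2(\cS^\prime,\PP_\lam)}$ for every $F$ and $s$. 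Next, using $X_{s_1}-X_{s_2}=X_{s_1-s_2}$ (established at the start of the proof of Proposition~\ref{prop:16Oct19a}), the defining identity (\ref{eq:14Oct19h}), and the definition $Q_\lam(s_1,s_2)=\cQ_\lam(s_1-s_2)$,
\[
\langle\Phi(s_1),\Phi(s_2)\rangle_{L_2(\cS^\prime,\PP_\lam)}
=\int_{\cS^\prime}\exp\{i(X_{s_1}(\omega)-X_{s_2}(\omega))\}\,d\PP_\lam(\omega)
=\cQ_\lam(s_1-s_2)=Q_\lam(s_1,s_2),
\]
so $s\mapsto\Phi(s)$ is a feature map for $Q_\lam$.

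Next I would identify $\cT_\lam$ on a dense subspace. Let $\cG\subseteq L_2(\cS^\prime,\PP_\lam)$ be the closed linear span of $\{\Phi(s):s\in\cS\}$; since each $\Phi(s)\in\cG$, the value $(\cT_\lam F)(s)$ depends only on the orthogonal projection of $F$ onto $\cG$. For a finite combination $F=\sum_{j=1}^n c_j\Phi(s_j)\in\cG$ the identity above gives $(\cT_\lam F)(s)=\sum_{j=1}^n c_j\,Q_\lam(s_j,s)=\big(\sum_{j=1}^n c_j\,Q_{\lam,s_j}\big)(s)$, so $\cT_\lam F=\sum_{j=1}^n c_j Q_{\lam,s_j}\in\cH(Q_\lam)$; moreover $\|\cT_\lam F\|^2_{\cH(Q_\lam)}=\sum_{j,k}c_j\ol{c_k}Q_\lam(s_j,s_k)=\|F\|^2_{L_2(\cS^\prime,\PP_\lam)}$, the two sums agreeing term by term because $\langle Q_{\lam,s_j},Q_{\lam,s_k}\rangle_{\cH(Q_\lam)}=Q_\lam(s_j,s_k)=\langle\Phi(s_j),\Phi(s_k)\rangle_{L_2(\cS^\prime,\PP_\lam)}$. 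Thus $\cT_\lam$ carries the (dense) span of $\{\Phi(s):s\in\cS\}$ isometrically onto the (dense) span of $\{Q_{\lam,s}:s\in\cS\}$ in $\cH(Q_\lam)$.

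Finally I would pass to the limit: this isometry extends uniquely to an isometry $V:\cG\to\cH(Q_\lam)$, and since point evaluation is continuous both on $\cH(Q_\lam)$ (it is $g\mapsto\langle g,Q_{\lam,s}\rangle_{\cH(Q_\lam)}$) and on $L_2(\cS^\prime,\PP_\lam)$ (it is $F\mapsto\langle F,\Phi(s)\rangle_{L_2(\cS^\prime,\PP_\lam)}$), and the two agree on the dense span, one gets $(VF)(s)=(\cT_\lam F)(s)$ for all $F\in\cG$ and all $s\in\cS$; hence $\cT_\lam F=VF\in\cH(Q_\lam)$ for $F\in\cG$, and for general $F$ we have $\cT_\lam F=\cT_\lam(PF)\in\cH(Q_\lam)$ where $P$ is the orthogonal projection onto $\cG$, with the extra bound $\|\cT_\lam F\|_{\cH(Q_\lam)}\le\|F\|_{L_2(\cS^\prime,\PP_\lam)}$. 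I do not expect a genuine obstacle here: the only point requiring any care is the real‑versus‑complex bookkeeping (hence passing to the complexified RKHS), and the only real ``content'' is the feature‑map identity of the first step, which is immediate from Bochner--Minlos; everything after it is the routine RKHS factorization argument.
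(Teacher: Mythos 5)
Your proof is correct, and it rests on exactly the same computation as the paper's, namely the Gram identity $\langle \exp\{iX_{s_1}\},\exp\{iX_{s_2}\}\rangle_{L_2(\cS^\prime,\PP_{\lam})}=\cQ_{\lam}(s_1-s_2)=Q_{\lam}(s_1,s_2)$, equivalently $\big\|\sum_j c_j\exp\{iX_{s_j}\}\big\|_{L_2(\cS^\prime,\PP_{\lam})}=\big\|\sum_j c_jQ_{\lam,s_j}\big\|_{\cH(Q_{\lam})}$; where you differ is only in how membership in $\cH(Q_{\lam})$ is then concluded. The paper applies Aronszajn's criterion directly to an arbitrary $F$: by Cauchy--Schwarz, $\big|\sum_j c_j(\cT_{\lam}F)(s_j)\big|^2\le \|F\|^2\big\|\sum_j c_jQ_{\lam,s_j}\big\|^2_{\cH(Q_{\lam})}$, which is precisely the quadratic-bound characterization of the RKHS. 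You instead build the isometry on the dense span of the exponentials, extend it by continuity, and compose with the orthogonal projection onto their closed span $\cG$, using continuity of point evaluations on both sides to identify the extension with $\cT_{\lam}$. Both are standard packagings; yours is slightly longer but yields a little more for free --- the factorization $\cT_{\lam}=V\circ P$, the bound $\|\cT_{\lam}F\|_{\cH(Q_{\lam})}\le\|F\|_{L_2(\cS^\prime,\PP_{\lam})}$, and in effect the isometry and surjectivity statements that the paper defers to Theorem \ref{thm:24Oct19a} --- while the paper's citation of the Aronszajn criterion keeps the lemma self-contained and short. One bookkeeping remark: you read the last equality in (\ref{eq:19Oct19d}) literally, i.e.\ $(\cT_{\lam}F)(s)=\langle F,\exp\{iX_s\}\rangle$, which with the conjugate-linear inner product (\ref{eq:23Sep20b}) is the integral against $\exp\{-iX_s\}$ rather than $\exp\{iX_s\}$; this only relabels $s\mapsto-s$ (the paper has the same ambiguity), and since $\cQ_{\lam}$ is even it does not affect the conclusion, as your remark on complexifying $\cH(Q_{\lam})$ already implicitly acknowledges.
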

To prove the lemma we use the following general fact 
(see
\cite{Ar50}): 
If 
$H$
is a RKHS of functions on a set 
$X$, 
with the RK
$K(\cdot,\cdot)$,
then $f:X\rightarrow\RR$ 
belongs to 
$H$ if and only if
there exists 
$C_f>0$ 
such that
for every
$n\in\NN,\,
c_1,\ldots,c_n\in\RR$, 
and 
$x_1,\ldots,x_n\in X$ 
we have
$$\Big|
\sum_{j=1}^n c_j f(x_j)
\Big|^2\le C_f\Big\|
\sum_{j=1}^n c_jK_{x_j}\Big\|_{H}^2 $$ 
where
$K_x\in H$ is defined by 
$K_x(\cdot)
=K(x,\cdot)$.
\begin{proof}
Let $F\in L_2(\cS^\prime,\PP_{\lam})$, then for every
$n\in\NN,\,
c_1,\ldots,c_n\in\RR$, and
$s_1,\ldots,s_n\in\cS$, we have
\begin{multline*}
\Big|
\sum_{j=1}^n
c_j\big(\cT_{\lam}(F)\big)(s_j)
\Big|^2=\Big|
\sum_{j=1}^n
c_j\EE_{\PP_{\lam}}
\big[ F\exp\{
iX_{s_j}\}\big]
\Big|^2
=\Big|
\EE_{\PP_{\lam}}\big[
F\sum_{j=1}^n
c_j
\exp
\{iX_{s_j}\}
\big]\Big|^2\\=\Big| 
\Big
\langle F,
\sum_{j=1}^n
c_j\exp\{iX_{s_j}\}
\Big\rangle_{L_2(\cS^\prime,\PP_{\lam})}
\Big|^2
\le
C_F\Big\|
\sum_{j=1}^n
c_j\exp\{iX_{s_j}\}
\Big\|^2_{L_2(\cS^\prime,\PP_{\lam})}
\end{multline*}by the Cauchy-Schwarz inequality,
where
$C_F=\|F\|^2_{L_2(\cS^\prime,
\PP_{\lam})}<\infty$. So,
\begin{multline*}
\Big|
\sum_{j=1}^n
c_j\big(\cT_{\lam}(F)\big)(s_j)
\Big|^2
\le
C_F\sum_{j,k=1}^n
c_jc_k
\EE_{\PP_{\lam}}
\big[\exp\{iX_{s_j-s_k}
\}\big]
=C_F\sum_{j,k=1}^n
c_jc_k
\cQ_{\lam}(s_j-s_k)
\\=C_F
\sum_{j,k=1}^n
c_jc_k
\langle Q_{\lam,s_j},
Q_{\lam,s_k}\rangle_{\cH(Q_{\lam})}
=C_F
\Big\| \sum_{j=1}^n
c_jQ_{\lam,s_j}\Big\|^2_{\cH(Q_{\lam})},
\end{multline*}
as needed, concluding 
$\cT_{\lam}(F)\in
\cH(Q_{\lam})$.
\end{proof}
In view of Lemma
\ref{lem:23Sep20a}, 
the image of 
$\cT_{\lam}$ is contained in $\cH(Q_{\lam})$. 
We next establish several further
properties of the operator
$\cT_{\lam}: L_2(\cS^\prime,
\PP_{\lam})\rightarrow
\cH(Q_{\lam})$
and its adjoint
$\cT_{\lam}^*:\cH(Q_{\lam})
\rightarrow L_2(\cS^\prime,\PP_{\lam})$,
which lead to the conclusion that $\cT_{\lam}$ is an isometric isomorphism.
\begin{theorem}
\label{thm:24Oct19a}
Fix $\lam\in\RR_+$.
\begin{itemize}
\item[(1)] 
For every $s\in\cS$, we have 
$\cT_{\lam}^*(Q_{\lam,s})=\exp\{-iX_s\}$
and
$\cT_{\lam}(\exp\{-iX_s\})=Q_{\lam,s}.$ 
\item[(2)]
$\cT_{\lam}^*$ 
is an isometry, i.e.,
$\|\cT_{\lam}^*(\psi)\|_{L_2(\cS^\prime,
\PP_{\lam})}=\|\psi\|_{\cH(Q_{\lam})}$
for every
$\psi\in \cH(Q_{\lam})$.
\item[(3)]
$\cT_{\lam}$ 
is onto $\cH(Q_{\lam})$, with 
$\ker(\cT_{\lam})=\{0\}.$ 
\end{itemize}
As both $\cT$ 
and
$\cT^*$
are isomorphic, we say that $\cT$
defines an isometric isomorphism from 
$L_2(\cS^\prime,\PP_{\lam})$
onto 
$\cH(Q_{\lam})$.
\end{theorem}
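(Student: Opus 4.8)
The plan is to verify the three items in order, since (2) and (3) will essentially follow once (1) is in hand and the generating properties of the relevant dense spans are exploited. For item (1), I would compute $\cT_\lam(\exp\{-iX_s\})$ directly from the definition \bref{eq:19Oct19d}: for $\wt s\in\cS$,
\[
\big(\cT_\lam(\exp\{-iX_s\})\big)(\wt s)=\EE_{\PP_\lam}\big[\exp\{-iX_s\}\exp\{iX_{\wt s}\}\big]=\EE_{\PP_\lam}\big[\exp\{iX_{\wt s-s}\}\big]=\cQ_\lam(\wt s-s)=Q_{\lam,s}(\wt s),
\]
using $X_{\wt s}-X_s=X_{\wt s-s}$ and \bref{eq:14Oct19h}. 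This gives $\cT_\lam(\exp\{-iX_s\})=Q_{\lam,s}$. For the adjoint identity, I would test $\cT_\lam^*(Q_{\lam,s})$ against an arbitrary $F\in L_2(\cS^\prime,\PP_\lam)$: by the reproducing property in $\cH(Q_\lam)$ and the definition of $\cT_\lam$,
\[
\big\langle F,\cT_\lam^*(Q_{\lam,s})\big\rangle_{L_2(\cS^\prime,\PP_\lam)}=\big\langle \cT_\lam F,Q_{\lam,s}\big\rangle_{\cH(Q_\lam)}=(\cT_\lam F)(s)=\EE_{\PP_\lam}\big[F\exp\{iX_s\}\big]=\big\langle F,\exp\{-iX_s\}\big\rangle_{L_2(\cS^\prime,\PP_\lam)},
\]
and since $F$ is arbitrary, $\cT_\lam^*(Q_{\lam,s})=\exp\{-iX_s\}$.

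For item (2), I would use that the functions $\{Q_{\lam,s}:s\in\cS\}$ span a dense subspace of $\cH(Q_\lam)$. On a finite combination $\psi=\sum_j c_j Q_{\lam,s_j}$ with $c_j\in\RR$, item (1) and linearity give $\cT_\lam^*(\psi)=\sum_j c_j\exp\{-iX_{s_j}\}$, hence
\[
\big\|\cT_\lam^*(\psi)\big\|_{L_2(\cS^\prime,\PP_\lam)}^2=\sum_{j,k}c_jc_k\,\EE_{\PP_\lam}\big[\exp\{-iX_{s_j}\}\overline{\exp\{-iX_{s_k}\}}\big]=\sum_{j,k}c_jc_k\,\cQ_\lam(s_j-s_k)=\sum_{j,k}c_jc_k\langle Q_{\lam,s_j},Q_{\lam,s_k}\rangle_{\cH(Q_\lam)}=\|\psi\|_{\cH(Q_\lam)}^2,
\]
where I use $\overline{\exp\{-iX_{s_k}\}}=\exp\{iX_{s_k}\}$ and the reproducing property. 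By density and continuity this extends to all of $\cH(Q_\lam)$, so $\cT_\lam^*$ is an isometry.

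For item (3), the standard operator-theoretic dictionary applies: an isometry $\cT_\lam^*$ has $\ker(\cT_\lam^*)=\{0\}$, so $\overline{\operatorname{ran}(\cT_\lam)}=(\ker\cT_\lam^*)^\perp=\cH(Q_\lam)$, i.e.\ the range of $\cT_\lam$ is dense; moreover, since $\cT_\lam^*$ is a (not-necessarily-surjective) isometry, $\cT_\lam\cT_\lam^*$ is the orthogonal projection onto $\overline{\operatorname{ran}(\cT_\lam)}=\cH(Q_\lam)$, hence $\cT_\lam\cT_\lam^*=I$ on $\cH(Q_\lam)$, which forces $\cT_\lam$ to be onto. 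Finally, $\ker(\cT_\lam)=(\operatorname{ran}\cT_\lam^*)^\perp$; but $\operatorname{ran}\cT_\lam^*$ contains every $\exp\{-iX_s\}$ by item (1), and by Lemma \ref{lem:3May20a} (applied to $\overline{X_s}=X_{-s}$, noting $\exp\{-iX_s\}$ lies in the closed span of exponentials $\exp\{X_t\}$ analytically continued, or more carefully: the complex span of $\{\exp\{iX_s\}:s\in\cS\}$ is dense in $L_2(\cS^\prime,\PP_\lam)$) this span is dense, so $\ker(\cT_\lam)=\{0\}$. I expect the main obstacle to be this last density claim: Lemma \ref{lem:3May20a} is stated for real exponentials $\exp\{X_s\}$, and one needs the analogous density of the span of the bounded functions $\exp\{iX_s\}$ — I would handle this either by adapting the Hermite-polynomial argument underlying Lemma \ref{lem:3May20a} to the imaginary-exponential case, or by invoking that a bounded function $F$ orthogonal to all $\exp\{iX_s\}$ has vanishing "Fourier transform" $\cT_\lam F\equiv 0$ and then arguing, via the injectivity just established for $\cT_\lam^*$ together with $\cT_\lam\cT_\lam^*=I$, that this forces $F=0$; care is needed not to make the argument circular, so the cleanest route is to prove the density of $\operatorname{span}\{\exp\{iX_s\}\}$ independently, mirroring the proof of Lemma \ref{lem:3May20a}.
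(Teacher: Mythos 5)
Your proposal is correct and follows essentially the same route as the paper: the same direct computation and adjoint-testing for (1), the same density-of-finite-combinations norm computation for (2) (the paper also records the equivalent identity $\cT_{\lam}\cT_{\lam}^*=\mathcal{I}_{\cH(Q_{\lam})}$, which you use for surjectivity), and the same orthogonality argument for $\ker(\cT_{\lam})=\{0\}$. Your caveat about needing density of $\mathrm{span}\{\exp\{iX_s\}\}$ rather than of the real exponentials is a fair observation, but the paper resolves it simply by invoking Lemma \ref{lem:3May20a} at that step, so no further machinery is required beyond what you already indicate.
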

\begin{proof}
1. Fix
$s\in\cS$.
For every
$F\in L_2(\cS^\prime,\PP_{\lam})$, we have
\begin{align*}
\langle 
\cT_{\lam}(F),Q_{\lam,s}\rangle_{\cH(Q_{\lam})}
=\big(\cT_{\lam}(F)\big)(s)
=\EE_{\PP_{\lam}}
\big[ F\exp\{
iX_s\}\big]
=
\langle F,\exp\{
-iX_s\}\rangle_{L_2(\cS^\prime,\PP_{\lam})}
\end{align*}
which means that 
$\cT_{\lam}^*(Q_{\lam,s})
=\exp\{-iX_s\}$. 
Also, by the definition of 
$\cT_{\lam}$, 
for every 
$\wt{s}\in\cS$ we have
$$\big(\cT_{\lam}(\exp\{-iX_s\})\big)(\wt{s})
=\EE_{\PP_{\lam}}\big[ \exp\{iX_{\wt{s}-s}\}\big]
=\cQ_{\lam}(\wt{s}-s)=Q_{\lam,s}(\wt{s}),$$
i.e., $\cT_{\lam}(\exp\{-iX_s\})=Q_{\lam,s}$.
\\\\2.
For every 
$n\in\NN,\,
c_1,\ldots,c_n\in\RR$ and 
$s_1,\ldots,s_n\in \cS$, we have
\begin{multline*}
\Big\|\cT_{\lam}^*\Big(\sum_{j=1}^n
c_j Q_{\lam,s_j}\Big)\Big\|_{L_2(
\cS^\prime,\PP_{\lam})}^2
=\Big\|\sum_{j=1}^n
c_j\exp\{iX_{s_j}\}\Big\|_{L_2(\cS^\prime,
\PP_{\lam})}^2
\\=\sum_{j,k=1}^n
c_jc_k\cQ_{\lam}(s_j-s_k)=\Big\|\sum_{j=1}^n
c_jQ_{\lam,s_j}\Big\|^2_{\cH(Q_{\lam})},
\end{multline*}
however the subspace
$\Big\{\sum_{j=1}^n c_j 
Q_{\lam,s_j}:
n\in\NN,\,
c_1,\ldots,c_n\in\RR,\,
s_1,\ldots,s_n\in\cS\Big\}$
is dense in the Hilbert space 
$\cH(Q_{\lam})$ and hence we get that
$\|\cT_{\lam}^*(\psi)\|_{L_2(\cS^\prime,
\PP_{\lam})}=\|\psi\|_{\cH(Q_{\lam})}$
for every
$\psi\in\cH(Q_{\lam})$. That is of course equivalent to say that 
$\cT_{\lam}\cT_{\lam}^*=\mathcal{I}_{\cH(Q_{\lam})}$.
\\\\3.
Suppose that
$\psi\in\cH(Q_{\lam})$
is orthogonal to the image of 
$\cT_{\lam}$ on 
$L_2(\cS^\prime,\PP_{\lam})$. 
Then
$0=\big\langle 
\psi,
\cT_{\lam}(\exp\{-iX_s\})
\big\rangle_{\cH(Q_{\lam})}
=\langle \psi, Q_{\lam,s}
\rangle_{\cH(Q_{\lam})}=\psi(s)$
for every
$s\in\cS$,  
i.e., $\psi=0$, which means that  the image of 
$\cT_{\lam}$
on
$L_2(\cS^\prime,\PP_{\lam})$ 
is equal to 
$\cH(Q_{\lam})$.

Finally, let 
$F\in\ker(\cT_{\lam})$, 
then 
$\langle F, \exp\{i X_s\}
\rangle_{L_2(\cS^\prime,\PP_{\lam})}
=(\cT_{\lam}(F))(s)
=0$ 
for any
$s\in\cS$,
i.e., $F$
is orthogonal to the subspace
$\{\exp\{iX_s\}:s\in\cS\}$
of
$L_2(\cS^\prime,\PP_{\lam})$. 
However, it follows from Lemma
\ref{lem:3May20a}, 
that this subspace is dense in $L_2(\cS^\prime,\PP_{\lam})$,
thus 
$F=0$. 
\end{proof}
\begin{remark}
\label{rem:24Sep20c}
Some of the results in this subsection, 
such as the formula for the Fourier transform and its being isometrically 
isomorphism, can be obtained from 
\cite[Section 3.4]{AJL17} 
as well.
\end{remark}
\begin{remark}
This kind of infinite dimensional Fourier transform is used in applications to statistics as well, see e.g.
\cite{EPS09,SS13}
where the studied transform is the Esscher transform.
\end{remark}
We built two isometric isomorphisms 
$$W_{\lam}:\Gamma_{sym}(\cH_{\lam})\rightarrow
L_2(\cS^\prime,\PP_{\lam})
\text{ and }
\cT_{\lam}: L_2(\cS^\prime,\PP_{\lam})\rightarrow
\cH(Q_{\lam}),$$ 
both involve with the Gaussian measure $\PP_{\lam}$. 
However, it is only natural to take the composition of these two isometric
isomorphisms, that is
\begin{align}
\label{eq:23May20f}
R_{\lam}:\Gamma_{sym}(\cH_{\lam})
\rightarrow\cH(Q_{\lam}),\,
R_{\lam}:=\cT_{\lam}\circ W_{\lam}
\end{align}
which is an isometric isomorphism by itself. 
\begin{remark}
Both the symmetric Fock space 
$\Gamma_{sym}(\cH_{\lam})$
and the RKHS 
$\cH(Q_{\lam})$
are obtained directly from the CND function $\cN:\cS\rightarrow\RR$ that
we start from, independently of the measure $\PP_{\lam}$. 
Hence this mapping $R_{\lam}$ described below appears more naturally in the context of 
\cite{J68}.
\end{remark}
\begin{cor}
The spaces
$\Gamma_{sym}(\cH_{\lam})$ and 
$\cH(Q_{\lam})$ 
are isometrically isomorphic, via the 
explicit isometric 
isomorphism $R_{\lam}$ between the two, that 
is given in 
(\ref{eq:24Sep20e}). 
\end{cor}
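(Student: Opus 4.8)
The plan is to realize the claimed map as the composition $R_\lam := \cT_\lam\circ W_\lam$ of the two isometric isomorphisms already produced: $W_\lam:\Gamma_{sym}(\cH_\lam)\to L_2(\cS^\prime,\PP_\lam)$ from Theorem~\ref{thm:22Apr19a} and $\cT_\lam:L_2(\cS^\prime,\PP_\lam)\to\cH(Q_\lam)$ from Theorem~\ref{thm:24Oct19a}. Since a composition of isometric isomorphisms between Hilbert spaces is again an isometric isomorphism, $R_\lam$ is one as well; this already settles the qualitative assertion of the corollary. The only remaining task is to record the explicit formula~(\ref{eq:24Sep20e}) for $R_\lam$, and for this it suffices --- because $\Gamma_{sym}(\cH_\lam)$ is generated by $\{\varepsilon(\varphi_s):s\in\cS\}$ (see the proof of Theorem~\ref{thm:22Apr19a}) --- to evaluate $R_\lam$ on these generators and then extend by linearity and continuity.

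Concretely, I would first substitute the closed form $W_\lam(\varepsilon(\varphi_s))=\exp\{X_s/\sqrt2\}$ from~(\ref{eq:trans1}), and then apply the definition~(\ref{eq:19Oct19d}) of $\cT_\lam$: for $\wt s\in\cS$,
\begin{equation*}
\big(R_\lam(\varepsilon(\varphi_s))\big)(\wt s)
= \EE_{\PP_\lam}\big[\exp\{X_{s/\sqrt2}\}\exp\{iX_{\wt s}\}\big]
= \EE_{\PP_\lam}\big[\exp\{X_{s/\sqrt2}+iX_{\wt s}\}\big].
\end{equation*}
Since $\{X_u\}_{u\in\cS}$ is a centered Gaussian process with respect to $\PP_\lam$ (Proposition~\ref{prop:16Oct19a}), the variable $X_{s/\sqrt2}+iX_{\wt s}$ is a (complex) centered Gaussian, so the exponential-moment formula $\EE_{\PP_\lam}[\exp\{Y\}]=\exp\{\tfrac12\,\mathrm{Var}(Y)\}$ applies; combining it with $\mathrm{Var}(X_u)=\lam^2\cN(u)$, $\mathrm{Cov}(X_u,X_v)=\Gamma_\lam(u,v)$, the scaling~(\ref{eq:16Oct19b}) and formula~(\ref{eq:23Oct19f}) yields the closed form
\begin{equation*}
\big(R_\lam(\varepsilon(\varphi_s))\big)(\wt s)
= \exp\Big\{\tfrac{\lam^2}{4}\cN(s)+i\,\Gamma_\lam\!\big(\tfrac{s}{\sqrt2},\wt s\big)-\tfrac{\lam^2}{2}\cN(\wt s)\Big\},
\end{equation*}
which is~(\ref{eq:24Sep20e}) (after expanding $\Gamma_\lam$ via~(\ref{eq:23Oct19f}) if one prefers it stated purely in terms of $\cN$).

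The steps that deserve a word of justification are: (a) the passage from real to complex argument in $\EE_{\PP_\lam}[\exp\{zX_u\}]=\exp\{\tfrac12 z^2\lam^2\cN(u)\}$, which follows by analytic continuation in $z$ using the moment estimates~(\ref{eq:22Apr19g}) (equivalently, by dominated convergence applied to the partial sums of the exponential series, which converge in $L_2(\cS^\prime,\PP_\lam)$); and (b) the verification that a map prescribed on the spanning set $\{\varepsilon(\varphi_s)\}$ genuinely extends to all of $\Gamma_{sym}(\cH_\lam)$ --- but this is automatic here, since $R_\lam=\cT_\lam\circ W_\lam$ is already defined and bounded on the whole space, so~(\ref{eq:24Sep20e}) is merely its restriction to generators. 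I do not expect a real obstacle: the corollary is a formal consequence of Theorems~\ref{thm:22Apr19a} and~\ref{thm:24Oct19a}, and the only mildly delicate point is the complex-argument Gaussian integral in~(a).
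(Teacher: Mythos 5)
Your proposal is correct and follows essentially the same route as the paper: define $R_{\lam}=\cT_{\lam}\circ W_{\lam}$, note it is an isometric isomorphism as a composition of two, and compute it on the generators $\varepsilon(\varphi_s)$ using $W_{\lam}(\varepsilon(\varphi_s))=\exp\{X_s/\sqrt{2}\}$ and the definition of $\cT_{\lam}$. Your explicit Gaussian-moment evaluation is exactly the paper's formula $R_{\lam}(\varepsilon(\varphi_s))=Q_{\lam,is/\sqrt{2}}$ once $\cQ_{\lam}$ is read through the bilinear (complexified) extension of $\cN$, the paper handling the complex argument via $X_{s+is^{\prime}}:=X_s+iX_{s^{\prime}}$ where you instead invoke analytic continuation of the exponential moment.
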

In the proof we get an explicit formula for the mapping, while using the complexification of the Schwartz class $\cS$ --- 
which follows the same ideas as in
\cite[Chapter 6.2]{H80} and is possible due to 
(\ref{eq:22Jul20a}) ---
and in particular the relation $X_{s+is^\prime}:=X_s+iX_{s^\prime}$ 
for every 
$s,s^\prime\in\cS$.
\begin{proof}
Clearly, 
$R_{\lam}:=\cT_{\lam}\circ W_{\lam}$
is an isometric isomorphism, as both $W_{\lam}$
and
$\cT_{\lam}$
are. Moreover, we have the following explicit way to write its formula.
For every $s\in\cS$, the function 
$$R_{\lam}(\varepsilon(\varphi_s))
=\cT_{\lam}\big(W_{\lam}(\varepsilon(\varphi_s)\big)
\in
\cH(\cQ_{\lam})$$ 
satisfies 
\begin{align*}
\big(R_{\lam}(\varepsilon
(\varphi_s))\big)(s^\prime)
=\Big(\cT_{\lam}
\big(\exp\big\{
X_s/\sqrt{2}\big\}\big)
\Big)(s^\prime)
=\int_{\cS^\prime}
\exp\{
X_s/\sqrt{2}\}\exp\{
iX_{s^\prime}\}d\PP_{\lam}
\\=\int_{\cS^\prime}
\exp\{iX_{s^\prime-is/\sqrt{2}}\}
d\PP_{\lam}
=\cQ_{\lam}(s^\prime-is/\sqrt{2})
=Q_{\lam,is/\sqrt{2}}(s^\prime),
\end{align*} 
for every 
$s^\prime\in\cS$. 
Thus we have the explicit formula (on basis elements), which is
\begin{align}
\label{eq:24Sep20e}
R_{\lam}(\varepsilon(\varphi_s))=Q_{\lam,is/\sqrt{2}}.
\end{align} 
\end{proof}
\subsection{The canonical commutation relations}
\label{subsec:CCR}
Let 
$\cL$ be a Hilbert space.
The algebra
$CCR(\cL)$ 
is generated axiomatically by
a system
$\{\fa(\ell),\fa^*(\ell)\}_{\ell\in\cL}$, whereas $\fa(\ell),\fa^*(k)$ are
operators (on an Hilbert space),
subject to
\begin{align}
\label{eq:8May20a}
[\fa(\ell),\fa(k)]
=0,\quad
\forall \ell, k\in \cL
\end{align}
and
\begin{align}
\label{eq:8May20b}
[\fa(\ell),\fa^*(k)]=
\langle\ell,k\rangle_{\cL}.
\end{align}
A system
$\{\fa(\ell),\fa^*(\ell)\}_{\ell\in\cL}$ 
which satisfies 
(\ref{eq:8May20a})
and
(\ref{eq:8May20b}) 
is called a \textit{representations of the CCR algebra} $CCR(\cL)$. 
Some representations, such as 
the Fock representations, might be realized in  the symmetric Fock space
$\Gamma_{sym}(\cL)$; 
e.g., see Example
\ref{ex:23Sep20a}.
\begin{definition}
Let 
$\{\fa(\ell),\fa^*(\ell)\}_{\ell\in\cL}$
and
$\{\fb(\ell),\fb^*(\ell)\}_{\ell\in\cL}$
be two representations of 
$CCR(\cL)$, w.r.t the Hilbert spaces $\cL_1,\cL_2$ (i.e., $\fa(\ell):\cL_1\rightarrow\cL_1$
and 
$\fb(\ell):\cL_2\rightarrow\cL_2$
for every
$\ell\in\cL$). We say that the representations are 
\textbf{unitarily equivalent} if there exists a unitary operator 
$U:\cL_1\rightarrow\cL_2$ such that 
$U\fa(\ell)U^{-1}=\fb(\ell)$ 
for every 
$\ell\in\cL$.
\end{definition}
Segal's work
(\cite{S56a,S56b,S81}) 
was motivated by quantum mechanics, while
Bargmann motivation for the case of finite number of degrees of freedom was
in complex analysis (of multivariable complex functions). The result by Stone
and von Neumann (see 
\cite{St30,St32,vN31,vN32})treats this case of finite number of degrees of freedom,
that is the uniqueness --- up to unitarily equivalence --- of families of unitary
operators which are irreducible and satisfy the Weyl commutation relations.
\begin{theorem}[Stone non-Neumann]
\label{thm:StvN}
If $\cL$ is a finite dimensional Hilbert space, then any two representations
of the CCR algebra 
$CCR(\cL)$ 
are unitarily equivalent.
\end{theorem}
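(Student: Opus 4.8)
The plan is to reduce to the bounded, exponentiated (Weyl) form of \eqref{eq:8May20a}--\eqref{eq:8May20b}, to show that every regular representation is an orthogonal sum of copies of the Fock representation on $\Gamma_{sym}(\cL)$, and to deduce that any two irreducible representations are unitarily equivalent. Since the relations force $\fa(\ell),\fa^{*}(\ell)$ to be unbounded, the first step is the delicate one: identifying $\cL\cong\CC^{n}$ and setting $W(z):=\exp\bigl(\fa^{*}(z)-\fa(z)\bigr)$ for $z\in\CC^{n}$, one checks, via Baker--Campbell--Hausdorff and \eqref{eq:8May20b}, that the $W(z)$ are unitary and satisfy the Weyl relations
\[
W(z)\,W(w)=e^{-i\operatorname{Im}\langle z,w\rangle_{\cL}}\,W(z+w),\qquad W(z)^{*}=W(-z).
\]
This step needs the standard regularity hypothesis that $t\mapsto W(tz)$ be strongly continuous for each $z$, so Theorem~\ref{thm:StvN} must be read in that sense --- equivalently, as uniqueness up to unitary equivalence of the strongly continuous irreducible unitary representation of the Heisenberg group $\CC^{n}\times\RR$ with central character $1$; without such a hypothesis there exist inequivalent ``representations'' of the algebraic CCR. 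After this reduction the conclusion is already immediate from the isomorphism of the twisted group $C^{*}$-algebra of $\CC^{n}$ (with the symplectic cocycle) onto the compact operators on $\Gamma_{sym}(\cL)$, which has a unique irreducible representation up to equivalence; but I prefer to sketch the direct argument below.

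The key object is the Gaussian-averaged operator
\[
P:=\int_{\CC^{n}}W(z)\,g(z)\,dz
\]
(a weak integral; $g$ a suitably normalized Gaussian weight on $\CC^{n}\cong\RR^{2n}$ --- here finite dimensionality is essential, as there is no such integration over $\cL$ when $\dim\cL=\infty$). Using the Weyl cocycle and an elementary Gaussian computation one verifies $P=P^{*}=P^{2}$, so $P$ is an orthogonal projection, and $P\,W(z)\,P=e^{-\|z\|_{\cL}^{2}/2}\,P$ --- the same relation the vacuum projection $|\varepsilon(0)\rangle\langle\varepsilon(0)|$ satisfies in the Fock representation. One shows $P\neq 0$ by a Fourier argument on $\CC^{n}$: if $Pv=0$ for a unit vector $v$, then the Gaussian-windowed transforms $a\mapsto\int\langle W(a)v,W(z)v\rangle g(z)\,dz$ all vanish, which forces the matrix coefficient $z\mapsto\langle v,W(z)v\rangle$ to vanish identically, contradicting its value $\|v\|^{2}$ at the origin. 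So one may fix a unit vector $\Omega\in\operatorname{ran}P$, a ``vacuum''.

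On the $W$-invariant cyclic subspace $\mathcal V_{\Omega}:=\overline{\operatorname{span}}\{W(z)\Omega:z\in\CC^{n}\}$ the inner products $\langle W(z)\Omega,W(w)\Omega\rangle=\langle\Omega,W(-z)W(w)\Omega\rangle$ are, by the Weyl relations together with $P\,W(\cdot)\,P=e^{-\|\cdot\|_{\cL}^{2}/2}\,P$, equal to the fixed kernel $\exp\{\langle z,w\rangle_{\cL}-\tfrac12\|z\|_{\cL}^{2}-\tfrac12\|w\|_{\cL}^{2}\}$, which is independent of the representation. Since in the Fock representation on $\Gamma_{sym}(\cL)$ the coherent vectors satisfy $\langle\varepsilon(h_{1}),\varepsilon(h_{2})\rangle=\exp\{\langle h_{1},h_{2}\rangle_{\cL}\}$ (cf.\ \eqref{eq:14Oct19a1}) and $W(z)\varepsilon(0)=e^{-\|z\|_{\cL}^{2}/2}\varepsilon(z)$, the assignment $W(z)\Omega\mapsto e^{-\|z\|_{\cL}^{2}/2}\varepsilon(z)$ extends, by equality of kernels, to a unitary intertwiner $\mathcal V_{\Omega}\to\Gamma_{sym}(\cL)$. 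Choosing a maximal family $\{\Omega_{i}\}_{i\in I}$ of unit vectors in $\operatorname{ran}P$ with pairwise orthogonal cyclic subspaces, each $\mathcal V_{\Omega_{i}}$ is $W$-invariant and intertwined with $\Gamma_{sym}(\cL)$, and their orthogonal sum exhausts the representation space (because $\operatorname{ran}P$ is the closed span of the $P\mathcal V_{\Omega_{i}}$ and, by the nonvanishing above, no nonzero invariant subspace is disjoint from $\operatorname{ran}P$); hence the representation is unitarily the Fock representation amplified by a multiplicity space $\fm$ with $\dim\fm=|I|$. Two \emph{irreducible} representations then have $\dim\fm=1$ on each side, so each is unitarily equivalent to the Fock representation on $\Gamma_{sym}(\cL)$, hence to one another; in general two such representations are unitarily equivalent iff their multiplicity spaces have equal dimension, which is the precise reading of Theorem~\ref{thm:StvN} (only the irreducible case enters elsewhere in the paper). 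The main obstacle is everything in the first two paragraphs: the passage to the bounded Weyl unitaries, where the regularity/integrability hypothesis is indispensable and $W(z)$ only makes literal sense afterwards, together with the facts $P=P^{2}=P^{*}$ and $P\neq 0$ --- the one genuine computation (a Gaussian integral fused with the Weyl cocycle) and the one genuine use of finite dimensionality. After that, everything is bookkeeping with coherent states.
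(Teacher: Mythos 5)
The paper does not actually prove Theorem \ref{thm:StvN}: it is quoted as a classical result, with the original sources \cite{St30,St32,vN31,vN32} and Hida's survey \cite{H80} cited, and only the statement is used later (as a foil for Theorem \ref{thm:2Oct20a} and Corollary \ref{cor:23Sep20a}). So your sketch is not competing with an argument in the text; on its own terms it is the standard von Neumann uniqueness argument (passage to the Weyl unitaries, Gaussian-averaged projection, decomposition into copies of the Fock representation), and your two framing remarks are exactly right: as literally stated the theorem requires a regularity hypothesis (strong continuity of $t\mapsto W(tz)$), and it requires irreducibility or at least equal multiplicity, since a direct sum of two Fock representations is a representation of $CCR(\cL)$ not equivalent to a single copy. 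Reading the statement as uniqueness, up to unitary equivalence, of the regular irreducible representation is the correct interpretation and the only case the paper needs.

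One step, as written, is wrong and needs repair: the nonvanishing of $P$. You argue that if $Pv=0$ for a single unit vector $v$ then the matrix coefficient $z\mapsto\langle v,W(z)v\rangle$ vanishes identically. That implication cannot hold: in the Fock representation itself $P$ is the rank-one vacuum projection, so $Pv=0$ for plenty of unit vectors whose matrix coefficients certainly do not vanish at $0$. The correct argument assumes $P=0$ as an operator and derives a contradiction: using the Weyl cocycle, $W(a)PW(b)$ is again a weak integral $\int f_{a,b}(z)W(z)\,dz$ with $f_{a,b}$ a Gaussian times a character, and the linear span of such modulated Gaussians is dense in $L^1(\RR^{2n})$; hence $P=0$ forces $\int f(z)\langle u,W(z)v\rangle\,dz=0$ for every $f\in L^1$ and all $u,v$, so the bounded continuous function $z\mapsto\langle u,W(z)v\rangle$ vanishes identically, and evaluating at $z=0$ gives $\langle u,v\rangle=0$ for all $u,v$, a contradiction. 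The same compression argument, applied to the restriction of the Weyl system to an arbitrary nonzero closed invariant subspace, is what justifies your exhaustion step at the end. With those repairs the coherent-state bookkeeping is fine, and the argument is essentially the one in \cite{vN31}, with \cite{H80} as the survey reference the paper itself points to.
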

In his book
\cite{H80}, Hida gives a nice survey proof for this fact and an explantation
to what happens when going to infinite number of degrees of freedom. 
The question of how bad can it be when one goes to an infinite dimensional
Hilbert space arises. It turns out that in addition to the case of equivalent
representations, there is one (and only one) more option and that is that
the representations are disjoint. 
\begin{definition}
Two representations of the CCR algebra are 
said to be \textbf{disjoint}, 
if the only operator that intertwines one with the other is zero. 
In other words, two representations of 
$CCR(\cL)$, say
$\{\fa(\ell),\fa^*(\ell)\}_{\ell\in\cL}$
and
$\{\fb(\ell),\fb^*(\ell)\}_{\ell\in\cL}$
are disjoint if for every unitary operator 
$U:\cL_1\rightarrow\cL_2$ such that 
$U\fa(\ell)=\fb(\ell)U$
for every $\ell\in\cL$, we must have $U=0$.
\end{definition}
\begin{theorem}
\label{thm:2Oct20a}
If
$\cL$ is infinite dimensional, then any two irreducible representations of
the CCR algebra
$CCR(\cL)$ are either equivalent or disjoint.
\end{theorem}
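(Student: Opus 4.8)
The plan is to prove the dimension-free statement that \emph{any} two irreducible representations of a $*$-algebra are either unitarily equivalent or disjoint; the hypothesis that $\cL$ be infinite dimensional plays no role in the argument itself, and is present only to contrast with Theorem \ref{thm:StvN}, since it is exactly when $\cL$ is infinite dimensional that the ``disjoint'' alternative is actually realized (cf.\ Corollary \ref{cor:23Sep20a}). The engine of the proof is Schur's lemma together with the polar decomposition of an intertwining operator.

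First I would fix two irreducible representations $\{\fa(\ell),\fa^*(\ell)\}_{\ell\in\cL}$ on $\cL_1$ and $\{\fb(\ell),\fb^*(\ell)\}_{\ell\in\cL}$ on $\cL_2$, and assume they are \emph{not} disjoint, i.e.\ there is a bounded operator $T\colon\cL_1\to\cL_2$ with $T\neq0$ that intertwines the representations, so that $T\fa(\ell)=\fb(\ell)T$ and $T\fa^*(\ell)=\fb^*(\ell)T$ for all $\ell\in\cL$ (and hence $T\pi_1(x)=\pi_2(x)T$ for every $x\in CCR(\cL)$), these relations holding on the common invariant domain of the representation. The goal is to upgrade $T$ to a unitary. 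Taking adjoints of both intertwining relations shows that $T^*$ intertwines the two representations in the reverse direction, $T^*\fb(\ell)=\fa(\ell)T^*$ and $T^*\fb^*(\ell)=\fa^*(\ell)T^*$. Combining these, $T^*T$ commutes with every $\fa(\ell)$ and every $\fa^*(\ell)$, hence with the von Neumann algebra they generate on $\cL_1$; irreducibility of the first representation forces that commutant to be $\CC I$, so $T^*T=cI$ for a scalar $c$, with $c>0$ since $T\neq0$ and $T^*T\geq0$. By the same reasoning applied to $TT^*$ on $\cL_2$ one gets $TT^*=c'I$ with $c'>0$, and $c=c'=\|T\|^2$.

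Next I would set $U:=c^{-1/2}T$, so that $U^*U=I_{\cL_1}$ and $UU^*=I_{\cL_2}$; thus $U$ is an isometry with full range, i.e.\ a unitary $\cL_1\to\cL_2$, and it still intertwines, $U\fa(\ell)U^{-1}=\fb(\ell)$ for all $\ell\in\cL$ (alternatively: $U^*U=I$ makes $U$ an isometry onto a nonzero closed subspace of $\cL_2$ invariant under the second representation, which by irreducibility must be all of $\cL_2$). Hence the two representations are unitarily equivalent. Combined with the definition of disjointness, this is exactly the claimed alternative: a given pair of irreducible representations of $CCR(\cL)$ either admits no nonzero bounded intertwiner, and is then disjoint, or admits one, and is then unitarily equivalent by the preceding step.

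The point requiring genuine care, and the one I expect to be the main obstacle, is that the $\fa(\ell),\fa^*(\ell)$ are unbounded operators, so the manipulations ``take adjoints'' and ``$T^*T$ lies in the commutant'' must be carried out at the level of the generated von Neumann algebras rather than the bare (unbounded) operators. The clean route is to pass to the bounded Weyl unitaries: for each representation form the self-adjoint field operators $\Phi(\ell)=\overline{(\fa(\ell)+\fa^*(\ell))/\sqrt2}$ and the unitaries $W(\ell)=\exp\{i\Phi(\ell)\}$, observe that a bounded operator intertwines the representation of $CCR(\cL)$ if and only if it intertwines the corresponding family $\{W(\ell)\}_{\ell\in\cL}$ of unitaries, and then run the Schur-lemma and polar-decomposition argument above for these unitary families, where all the commutation bookkeeping is elementary. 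With that bridge in place the remaining steps are routine.
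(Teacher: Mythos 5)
The paper does not actually prove Theorem \ref{thm:2Oct20a}: it is recorded as a known structural fact, with the surrounding discussion pointing to Segal \cite{S56a,S56b}, Hida \cite{H80} and Powers \cite{P74}, and it is then invoked in Step 3 of the proof of Corollary \ref{cor:23Sep20a}. So your argument supplies a proof where the paper supplies a citation, and it is the standard one: given a nonzero bounded intertwiner $T$, Schur's lemma applied to $T^*T$ and $TT^*$ (which lie in the commutants of the two irreducible representations) gives $T^*T=cI$, $TT^*=cI$ with $c=\|T\|^2>0$, and $U=c^{-1/2}T$ is a unitary intertwiner; hence ``not disjoint'' implies ``unitarily equivalent''. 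Your remark that infinite dimensionality is never used, and only matters because Theorem \ref{thm:StvN} makes the disjoint alternative vacuous in finite dimensions, is also the correct reading of why the hypothesis appears in the statement.

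The one step carrying real weight is the bridge you defer to the end: that a bounded operator intertwines $\{\fa(\ell),\fa^*(\ell)\}$ if and only if it intertwines the Weyl unitaries $W(\ell)=\exp\{i\Phi(\ell)\}$. This is not automatic for a bare algebraic representation of the relations (\ref{eq:8May20a})--(\ref{eq:8May20b}): one needs the field operators $\Phi(\ell)$ to be essentially self-adjoint on the common invariant domain (Nelson's classical example gives operators satisfying the CCR on a dense invariant domain whose exponentials violate the Weyl relations), and then the passage from the relation $T\Phi(\ell)\subseteq\Phi(\ell)T$ to $TW(\ell)=W(\ell)T$ goes through resolvents and the spectral theorem. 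Since the paper imposes no domain hypotheses on its representations, and in the only place the theorem is used it deals with (scaled) Fock representations where essential self-adjointness is classical, your proof is adequate at the paper's level of rigor; for a self-contained statement you should add regularity (essential self-adjointness of the $\Phi(\ell)$, or take the Weyl form itself as the definition of a representation) as an explicit hypothesis.
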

\begin{remark}
The Stone von-Neumann theorem 
(cf. Theorem \ref{thm:StvN}) 
and Theorem 
\ref{thm:2Oct20a} 
illustrate one difference between the finite and infinite dimensional cases.
So is the result in
Theorem
\ref{thm:28Aug18d}, regarding the mutual singularity of the measures, which is true only in the infinite dimensional case.
Another difference is that the group of unitary operators 
on the Hilbert space 
$\cL$ induces a group of measure preserving transformations on $L_2$ space of a Gaussian, and that action is ergodic if and only if 
$\cL$
is of infinite dimension. This explained nicely in 
\cite{SS68,SS70} and goes back to ideas of Segal in the papers
\cite{S56a,S56b}.
\end{remark}
There are many ways to construct unitarily inequivalent representations of
the CCR (w.r.t an infinite dimensional Hilbert space). 
One example of an explicit way of finding infinitely many unitarily 
inequivalent representations is given in
\cite{S56a,S56b} 
and presented (shortly) here: 
\begin{example}
\label{ex:23Sep20a}
On the symmetric Fock space 
$\Gamma_{sym}(\cH)$
we define the operators
\begin{align*}
A_j:\Gamma_{sym}(\cH)
\rightarrow \Gamma_{sym}(\cH),\,
A_j(E_{\alpha})=
\sqrt{\alpha_j}E_{\alpha-1_j},\,
\forall \alpha\in\ell_0^{\NN}
\end{align*}
and
\begin{align*}
A_j^*:\Gamma_{sym}(\cH)
\rightarrow 
\Gamma_{sym}(\cH),\,
A_j^*(E_{\alpha})
=\sqrt{\alpha_j+1}E_{\alpha+1_j},\,
\forall \alpha\in\ell_0^{\NN}
\end{align*}
where the basis elements 
$\{E_{\alpha}\}_{\alpha\in\ell_0^{\NN}}$
are defined in
(\ref{eq:5Dec20c}).
\iffalse
One can easily check that 
$A_j^*$ 
has the form
$A_j^*\Big(\sum_{k=0}^\infty 
\nu^{(k)}\Big)
=\sum_{k=0}^\infty 
\sqrt{k+1}S(e_j\otimes \nu^{(k)}),\,
\text{where }\nu^{(k)}
\in\cH^{\otimes k},\,
\forall k\ge0.$
\fi
The family 
$\{A_j,A_j^*\}_{j=1}^{\infty}$
is a representation of the CCR algebra and
for any 
$j\ge1$ 
the operators 
$A_j$
and
$A_j^*$
are adjoints in 
$\Gamma_{sym}(\cH)$.
Using the family 
$\{A_j,A_j^*\}_{j=1}^\infty$,
we can build another family of 
operators depending on a scalar
$\lam\in\RR_+$, 
in the following way- let
$$A_{j,\lam}:\Gamma_{sym}
(\cH)\rightarrow\Gamma_{sym}(\cH),\,
A_{j,\lam}(E_{\alpha})
=\lam\sqrt{\alpha_j}
E_{\alpha-1_j},\,
\forall \alpha\in\ell_0^{\NN}$$
and
$$
A_{j,\lam}^*:\Gamma_{sym}(\cH)
\rightarrow\Gamma_{sym}(\cH),\,
A_{j,\lam}^*(E_{\alpha})
=\frac{\sqrt{\alpha_j+1}}
{\lam}E_{\alpha+1_j},\,
\forall \alpha\in\ell_0^{\NN},$$
which are actually given by 
$A_{j,\lam}=\lam A_j$ 
and 
$A_{j,\lam}^*=\lam^{-1} A_j^*$. 
The family 
$\{A_{j,\lam},A_{j,\lam}^*\}_{j=1}^\infty$
is also a representation of the CCR algebra and in Segal's papers it is shown that this representation is unitarily inequivalent to
the representation $\{A_j,A_j^*\}_{j=1}^\infty$, as long as $\lam\ne1$.
\end{example}
\iffalse
\begin{example}
Segal was stressing out the fact that if $\{P_j,Q_j\}$ is a representation
of the CCR, i.e., if
$[P_j,Q_k]=-i\delta_{j,k}I,
[P_j,P_k]=[Q_j,Q_k]=0$
and the operators are self adjoint acting on an Hilbert space
$\cH$, then 
$\{P_j^\lam,Q_j^{\lam}\}$ is a representation of the 
CCR as well, where
$\lam\in\RR_+,
P_j^\lam=\lam P_j$
and
$Q_j^{\lam}=\lam^{-1}Q_j$.\\
Given such
$\{P_j,Q_j\}$, for every pair 
$(\ul{a},\ul{b})\in\RR^N\times\RR^N$
corresponds an operator
$$W(\ul{a},\ul{b})=e^{i \ul{a}\cdot P}e^{i\ul{b}\cdot Q},$$
where 
$\ul{a}\cdot P=\sum_{k=1}^N 
a_kP_k$
and
$\ul{b}\cdot Q=
\sum_{k=1}^N
b_kQ_k$.
These operators satisfy the relations
$$W(h)W(h^\prime)=e^{i\langle h,h^\prime\rangle}W(h+h^\prime),$$ 
where
$h=(\ul{a},\ul{b})$
and
$h^\prime=(\ul{a}^\prime,\ul{b}^\prime)$
are in
$\RR^N\times \RR^N$.
%%
In
\cite{},
Segal proved that under the simple transformation of multiplication by 
$\lam\in\RR_+$, the obtained new representation of the CCR is unitarily inequivalent
to the original representation, i.e., the two families
$\{W(h)\}_{h\in\RR^N\times\RR^N}$
and
$\{W_\lam(h)\}_{h\in\RR^N\times\RR^N}$ are unitarily inequivalent, where
$W_\lam(h)$ corresponds to the family of operators
$\{P^\lam_j,Q^\lam_j\}$.
\end{example}
\fi
In many works the connections between the equivalence of the symmetric Fock
spaces and the equivalence of the prospective measures are stressed out.
We finish this paper by  presenting a family of unitarily
inequivalent representations of the CCR, 
which follows from the influence of the simple operation of multiplication by $\lam$, on our  white noise space, symmetric Fock space and the corresponding RKHS, which appear in earlier stages of the paper.
\begin{cor}
\label{cor:23Sep20a}
Let
$\pi=\{\fa(h),\fa^*(h)\}_{h\in\cH}$
be a Fock representation of the CCR over $\cH$.
For every $\lam\in\RR_+$ 
define the $\lam-$Fock representation 
\begin{align}
\label{eq:23May20a}
\pi_{\lam}:=\{\fa_{\lam}(h),\fa_{\lam}(h)^*
\}_{h\in\cH_\lam},
\text{ with }
\fa_{\lam}(h)=\lam\fa(h)
\text{ and }\fa_{\lam}^*(h)=\lam^{-1}\fa^*(h).
\end{align}
Then any two representations from the  family 
$\big\{
\pi_{\lam}:\lam\in\RR_+\big\}$
are disjoint, i.e., 
$\pi_{\lam}$
and
$\pi_{\lam^\prime}$
are disjoint (irreducible) representations for every
$\lam,\lam^\prime\in\RR_+$ with 
$\lam\ne\lam^\prime$.
\end{cor}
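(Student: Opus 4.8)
The plan is to push the whole statement, via the isometric isomorphisms $W_{\lam}$ of Theorem \ref{thm:22Apr19a}, into the white noise spaces $L_2(\cS',\PP_{\lam})$, where the two $\lam$-Fock representations turn into the ``Schr\"odinger-type'' pairs $(\fd_{\lam},\fm_{\lam})$ of Malliavin-derivative and multiplication operators, and then to convert disjointness of representations into a statement about operators between the mutually singular probability spaces $(\cS',\PP_{\lam})$. The key inputs are: the intertwining Proposition preceding this subsection (which gives $\fm_{\lam}(h)=W_{\lam}\fa_{\lam}^{*}(h)W_{\lam}^{*}$, multiplication by $X_h$ on $L_2(\cS',\PP_{\lam})$), the density Lemma \ref{lem:3May20a}, and the mutual singularity Theorem \ref{thm:28Aug18d}.

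Concretely, fix $\lam\neq\lam'$ in $\RR_+$ and let $U\colon\Gamma_{sym}(\cH_{\lam})\to\Gamma_{sym}(\cH_{\lam'})$ be any bounded operator intertwining $\pi_{\lam}$ with $\pi_{\lam'}$. First I would transport it: $V:=W_{\lam'}\,U\,W_{\lam}^{*}\colon L_2(\cS',\PP_{\lam})\to L_2(\cS',\PP_{\lam'})$ is bounded and, by the intertwining Proposition, satisfies $V\,\fm_{\lam}(h)=\fm_{\lam'}(h)\,V$ for every $h$; that is, $V$ intertwines multiplication by $X_h$ on $L_2(\cS',\PP_{\lam})$ with multiplication by $X_h$ on $L_2(\cS',\PP_{\lam'})$. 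Since $X_{s}$ is linear in $s$, the bounded unitaries $e^{iX_s}$, $s\in\cS$, form a multiplicative system generating the Borel $\sigma$-algebra $\cB(\cS')$, and (by the spectral theorem applied to the self-adjoint operators $\fm_{\lam}(h)$, $\fm_{\lam'}(h)$) $V$ also intertwines the multiplication operators $M^{(\lam)}_{e^{iX_s}}$ and $M^{(\lam')}_{e^{iX_s}}$. A functional monotone-class argument --- using that $M^{(\lam)}_{\phi_n}\to M^{(\lam)}_{\phi}$ strongly whenever $\phi_n\to\phi$ boundedly pointwise --- then propagates this to $V\,M^{(\lam)}_{\phi}=M^{(\lam')}_{\phi}\,V$ for every bounded Borel $\phi\colon\cS'\to\CC$. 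Now apply Theorem \ref{thm:28Aug18d}: $\PP_{\lam}$ and $\PP_{\lam'}$ are mutually singular, so there is $A\in\cB(\cS')$ with $\PP_{\lam}(A)=1$ and $\PP_{\lam'}(A)=0$. Taking $\phi=\mathbf 1_A$ gives $M^{(\lam)}_{\mathbf 1_A}=\mathrm{Id}$ on $L_2(\cS',\PP_{\lam})$ and $M^{(\lam')}_{\mathbf 1_A}=0$ on $L_2(\cS',\PP_{\lam'})$, whence $V=V\,M^{(\lam)}_{\mathbf 1_A}=M^{(\lam')}_{\mathbf 1_A}\,V=0$, so $U=W_{\lam'}^{*}\,V\,W_{\lam}=0$. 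This is precisely disjointness of $\pi_{\lam}$ and $\pi_{\lam'}$; irreducibility of the Fock representations $\pi_{\lam}$ is standard (alternatively, having exhibited no nonzero --- in particular no unitary --- intertwiner, one may instead invoke Theorem \ref{thm:2Oct20a}).

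The step I expect to be the genuine obstacle is the passage from ``$V$ intertwines the coordinate operators $X_h$'' to ``$V$ intertwines multiplication by every bounded Borel function'': this is exactly where the infinite-dimensional, measure-theoretic content enters --- one needs that the random variables $\{X_s\}_{s\in\cS}$ separate points of $\cS'$ and generate $\cB(\cS')$, so that bounded functions of them exhaust $L^{\infty}$ of each of the two measures (cf.\ also the density in Lemma \ref{lem:3May20a}), together with care in handling the unbounded operators $\fm_{\lam}(h)$ (working throughout with the bounded unitaries $e^{iX_s}$ and their spectral projections sidesteps the domain questions). Everything else --- the conjugation by $W_{\lam}$ and the one-line application of mutual singularity --- is routine.
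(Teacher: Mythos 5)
Your argument is correct, and its skeleton coincides with the paper's: transport the putative intertwiner $U$ through the isometric isomorphisms $W_{\lam}$ of Theorem \ref{thm:22Apr19a} to an operator $V=W_{\lam^\prime}UW_{\lam}^*$ intertwining the multiplication systems $\{\fm_{\lam}(h)\}$ and $\{\fm_{\lam^\prime}(h)\}$, and then collide this with the mutual singularity of $\PP_{\lam}$ and $\PP_{\lam^\prime}$ from Theorem \ref{thm:28Aug18d}. Where you genuinely diverge is in how the collision is produced. The paper, at this point, cites the general principle (Nelson, \cite[Chapter 6]{N67}) that equivalence of the two families of multiplication operators forces the existence of a Radon--Nikodym derivative between the measures, obtains a contradiction with singularity, and then --- since this only rules out unitary equivalence --- invokes irreducibility of each $\pi_{\lam}$ (sketched via the maximal abelian algebra of multiplication operators) together with Theorem \ref{thm:2Oct20a} to upgrade inequivalence to disjointness. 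You instead prove the needed implication from scratch: extend the intertwining from the unitaries $e^{iX_s}$ to multiplication by all bounded Borel functions by a functional monotone-class argument (using that the cylinder $\sigma$-algebra generated by $\{X_s\}_{s\in\cS}$ is $\cB(\cS^\prime)$), and then kill $V$ outright by multiplying with the indicator of a set $A$ with $\PP_{\lam}(A)=1$, $\PP_{\lam^\prime}(A)=0$. This buys two things: the measure-theoretic step is self-contained rather than outsourced, and you obtain directly that \emph{every} bounded intertwiner vanishes, so disjointness follows without passing through Theorem \ref{thm:2Oct20a}. Two caveats: the passage from the unbounded relation $V\fm_{\lam}(h)=\fm_{\lam^\prime}(h)V$ to the bounded unitaries $e^{iX_s}$ needs the spectral-theoretic care you acknowledge (it is immediate when $U$, hence $V$, is unitary, which already suffices for the paper's definition of disjointness); and the corollary also asserts irreducibility of the $\pi_{\lam}$, which you dismiss as standard while the paper's Step 3 at least sketches it --- if you want the full statement you should supply that argument (or run your Steps with $\lam=\lam^\prime$ as the paper does).
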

The idea of the proof is that 
there are two notions of equivalence; 
one is
equivalence of Gaussian measures and another 
is equivalence of irreducible
representations. 
There is a non trivial theorem saying that the two notions of equivalence
are equivalent. 
It is hinted (not explicitly) 
in several places in
\cite{GlJa,P74,N67,H80,JT16,Janson}, 
so we only sketch some of the ideas of 
the proof of Corollary 
\ref{cor:23Sep20a}.
\begin{proof}
Let
$\lam,\lam^\prime\in\RR_+$ 
be such that 
$\lam\ne\lam^\prime$. Suppose that the representations 
$\pi_{\lam}$
and
$\pi_{\lam^\prime}$
are unitarily equivalent, i.e., suppose there exists a unitary mapping
\begin{align}
\label{eq:23May20c}
U:\Gamma_{sym}(\cH_{\lam})
\rightarrow\Gamma_{sym}
(\cH_{\lam^\prime})
\end{align}
such that 
\begin{align}
\label{eq:23May20d}
U\fa_{\lam}(h)=
\fa_{\lam^\prime}(h)U
\text{ and }
U\fa_{\lam}^*(h)
=\fa_{\lam^\prime}^*(h)U,\,
\forall h\in\cH.
\end{align}
\uline{\textbf{Step 1: Obtain a family of representations (of operators on
$L_2(\cS^\prime)$):}}
Recall the mapping 
$$W_{\lam}:\Gamma_{sym}(\cH_{\lam})\rightarrow L_2(\cS^\prime,\PP_{\lam})$$
that is an isometric isomorphism (cf. Theorem
\ref{thm:22Apr19a}), with
\begin{align*}
W_{\lam}\fa_{\lam}(h)
=\fd_{\lam}(h)W_{\lam}
\text{ and }
W_{\lam}\fa_{\lam}^*(h)
=\fm_{\lam}(h)W_{\lam},
\end{align*}
where $\fd_{\lam}$
and
$\fm_{\lam}$
are given in
(\ref{eq:24May20a}).
Define the mapping 
\begin{align}
\label{eq:23May20b}
V=W_{\lam^\prime}UW_{\lam}^*
:L_s(\cS^\prime,
\PP_{\lam})\rightarrow 
L_2(\cS^\prime,\PP_{\lam^\prime}),
\end{align}
then $V$ 
is unitary and satisfies the intertwining relations
\begin{align}
\label{eq:24May20b}
V\fm_{\lam}(h)=\fm_{\lam^\prime}(h)V 
\text{ and }
V\fd_{\lam}(h)=\fd_{\lam^\prime}(h)V,
\,\forall h\in h\in\cH.
\end{align}
The justification for 
(\ref{eq:24May20b})
is because
\begin{multline*}
V\fm_{\lam}(h)
=\big(W_{\lam^\prime}U
W_{\lam}^*\big)
\fm_{\lam}(h)W_{\lam}
W_{\lam}^*
=W_{\lam^\prime}U\big(
W_{\lam}^*\fm_{\lam}(h)W_{\lam}\big)
W_{\lam}^*
=W_{\lam^\prime}
\big(U
\fa^*_{\lam}(h)
\big)W_{\lam}^*
\\=W_{\lam^\prime}
\big(\fa^*_{\lam^\prime}(h)U
\big)W_{\lam}^*
=W_{\lam^\prime}
\big(W_{\lam^\prime}^*
\fm_{\lam^\prime}(h)
W_{\lam^\prime}\big)
UW_{\lam}^*
=\fm_{\lam^\prime}(h)V,
\end{multline*}
and thus by taking the adjoints and then multiplying by $V$ on both sides
we get
\begin{align*}
\fm_{\lam}(h)^*V^*=V^*\fm_{\lam^\prime}^*(h)\Longrightarrow  
V\fm_{\lam}^*(h)=\fm_{\lam^\prime}^*(h)V
\Longrightarrow V\fd_{\lam}(h)=\fd_{\lam^\prime}(h)V.
\end{align*}
Thus instead of looking at the intertwining operator 
$U$ 
and the representations 
$\pi_{\lam}$
and
$\pi_{\lam^\prime}$,
we can consider
the intertwining operator 
$V$ 
and the representations 
$\Pi_{\lam}$ and
$\Pi_{\lam^\prime}$, which are given by
\begin{align}
\label{eq:23May20e}
\Pi_{\lam}=\big\{\fd_\lam(h),
\fm_{\lam}(h)
\big\}_{h\in\cH_{\lam}},
\text{ where }
\fd_{\lam}(h)=W_{\lam}
\fa_{\lam}(h)W_{\lam}^*
\text{ and }\fm_{\lam}(h)
=W_{\lam}\fa^*_{\lam}
(h)W_{\lam}^*
\end{align}
and
\begin{align}
\label{eq:23May20e}
\Pi_{\lam^\prime}=
\big\{\fd_{\lam^\prime}(h),
\fm_{\lam^\prime}(h)
\big\}_{h\in\cH_{\lam^\prime}},
\text{ where }
\fd_{\lam^\prime}(h)=W_{\lam^\prime}
\fa_{\lam^\prime}(h)W_{\lam^\prime}^*
\text{ and }\fm_{\lam^\prime}(h)
=W_{\lam^\prime}\fa^*_{\lam^\prime}
(h)W_{\lam^\prime}^*
\end{align}
notice that 
$\fd_{\lam}(h)$ and
$\fm_{\lam}(h)$ 
are operators on the space
$L_2(\cS^\prime,\PP_{\lam})$.
\\\\\uline{\textbf{Step 2:}}
We obtained that the multiplication operators 
$\{\fm_{\lam}(h)\}_{h\in\cH_{\lam}}$
on the space
$L_2(\cS^\prime,\PP_{\lam})$
are equivalent to the multiplication operators
$\{\fm_{\lam^\prime}(h)\}_{h\in\cH_{\lam^\prime}}$
on the space 
$L_2(\cS^\prime,\PP_{\lam^\prime})$,
in the sense of the existence of an intertwining operator between these families.
This implies that the measures $\PP_{\lam}$
and
$\PP_{\lam^\prime}$
must have a Radon--Nikodym derivative, see for example 
\cite[Chapter 6]{N67}.
However the two measures 
$\PP_{\lam}$
and
$\PP_{\lam^\prime}$ on 
$\cS^\prime$
are singular, as $\lam\ne\lam^\prime$ 
(cf. Theorem \ref{thm:28Aug18d}),
therefore we obtained a contradiction, as needed.
\\\\\uline{\textbf{Step 3:}}
From steps $1$ and $2$ it follows that  the representations 
$\pi_{\lam}$
and
$\pi_{\lam^\prime}$
are unitarily inequivalent whenever 
$\lam\ne\lam^\prime$. To conclude that the two representations are disjoint, we will show they are both irreducible and then use Theorem 
\ref{thm:2Oct20a}. 
The proof of the irreducibility of the representation
$\pi_{\lam}$ can be obtained by repeating steps $1$ and $2$,  for the case where $\lam=\lam^\prime$ and eventually use the fact that the dual pair combined system $\{\fm_{\lam}(h),\fd_{\lam}(h)\}_{h\in\cH_{\lam}}$
is irreducible and so any operator which commutes with the combined system of multiplication operators and their duals must be  a scalar operator 
(the system
$\{\fm_{\lam}(h)\}_{h\in\cH_{\lam}}$ 
is a maximal abelian algebra of multiplication operators, hence the only operator which commutes with those multiplication operators must be a multiplication operator by itself, but then as it also commutes with the system of derivatives
$\{\fd_{\lam}(h)\}_{h\in\cH_{\lam}}$, it must be a scalar operator).
 \iffalse
 it follows from ideas taken from
\cite{Hida}, 
\cite[Chapter 6]{Nelson},
\cite[Example..]{JoTi}, and
the GNS construction in 
\cite{Powers}, 
that the only intertwining operator between these two multiplication operators
is the zero operator, which is a contradiction to $V$ being a unitary operator.

It relies on the idea of identifying representations of the CCR algebras
by positive linear functionals, see
\cite{Powers} for example.
\fi
\end{proof}
\textbf{Acknowledgment.} 
Daniel Alpay thanks the Foster G. and Mary McGaw Professorship in
Mathematical Sciences, which supported this research. 
Palle Jorgensen and Motke Porat thank Chapman University for their hospitality for many fruitful visits over the last $3$ 
years. 
Motke Porat thanks the University of Iowa for a one week visit, which was very helpful for the creation of this paper. 
Lastly, the research of the third named author was partially supported by the US-Israel Binational Science Foundation (BSF)
Grant No. 2010432, 
Deutsche Forschungsgemeinschaft (DFG) 
Grant No. SCHW 1723/1-1, 
and Israel Science
Foundation (ISF) Grant No. 2123/17.

%-Bibliography

\end{document}